\newcommand{\minrk}{\mathrm{minrk}}
\newcommand{\FF}{\mathbb F}
\newcommand{\Fq}{{\mathbb F}_q}
\def\C{\mathbf C}
\def\cB{\mathcal B}
\def\cC{\mathcal C}
\def\cD{\mathcal D}
\def\cE{\mathcal E}
\def\cG{\mathcal G}
\def\cI{\mathcal I}
\def\cH{\mathcal H}
\def\cL{\mathcal L}
\def\cM{\mathcal M}
\def\cO{\mathcal O}
\def\cP{\mathcal P}
\def\cR{\mathcal R}
\def\cS{\mathcal S}
\def\cV{\mathcal V}
\def\cX{\mathcal X}
\def\deg{\mbox{\rm deg}}
\def\min{{\rm min}}
\def\dim{\mbox{\rm dim}}
\def\supp{\mbox{\rm Supp}}
\newcommand{\ga}{\alpha}
\newcommand{\gb}{\beta}
\newcommand{\gl}{\lambda}
\newcommand{\gk}{\kappa}
\newcommand{\gt}{\tau}
\def\supp{{\rm Supp}}
\newcommand{\bc}{{\bf c}}
\newcommand{\bv}{{\bf v}}
\newcommand{\rk}{{\rm{rank}}}
\newcommand{\fq}{{\mathbb F}_q}
\newcommand{\ba}{{\bf a}}
\newcommand{\bb}{{\bf b}}
\newtheorem{theorem}{Theorem}[section]
\newtheorem{proposition}[theorem]{Proposition}
\newtheorem{lemma}[theorem]{Lemma}
\newtheorem{corollary}[theorem]{Corollary}
\newtheorem{remark}[theorem]{Remark}
\theoremstyle{definition}
\newtheorem{definition}[theorem]{Definition}
\newtheorem{example}[theorem]{Example}
\newtheorem*{proposition*}{Proposition}
\newtheorem*{corollary*}{Corollary}
\newtheorem*{lemma*}{Lemma}
\def\INSe#1{{\color{black}#1}}
\def\INSr#1{{\color{black}#1}}
\begin{document}
%
% paper title
% can use linebreaks \\ within to get better formatting as desired
% Do not put math or special symbols in the title.
\title{Bounding the optimal rate of the ICSI and ICCSI problem}
%
%
% author names and IEEE memberships
% note positions of commas and nonbreaking spaces ( ~ ) LaTeX will not break
% a structure at a ~ so this keeps an author's name from being broken across
% two lines.
% use \thanks{} to gain access to the first footnote area
% a separate \thanks must be used for each paragraph as LaTeX2e's \thanks
% was not built to handle multiple paragraphs
%

\author{Eimear~Byrne,
        and~Marco~Calderini. 
\thanks{School of Mathematical Sciences, 
University College Dublin, Ireland.}
\thanks{e-mail: ebyrne@ucd.ie}% <-this % stops a space
\thanks{Department of Mathematics,
 University of Trento, Italy.}
\thanks{email: marco.calderini@unitn.it}
\thanks{Research supported by ESF COST Action IC1104}
}
\maketitle

% As a general rule, do not put math, special symbols or citations
% in the abstract or keywords.
\begin{abstract}
In this work we study both the index coding with side information (ICSI) problem \INSr{introduced by Birk and Kol in 1998} and the more general problem of index coding with coded side information (ICCSI), \INSr{described} by Shum {\em et al} in 2012. 
We \INSe{estimate} the optimal rate of an instance of the index coding problem. In the ICSI problem case, we characterize those digraphs having min-rank one less than their order and we give an upper bound on the min-rank of a hypergraph whose incidence matrix can be associated with that of a 2-design. \INSr{Security aspects} are discussed in the particular case when the design is a projective plane.
For the coded side information case, we extend the graph theoretic upper bounds given by Shanmugam {\em et al} in 2014 on the optimal rate of index code. 
\end{abstract}

% Note that keywords are not normally used for peerreview papers.
\begin{keywords}
Index coding, network coding, coded side information, broadcast with side information, min-rank. 
\end{keywords}
%\begin{AMS}
% 		05C50, 68P30, 94A05
%\end{AMS}
%
% For peer review papers, you can put extra information on the cover
% page as needed:
% \ifCLASSOPTIONpeerreview
% \begin{center} \bfseries EDICS Category: 3-BBND \end{center}
% \fi
%
% For peerreview papers, this IEEEtran command inserts a page break and
% creates the second title. It will be ignored for other modes.
%\IEEEpeerreviewmaketitle
%
%%%%%%%%%%%%%%%%%%%%%%%%%%%%%%%%%%%%%%%%%%%%%%%%%%%%%%%%%%%%%%%%%%%%%%%%%%
%

\section{Introduction}
\label{intro}

Since its introduction in \cite{662940}, the problem of index coding has been generalized in a number of directions \cite{GCG-cd-art-Alon2008,GCG-cd-art-Bar-Yossef2006,6034005,GCG-cd-art-dau2012security,GCG-cd-art-dau2011index,GCG-cd-art-shum2014}. It is a problem that has aroused much interest in recent years; from the theoretical perspective, its equivalence to network coding  has established it as an important area of network information theory \cite{el2010index,effind15}. In the classical case, a central broadcaster has a data file $x \in \fq^n$. There are $n$ users each of whom already possesses some subset of components of $x$ as its side-information and each of whom requests some component $x_i$ of the file. The index coding problem is to determine the minimum number of transmissions required so that the demands of all users can be met, given that data may be encoded prior to broadcast.
This problem can be associated with a directed graph, or a hypergraph if the case is extended to consider a scenario of $m>n$ users. Several authors have given various bounds on the length of an index code, which refers to the number of transmission\INSr{s} used to meet clients' demands for a given instance of the problem. It is well known that for the case of linear index coding, the min-rank of the associated side-information graph is the minimal number of broadcasts required. In \cite{shanmugam2014graph}, the authors give several graph theoretic upper bounds based on linear programming.
In \cite{GCG-cd-art-shum2014} the authors describe the scenario of linear index coding with coded side information. In this model, users may request a linear combination of the data held by the sender and are assumed to each have some set of linear combinations of the data packets. One motivation for this more general model is that it may serve a larger number of applications than the case for uncoded side-information, such as broadcast relay networks and wireless distributed storage systems. The set-up in \cite{GCG-cd-art-shum2014} does not have an obvious representation in the form of a side-information hypergraph. However, as we show here, practically all the results of \cite{shanmugam2014graph} can be extended to this case.

In this paper we present new bounds on the optimal rate for different instances of the index coding problem. For the case of uncoded side information the problem will be referred to as an index coding with side information (ICSI) problem. For the case of encoded side information we will describe this as an ICCSI instance. In the first part we give bounds on the minimum number of transmissions required for particular instances of the ICSI problem where the corresponding side-information hypergraph can be associated with the incidence matrix of a design. This comprises Sections II-V. The remainder of the paper is concerned with upper bounds on the total transmission time for the ICCSI problem and extends the results of \cite{shanmugam2014graph} for this more general case. 
In Section II we give relevant definitions and results on incidence structures such as designs. In Section III the ICSI problem is described. In Section IV, extending results of \cite{CGC-mis-art-dau2014optimal}, we characterize those digraphs having min-rank one less than their order. In Section V we give an upper bound on the min-rank of a hypergraph whose incidence matrix can be associated with that of a 2-design and discuss a security aspect for such special instances of the ICSI problem. In Section VI we describe the ICCSI problem before finally giving several upper bounds on the transmission time of an ICCSI instance based on linear programming.

\section{Preliminaries}\label{sec:pre}
%\subsection{Notation}

We establish some notation to be used throughout the paper. We will assume that $q$ is a power of a prime $p$, say $q=p^\ell$.
For any positive integer $n$, we let $[n]:=\{1,\dots,n\}$. We write $\fq$ to denote the finite field of order $q$ and use $\fq^{n\times t}$ to denote the vector space of all $n\times t$ matrices over $\fq$. 

Given a matrix $X \in \fq^{n\times t}$ we write ${ X}_i$ and ${ X}^j$ to denote the $i$th row and $j$th column of $X$, respectively. More generally, for subsets ${\mathcal S}\subset [n]$ and $\INSr{\mathcal T}\subset [t]$ we write ${X}_{\mathcal S}$ and 
${ X}^{\mathcal T}$ to denote the $|{\mathcal S}| \times t$ and $n \times |{\mathcal T}|$ submatrices of $X$ comprised of the rows of $X$ indexed by ${\mathcal S}$ and the columns of $X$ indexed by ${\mathcal T}$ respectively.  We write $\langle X \rangle$ to denote the row space of $X$.

%For the vectors $\bu = (u_1,u_2,\dots,u_n) \in \Fq^n$ and $\bv = (v_1, v_2, \dots , v_n) \in \Fq^n$ , the {\em (Hamming) distance} between $\bu$ and $\bv$ is defined to be the number of coordinates where $u$ and $v$ differ, namely,
%$$
%d(\bu,\bv)=|\{i\in [n]: u_i\ne v_i\}|.
%$$
%If $\bu\in \Fq^n$ and $S\subseteq \Fq^n$ is a set of vectors,then the last definition can be extended to
%$$
%d(\bu, S ) = \displaystyle\min\limits_{ \bv\in S}d(\bu, \bv).
%$$
%The {\bf support} of a vector  $\bu = (u_1,u_2,\dots,u_n) \in \Fq^n$ is defined to be the set $\supp(\bu)=\{i\in[n]:u_i\ne0\}$.
%The {\bf(Hamming) weight} of a vector $\bu$, denoted $w(\bu)$, is defined to be $|\supp(\bu)|$, the number of nonzero coordinates of $\bu$. 
%\begin{definition}
%Let $C$ be a $k$-dimensional subspace of $\Fq^n$. $C$ is called a {\em  linear $[n, k, d]_q$ code} if the minimum distance of $C$,
%$$
%d ( C ) =\min_{\bu,\bv\in\cC,\bu\ne \bv} d (\bu,\bv)
%$$
%is equal to $d$. We call $n$ the length and $k$ the dimension of the code. The vectors in $C$ are called {\em codewords}.  If $d(C)=n-k+1$ then it is called maximum distance separable (MDS).
%\end{definition}
 %A {\em generator matrix} $G$ of an $[n, k,d]_q$ code $\cC$ is a $k \times n$ matrix whose rows are linearly independent codewords of $C$.

A finite {\em incidence structure} $\cS=(\cP,\cB,\cI)$, consists of a pair of finite sets $\cP$ (its points) and $\cB$ (its blocks), and an incidence relation $\cI \subset \cP\times \cB$. We say that $p$ is contained in or is incident with $B$ if $(p,B) \in \cI$.

\begin{definition}
Let $t,v,k$ \INSr{and} $\gl$ be positive integers. An incidence structure $\cD=(\cP,\cB,\cI)$ is called a $t$-$(v,k,\gl)$ {\em block design} if
\begin{itemize}
\item[(1)] $|\cP|=v$;
\item[(2)] $|B|=k$ for all $B\in\cB$;
\item[(3)] every $t$-set of points of $\cP$ are contained in precisely $\gl$ blocks of $\cB$.
\end{itemize}
\end{definition}

Often a $t$-$(v,k,\gl)$ block design is simply referred to as a $t$-design.
Designs are well-studied objects in combinatorics with many applications.
The interested reader is referred to \cite{lintwilson2001,CGC-camevanl91,CGC-misc-book-ColbDin96} for further information, but we present sufficient detail here to meet our purposes.
The number of blocks $ b$ of a $t$-$(v,k,\gl)$ design is $b= \lambda \binom{v}{t}/\binom{k}{t}$ and the number of blocks containing any given point of $\cP$ is 
$r=\lambda  \binom{v-1}{t-1}/\binom{k-1}{t-1}$, \INSr{which is its} {\em replication number}. In the case of a 2-design we have $r= \lambda (v-1)/(k-1)$.  
An important parameter of a $t$-design is its {\em order}, defined to be $n=r-\lambda$.
%$$
%n=\gl\frac{\binom{v-2}{k-1}}{\binom{v-t}{k-t}}.
%$$

%\begin{definition}
%Let $\cS=(\cP,\cB,\cI)$ and $\cT=(\cQ,\cC,\cJ)$ be incidence structures, and let $\varphi$ be a bijection from $\cP\cup\cB$ to $\cQ\cup\cC$. If $\varphi(\cP)=\cQ$ and $\varphi(\cB)=\cC$ with $p\in\cP$ contained in $B\in\cB$ if and only if $\varphi(p)$ contained in $\varphi(B)$, then $\varphi$ is an isomorphism from $\cS$ to $\cT$.
%\end{definition}

\begin{definition}
Let $\cS=(\cP,\cB,\cI)$ be an incidence structure with $|\cP|=v$ and $|\cB|=b$. Let the points be labelled $\{p_1,\dots,p_v\}$ and the blocks be labelled $\{B_1,\dots,B_b\}$. An {\em incidence matrix}  for $\cS$ is a $b\times v$ matrix $A=(a_{i,j})$ with entries in $\{0,1\}$ such that
$$
a_{i,j}=\left\{\begin{array}{lr}
1&\mbox{if $(p_j,B_i)\in\cI$}\\
0&\mbox{if $(p_j,B_i)\notin\cI$}\end{array}\right.
$$
%\end{definition}
%\begin{remark}
%Two incidence structures are isomorphic if and only if they have incidence matrices, $A$ and $B$ respectively, which are related by $PAQ=B$, where $P$ and $Q$ are permutation matrices.
%\end{remark}

%\begin{definition}
%Let $\cS=(\cP,\cB,\cI)$ be an incidence structure and $A$ the incidence matrix of $\cS$. 
The {\em code} of $\cS$ over $\Fq$ is the subspace $C_q(\cS)$ of $\Fq^{|\cP|}$ spanned by the rows of $A$.
\end{definition}

\begin{definition}
Let $\cS$ be an incidence structure and let \INSr{$q$} be a prime power, the $q$-{\em rank} of $\cS$ is the dimension of the code $C_q(\cS)$ and is written
$$
\rk_q(\cS)=\dim(C_q(\cS)).
$$

\end{definition}

The following result was proved by Klemm \cite{CGC-klem86}. We will see in Section V that this gives an immediate upper bound on the min-rank
of a class of instances of the index coding problem.
\begin{theorem}\label{th:klemm}
Let $\cD=(\cP,\cB)$ be a $2$-$(v,k,\gl)$ design of order $n$ and let $p$ be a prime dividing n. Then
$$
\rk_p(\cD)\le\frac{|\cB|+1}{2}.
$$
Moreover, if $p$ does not divide $\gl$ and $p^2$ does not divide $n$, then
$$
C_p(\cD)^\perp\subseteq C_p(\cD)
$$
and $rank_p(\cD)\ge v/2$.
\end{theorem}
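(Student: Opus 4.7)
The plan is to work with the $b \times v$ incidence matrix $A$ of $\cD$ (where $b = |\cB|$) and to exploit the standard integer identity
$$
A^T A = n I_v + \lambda J_v,
$$
where $J_v$ is the $v \times v$ all-ones matrix and $n = r - \lambda$. Both halves of the theorem follow by reducing this identity modulo $p$ for the first bound, and modulo $p^2$ for the second.

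\textbf{First inequality.} Since $p\mid n$, reduction modulo $p$ gives $A^T A \equiv \lambda J_v \pmod p$, a matrix of $\FF_p$-rank at most one. Set $W \subseteq \FF_p^b$ to be the column space of $A$; since row and column ranks agree, $\dim W = \rk_p(\cD)$. The standard inner product on $\FF_p^b$ restricts to the bilinear form
$$
\langle Ax,\,Ay\rangle \;=\; x^T(A^TA)y \;\equiv\; \lambda(\mathbf 1^T x)(\mathbf 1^T y) \pmod p
$$
on $W$, whose rank is at most one. Hence its radical satisfies $\dim(W \cap W^\perp) \ge \dim W - 1$. On the other hand $W \cap W^\perp \subseteq W^\perp$ forces $\dim(W \cap W^\perp) \le b - \dim W$. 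Combining the two inequalities yields $2\dim W \le b+1$, which is precisely $\rk_p(\cD) \le (|\cB|+1)/2$.

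\textbf{Second part.} Assume in addition $p \nmid \lambda$ and $p^2 \nmid n$, and write $n = pm$ with $\gcd(p,m)=1$. Since $C_p(\cD) = \langle A \rangle$, its orthogonal in $\FF_p^v$ is $\ker A$ (acting on column vectors). It suffices to show that $\ker A$ is totally isotropic: this gives $C_p(\cD)^\perp \subseteq (\ker A)^\perp = C_p(\cD)$ directly, and the dimension count $\dim \ker A = v - \rk_p(\cD) \le \rk_p(\cD)$ then yields $\rk_p(\cD)\ge v/2$. To establish total isotropy, pick $z_1,z_2\in \ker A$, lift to integer vectors $\tilde z_1,\tilde z_2$ so that $A\tilde z_i \in p\mathbb{Z}^b$, and observe
$$
\tilde z_1^T(A^T A)\tilde z_2 \;=\; (A\tilde z_1)^T(A\tilde z_2) \;\equiv\; 0 \pmod{p^2}.
$$
Expanding via $A^T A = nI + \lambda J$ this becomes
$$
n\,\tilde z_1^T\tilde z_2 \;+\; \lambda\,(\mathbf 1^T \tilde z_1)(\mathbf 1^T \tilde z_2) \;\equiv\; 0 \pmod{p^2}.
$$
A preliminary mod-$p$ reduction of the same identity gives $\lambda J z_i \equiv A^T A z_i \equiv 0 \pmod p$, and since $p\nmid\lambda$ this forces $\mathbf 1^T \tilde z_i \in p\mathbb{Z}$. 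Thus the second summand lies in $p^2\mathbb{Z}$, leaving $pm\,\tilde z_1^T\tilde z_2 \equiv 0 \pmod{p^2}$. Since $\gcd(p,m)=1$ we conclude $\tilde z_1\cdot\tilde z_2 \equiv 0 \pmod p$, as claimed.

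The only real subtlety is making the second step characteristic-free: a purely quadratic version (deducing $z_1\cdot z_2 = 0$ from $(z_1+z_2)\cdot(z_1+z_2)=0$ by polarization) breaks at $p=2$. Working throughout with the bilinear identity above, rather than its diagonal restriction, bypasses this obstacle and delivers the conclusion uniformly for every prime $p$ dividing $n$.
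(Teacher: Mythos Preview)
The paper does not supply a proof of this theorem at all: it is quoted as a known result of Klemm, with a citation, and is used later only as a black box in the proof of Proposition~\ref{th:1}. So there is no in-paper argument to compare against.

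That said, your proof is correct and is essentially the standard one. The first half is exactly the radical-dimension count driven by $A^TA \equiv \lambda J_v \pmod p$ having $\FF_p$-rank at most one; the inequalities $\dim W - 1 \le \dim(W\cap W^\perp) \le b - \dim W$ are the right ones and immediately give $\rk_p(\cD)\le (b+1)/2$. For the second half, the lifting argument to $\mathbb Z$ and working modulo $p^2$ is precisely how one exploits the hypothesis $p^2\nmid n$; your observation that $p\nmid\lambda$ forces $\mathbf 1^T\tilde z_i\in p\mathbb Z$, and hence kills the $J$-term modulo $p^2$, is the crucial step, and your final remark that keeping the argument bilinear (rather than quadratic) handles $p=2$ uniformly is both correct and worth making explicit. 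One small clarification you might add: when you say the restricted form on $W$ has rank at most one, the cleanest justification is that any Gram matrix of a basis of $W$ factors through $A^TA=\lambda\,\mathbf 1\mathbf 1^T$ and is therefore itself a rank-$\le 1$ matrix.
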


A $2$-$(n^2+n+1,n+1,1)$ design, for $n\ge 2$, is called a {\em projective plane} of order $n$.
%\begin{remark}
A projective plane of order $n$ is an example of a {\em symmetric design}, that is, it has the same number of points as blocks, so $|\cP|=|\cB|$.
%\end{remark}

%\begin{comment}

The following can be read in \cite[Theorem 6.3.1]{GCG-cd-book-assmus1992designs}.
\begin{theorem}\label{th:assm}
Let $\Pi$ be a projective plane of order $n$ and $p$ be a prime such that $p|n$. Then the $p$-ary code of $\Pi$, $\cC_p(\Pi)$, has minimum distance $n+1$. Moreover the codewords of minimal weight in $\cC_p(\Pi)$ are the scalar multiples of the rows of the incidence matrix of $\Pi$.
\end{theorem}

Chouinard, in \cite{GCG-cd-thesis-chouinard2000weight}, proved that:
 
 \begin{theorem}\label{th:pesi}
Let $C_p(\Pi)$ be a code arising from a projective plane of prime order $p$. Then no codeword has weight in the interval $[ p + 2, 2 p - 1]$.
 \end{theorem}
 
%\end{comment}

\begin{definition}
A {\em digraph} is a pair $\cG = (\cV,\cE)$ where:
\begin{itemize}
 \item $\cV$ is the set of vertices of $\cG$,
 \item $\cE \subset \cV \times \cV$ is  the set of {\em arcs} (or directed edges) of $\cG$. 
 \end{itemize}
 An arc of $\cG$ is an ordered pair $e = (u, v) \in \cE(\cG) $ for some $u, v \in \cV$. In the case that $u\ne v$, the vertex $u$ is called \INSr{the} {\em tail} of $e$ and $v$ the {\em head} of $e$. The arc $e$ is called an {\em out-going arc} of $u$ and an {\em in-coming arc} of $v$. The {\em out-degree} of a vertex $u$, $\deg_O(u)$ is the number of out-going arcs, and the {\em in-degree} of a vertex $u$, $\deg_I(u)$ is the number of in-coming arcs. 
 \INSr{$\cG$ is called an undirected graph, or a graph, if $(u,v)\in \cE$ whenever $(v,u)\in \cE$. If $\cG$ is a graph then each pair of arcs $(u,v)$ and $(v,u)$ are represented by the unordered pair $\{u,v\}$, which is called an {\em edge}. The number of vertices of a digraph is called its {\em order}.}
\end{definition}

\INSr{We assume that all digraphs have finite order.}

\begin{definition}
A {\em path} in a graph $\cG$ (respectively in a digraph), is a sequence of distinct vertices $(u_1, u_2, \dots , u_k)$, such that $\{u_i, u_{i+1}\} \in\cE$ ($(u_i, u_{i+1}) \in \cE$, respectively) for all $i \in[k - 1]$. If a path is closed, i.e. $\{u_k,u_1\} \in\cE$ ($(u_k, u_{1}) \in \cE$, respectively), then it is called {\em circuit}. \INSr{A digraph that is not a graph is called {\em acyclic} if it contains no circuits. A graph is acyclic if it has no circuits with at least 3 vertices.}
\end{definition}

Let $\nu(\cG)$ be the {\em circuit packing number} of $\cG$, namely, the maximum number of vertex-disjoint circuits in $\cG$.
A {\em feedback vertex set} of $\cG$ is a set of vertices whose removal destroys all circuits in $\cG$. Let $\gt(\cG)$ denote the {\em minimum size of a feedback vertex  set} of $\cG$.
We denote by $\ga(\cG)$ the \INSr{maximum size of vertex subset such that induced subgraph in $\cG$ is acyclic. Since such a subset of vertices is the complement of a feedback vertex set, we have $\ga(\cG)=|\cG|-\gt(\cG)$.}   
In the case that $\cG$ is a graph, $\ga(\cG)$ is the maximum size of an independent (pairwise non-adjacent) set of vertices,   

\begin{definition}
A {\em clique} of a digraph is a set of vertices that induces a complete subgraph of that digraph. A {\em clique cover} of a digraph is a set of cliques that partition its vertex set. A {\em minimum clique cover} of a digraph is a clique cover having minimum number of cliques. The number of cliques in such a minimum clique cover of a digraph is called the {\em clique cover number} of that digraph. We denote by ${\bf cc}(\cG)$ the clique cover number of a digraph $\cG$.
\end{definition}

\begin{definition}
 Let $\cG = (\cV,\cE)$ be a digraph of order $n$. A matrix $M = (m_{i,j} )\in \Fq^{n\times n}$ is said {\em to fit} $\cG$ if 
 $$
 m_{i,j}=\left\{\begin{array}{lr}
 1&\mbox{if $i=j$}\\
 0&\mbox{if $(i,j)\notin\cE$}\end{array}\right.
 $$ 
 The min-rank of $\cG$ over $\Fq$ is defined to be
 $$
\minrk_q(\cG) = \min\{\rk_q(M) : M \mbox{ fits }\cG\}
 $$
\end{definition}
 We also have analogous definitions for a graph.
 \begin{definition}
 A (directed) {\em hypergraph} $\cH$ is a pair $(\cV , \cE )$, where $\cV$ is a set of {\em vertices} and $\cE$ is a set of \INSr{{\em hyperarcs}}. A hyperarc $e$ itself is an ordered pair $(v,H)$, where $v\in \cV$ and $H\subseteq\cV$, they respectively represent the {\em tail} and the {\em head} of the hyperarc $e$.
 \end{definition}

\begin{definition}
Let $|\cV|=n$ and $|\cE|=m$. Let the hyperarcs be labelled \\$\{e_1,...,e_m\}$, a matrix $M=(m_{i,j})\in \FF_q^{m\times n}$  {\em fits} the hypergraph if
$$
m_{i,j}=\begin{cases}
		1 & \mbox{if $j$ is the tail of $e_i$}\\
		0 & \mbox{if $j$ does not lie in the head of $e_i$}\end{cases}
$$
The min-rank of $\cH$ over $\FF_q$ is defined to be
 $$
\minrk_q(\cH) = \min\{\rk_q(M) : M \mbox{ fits }\cH\}
 $$
\end{definition}

\section{Index coding with side information}\label{sec:icsi}

The Index Coding with Side Information (ICSI) problem is described as follows.
There is a unique sender $S$, who has a data matrix $X \in \FF_q^{n\times t}$.
There are also $m$ receivers, each with a request for a data packet $X_i$, and it is assumed that each receiver has some side-information, that is, a client $i$ has a subset of messages $X_{\cX_i}$, where $\cX_i\subseteq [n]$ for each $i\in[m]$. The packet requested by $i$ is denoted by $X_{f(i)}$, where $f:[m]\to[n]$ is a (surjective) {\em demand function}. Here we assume that $f(i) \notin \cX_i$ for all $i\in[m]$.
We may assume that each $i$th receiver requests only the message $X_{f(i)}$, since a receiver requesting more than one message can be split into multiple receivers, each of whom requests only one message and has the same side information set as the original \cite{GCG-cd-art-Alon2008}. 

For the remainder, let us fix $t,m,n$ to denote those parameters as described above.
Then for any $\cX=(\cX_1,\dots,\cX_n), \cX_i \subset [n]$ and map $f:[m]\to[n]$, the corresponding instance of the ICSI problem (or the ICSI instance) is denoted by 
$\cI=(\cX , f )$. It can also be conveniently described by a side-information (directed) hypergraph \cite{GCG-cd-art-Alon2008}.

\begin{definition}
Let $\cI=(\cX , f )$ be an ICSI instance. The corresponding {\em side information hypergraph} $\cH = \cH(\cX,f)$ has vertex set $\cV = [n]$ and hyperarc set $\cE$, 
defined by
$$
\cE = \{(f(i),\cX_i) : i \in[m]\}.
$$
\end{definition}

\begin{remark}
If we have $m = n$ and $f(i) = i$ for all $i \in [n]$,  the corresponding side information hypergraph has precisely $n$ hyperarcs, each with a different origin vertex. It is simpler to describe such an ICSI instance as a digraph $\cG = ( [n], \cE)$, the so-called {\em side information digraph} \cite{GCG-cd-art-Bar-Yossef2006}. For each hyperarc $(i,\cX_i)$ of $\cH$, \INSr{there} are $|\cX_i|$ arcs $(i,j)$ of $\cG$, for $j \in\cX_i$. Equivalently, $\cE = \{(i,j) : i,j \in [n], j \in\cX_i\}$.% In this case, we refer to $\cG$ as the underlying digraph of the hypergraph $\cH$.
\end{remark}

\begin{definition}
Let $N$ be a positive integer. We say that the map	
$$
E:\fq^{n\times t}\to\fq^{N},
$$
is an $\fq$-code of length $N$ for the instance $\cI=(\cX , f )$ if for each $i \in [m]$ there exists a decoding map

$$
D_i:\fq^{N}\times\FF_q^{|\cX_i|}\to\fq^t,
$$
satisfying
$$
\forall X\in\fq^{n\times t} \,:\, D_i(E({X}),X_{\cX_i})= X_{f(i)},
$$
 in which case we say that $E$ is an $\cI$-IC. 
$E$ is called an $\fq$-linear $\cI$-IC if $E(X)=LX$ for some $L\in\fq^{N \times n}$, in which case we say that $L$ represents
the code $E$. If $t=1$, $E$ is called {\em scalar} linear.
\end{definition}

The following well-known results quantify the minimal length of a linear index code in respect of its side-information hypergraph (cf. \cite{GCG-cd-art-dau2012security})
\begin{lemma}\label{lm:1}
An $\cI(\cX ,f)$-IC of length $N$ over $\Fq$ has a linear encoding map if and only if there exists a matrix $L\in \fq^{N \times n}$ such that for each $i\in[m]$, there exists a vector ${\bf u}^{(i)}\in \Fq^n$ satisfying
\begin{eqnarray}
\supp( {\bf u}^{(i)})\subseteq\cX_i\\
{\bf u}^{(i)}+{\bf e}_{f(i)}\in \langle L\rangle.
\end{eqnarray}
\end{lemma}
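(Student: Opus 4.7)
The plan is to prove both directions. A first observation reduces necessity to the case $t=1$: if $E(X)=LX$ admits decoders $D_i$, then for matrices $X,X'$ with $LX=LX'$ and $X_{\cX_i}=X'_{\cX_i}$ we must have $X_{f(i)}=X'_{f(i)}$; setting $Y=X-X'$ and varying its columns independently (for example, concentrating a given $\by\in\ker L$ in a single column) shows this is equivalent to the column-wise statement that $\by\in\ker L$ with $\by_j=0$ for all $j\in\cX_i$ forces $y_{f(i)}=0$. Sufficiency will instead yield an explicit linear decoder, valid for all $t$.

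Forward direction: let $V_i:=\langle \be_j\,:\,j\in\cX_i\rangle\subseteq\fq^n$, so that $V_i^\perp=\{\by\in\fq^n\,:\,\by_j=0\ \forall j\in\cX_i\}$. The column-wise condition above reads $\ker L\cap V_i^\perp\subseteq\langle\be_{f(i)}\rangle^\perp$. Taking orthogonal complements and using the standard identities $(\ker L)^\perp=\langle L\rangle$ and $(A\cap B)^\perp=A^\perp+B^\perp$ yields $\be_{f(i)}\in\langle L\rangle+V_i$. Writing $\be_{f(i)}=\bv^{(i)}-\bu^{(i)}$ with $\bv^{(i)}\in\langle L\rangle$ and $\bu^{(i)}\in V_i$ produces vectors satisfying $\supp(\bu^{(i)})\subseteq\cX_i$ and $\bu^{(i)}+\be_{f(i)}=\bv^{(i)}\in\langle L\rangle$, as required.

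Backward direction: given $L$ and $\bu^{(i)}$ satisfying the two conditions, pick $\bw^{(i)}\in\fq^N$ with $\bw^{(i)}L=\bu^{(i)}+\be_{f(i)}$. Then for every $X\in\fq^{n\times t}$,
$$\bw^{(i)}(LX)=(\bu^{(i)}+\be_{f(i)})X=\bu^{(i)}X+X_{f(i)},$$
and because $\supp(\bu^{(i)})\subseteq\cX_i$ the row vector $\bu^{(i)}X$ depends only on the rows of $X$ indexed by $\cX_i$, i.e.\ on $X_{\cX_i}$. Hence $D_i(Y,Z):=\bw^{(i)}Y-\bu^{(i)}_{\cX_i}Z$ is a valid (in fact linear) decoder, so $E(X)=LX$ is an $\cI$-IC.

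The main obstacle is the dualization step in the forward direction: it rests on $(\ker L)^\perp=\langle L\rangle$ and $(A\cap B)^\perp=A^\perp+B^\perp$ over $\fq$, which hold for subspaces of $\fq^n$ under the non-degenerate standard bilinear form. Once these are in place the argument is essentially bookkeeping, and every other step is a straightforward verification.
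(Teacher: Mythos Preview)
The paper does not supply its own proof of this lemma; it is stated as a well-known result with a reference to Dau, Skachek and Chee \cite{GCG-cd-art-dau2012security}. Your argument is correct and is essentially the standard one found in that literature: the forward direction uses the injectivity (``confusion'') criterion for decodability, reduces to $t=1$, and then applies the duality identities $(\ker L)^\perp=\langle L\rangle$ and $(A\cap B)^\perp=A^\perp+B^\perp$ over $\fq^n$ to obtain $\be_{f(i)}\in\langle L\rangle+V_i$; the backward direction exhibits the explicit linear decoder $D_i(Y,Z)=\bw^{(i)}Y-\bu^{(i)}_{\cX_i}Z$. Nothing is missing.
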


\begin{theorem}\label{cor:min-rank}
Let $\cI=(\cX,f)$ be an instance of the ICSI problem, and $\cH$ its hypergraph. Then the optimal length of a $q$-ary linear $\cI$-IC is $\minrk_q(\cH)$.
\end{theorem}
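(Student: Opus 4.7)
The plan is to establish both inequalities directly from Lemma~\ref{lm:1}, which already packages the essential algebraic content relating a linear index code to the span of a suitably structured matrix.

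First I would prove $\minrk_q(\cH)\le N$ whenever a linear $\cI$-IC of length $N$ exists. Given $L\in \fq^{N\times n}$ representing such a code, Lemma~\ref{lm:1} supplies vectors $\bu^{(i)}\in\fq^n$ with $\supp(\bu^{(i)})\subseteq \cX_i$ and $\bu^{(i)}+\be_{f(i)}\in\langle L\rangle$ for each $i\in[m]$. Let $M\in\fq^{m\times n}$ be the matrix whose $i$th row is $\bu^{(i)}+\be_{f(i)}$. Since $f(i)\notin\cX_i$, row $i$ of $M$ has entry $1$ in column $f(i)$ and entry $0$ in every column outside $\cX_i\cup\{f(i)\}$, so $M$ fits $\cH$. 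As every row of $M$ lies in $\langle L\rangle$, we obtain $\rk_q(M)\le\rk_q(L)\le N$, and therefore $\minrk_q(\cH)\le N$.

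Conversely, taking $M$ that fits $\cH$ with $\rk_q(M)=\minrk_q(\cH)=N$, the fitting condition forces the $i$th row of $M$ to have the form $\be_{f(i)}+\bu^{(i)}$ with $\supp(\bu^{(i)})\subseteq\cX_i$. Choose $L\in\fq^{N\times n}$ whose rows form a basis of the row space of $M$; then $\bu^{(i)}+\be_{f(i)}\in\langle L\rangle$ for each $i\in[m]$, so the reverse direction of Lemma~\ref{lm:1} asserts that $L$ represents a linear $\cI$-IC of length $N$. Combining the two inequalities yields the claim.

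No serious obstacle is expected, since Lemma~\ref{lm:1} already does the substantive work. The only point requiring care is the dictionary between the fitting conditions for $\cH$ (namely $m_{i,f(i)}=1$, entries in columns indexed by $\cX_i$ unconstrained, other entries equal to $0$) and the support/membership conditions in Lemma~\ref{lm:1}; this dictionary is immediate because the standing hypothesis $f(i)\notin\cX_i$ keeps the forced $1$ in column $f(i)$ disjoint from the columns indexed by $\cX_i$ where the $\bu^{(i)}$ are supported.
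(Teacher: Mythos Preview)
Your argument is correct and is precisely the standard derivation of this result from Lemma~\ref{lm:1}. The paper itself states Theorem~\ref{cor:min-rank} as a well-known fact (citing \cite{GCG-cd-art-dau2012security}) and gives no proof, so there is nothing further to compare.
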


%\subsection{Bounds on min-rank}

Achievable schemes based on graph-theoretic models for constructing index codes (i.e. upper bounds for index coding) were largely studied  \cite{GCG-cd-art-Alon2008,GCG-cd-art-Bar-Yossef2006,6034005,shanmugam2014graph}.

One of these methods comes from the well-known fact that all the users forming a clique in the side information digraph can be simultaneously satisfied by transmitting the sum of their packets \cite{662940}. This idea shows that the number of cliques required to cover all the vertices of the graph (the clique cover number) is an achievable upper bound.

A lower bound on the min-rank of a digraph was given in \cite{GCG-cd-art-Bar-Yossef2006}. An acyclic digraph has min-rank equal to its order (see for instance \cite{GCG-cd-art-Bar-Yossef2006}) and for any subgraph $\cG'$ of a graph $\cG$ we have 
$$
\minrk_q(\cG')\le\minrk_q(\cG).
$$
Let $M$ be a matrix that fits $\cG$, the sub-matrix $M'$ of $M$ restricted on the rows and columns indexed by the vertices in $\cV(\cG')$ is a matrix that fits $\cG'$. These two results are summarized in the following theorem.

\begin{theorem}\label{th:cap}
Let ${\cG}$ be a digraph. Then
$$
\ga(\cG)\le\minrk_q(\cG)\le\mathbf{cc}({\cG}).
$$ 
\end{theorem}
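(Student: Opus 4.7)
My plan is to prove the two inequalities separately, each by constructing or restricting an explicit matrix.

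For the upper bound $\minrk_q(\cG)\le \mathbf{cc}(\cG)$, I would fix a minimum clique cover $\{C_1,\dots,C_k\}$ of $\cG$, where $k=\mathbf{cc}(\cG)$, and construct a matrix $M\in\fq^{n\times n}$ as follows: set $m_{i,j}=1$ whenever $i$ and $j$ belong to the same clique $C_l$, and $m_{i,j}=0$ otherwise. I then need to check that $M$ fits $\cG$. The diagonal is 1 by construction, and the only off-diagonal 1s occur for pairs $i,j$ in a common clique, which forces $(i,j)\in\cE$ by definition of a clique in a digraph; thus whenever $(i,j)\notin\cE$ we indeed have $m_{i,j}=0$. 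After reordering the rows/columns according to the clique partition, $M$ becomes block diagonal with $k$ all-ones blocks, each of rank one, so $\rk_q(M)=k$.

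For the lower bound $\ga(\cG)\le\minrk_q(\cG)$, I would use the induced-subgraph monotonicity $\minrk_q(\cG')\le\minrk_q(\cG)$ already noted in the excerpt (if $M$ fits $\cG$ then its principal submatrix on $\cV(\cG')$ fits $\cG'$ and its rank cannot exceed that of $M$). Applying this to an induced acyclic subgraph $\cG'$ on $\alpha(\cG)$ vertices, and invoking the fact (also cited above) that an acyclic digraph has min-rank equal to its order, I would get $\alpha(\cG)=|\cV(\cG')|=\minrk_q(\cG')\le\minrk_q(\cG)$.

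The only step that needs genuine care is the construction verifying that the block-ones matrix fits $\cG$, so the presentation should stress that a clique in a digraph contains both arcs $(i,j)$ and $(j,i)$ for every pair of distinct vertices in it, justifying the placement of 1s in off-diagonal positions. Everything else reduces to the two ingredients (monotonicity and acyclic $\Rightarrow$ full min-rank) that are either mentioned explicitly in the excerpt or follow from a short direct argument on a fitting matrix; I would not grind through these but simply cite \cite{GCG-cd-art-Bar-Yossef2006} for the acyclic case.
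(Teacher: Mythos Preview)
Your proposal is correct and matches the paper's approach: the paper does not give a self-contained proof of this theorem but rather states it as a summary of the two facts discussed just before it (monotonicity of min-rank under induced subgraphs together with the acyclic case for the lower bound, and the clique-cover transmission scheme for the upper bound), which are precisely the ingredients you use. Your explicit block all-ones matrix is the natural formalization of the paper's remark that ``all the users forming a clique \dots\ can be simultaneously satisfied by transmitting the sum of their packets.''
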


Instead of covering with cliques, one can cover the vertices with circuits. In \cite{6034005} the {\em circuit-packing bound} was implicitly introduced by the authors. Indeed, Chaudhry and Sprintson construct a linear index code partitioning the graph of the ICSI instance in disjoint circuits. The same bound was explicitly given in the work of Dau \emph{et al.} \cite{CGC-mis-art-dau2014optimal}. It is based on the observation that the existence of a circuit of length $k$ in the side-information digraph $\cG$ 
requires at most $k-1$ transmissions to satisfy the demands of the corresponding $k$ users. Therefore a collection of $\nu$ vertex disjoint circuits corresponds to a `saving' of at least \INSe{$\nu$ transmissions}. The bound is stated as follows:
Let $\nu(\cG)$ be the circuit-packing number of a graph $\cG$ of order $n$. Then
$$
\minrk_q(\cG)\le n-\nu(\cG).
$$

In \cite{tehrani2012bipartite} the following result is given, leading the authors to introduce the {\em partition multicast scheme}, which outperforms the circuit-packing number.

\begin{proposition}\label{prop:bound}
Let $\cG$ be a graph of order $n$. Then 
$$
\minrk_q(\cG)\le n-\min_{v \in \cV}\deg_O(v),
$$
for any $q>n$.
\end{proposition}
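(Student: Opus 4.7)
The plan is to construct a linear $\cI$-IC of length $n-d$, where $d=\min_{v\in\cV}\deg_O(v)$, and then invoke Theorem~\ref{cor:min-rank} to obtain the claimed inequality $\minrk_q(\cG)\le n-d$.

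First, since $q>n$, I would take $L\in\Fq^{(n-d)\times n}$ to be a generator matrix of an $[n,n-d]$ Reed--Solomon code over $\Fq$; such a code exists whenever $n\le q$, and its defining MDS property guarantees that every $n-d$ columns of $L$ are linearly independent.

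Next I would verify that this $L$ represents a linear index code for the instance associated with $\cG$. Fix any user $i\in[n]$. The minimum out-degree hypothesis gives $|\cX_i|=\deg_O(i)\ge d$, so $|[n]\setminus\cX_i|\le n-d$. Having received the broadcast $Lx$ and knowing the side information $x_{\cX_i}$, user $i$ subtracts off the known part to obtain
\[
Lx-L^{\cX_i}x_{\cX_i}=L^{[n]\setminus\cX_i}x_{[n]\setminus\cX_i}.
\]
The submatrix $L^{[n]\setminus\cX_i}$ has at most $n-d$ columns, each chosen from a matrix in which any $n-d$ columns are linearly independent, so these columns are themselves linearly independent. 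Hence the linear system on the right uniquely determines the vector $x_{[n]\setminus\cX_i}$; since $i\notin\cX_i$, this recovers $x_i$, as required. Equivalently, one may translate this into the condition of Lemma~\ref{lm:1}: for each $i$ one finds $\bu^{(i)}\in\Fq^n$ supported on $\cX_i$ with $\bu^{(i)}+\be_{i}\in\langle L\rangle$.

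Having exhibited a linear $\cI$-IC of length $n-d$, Theorem~\ref{cor:min-rank} yields $\minrk_q(\cG)\le n-d$. The only nontrivial ingredient is the existence of an MDS code with the prescribed parameters over $\Fq$, which under $q>n$ is classical and handled by the Reed--Solomon construction; the remainder is routine linear algebra, so I do not anticipate any substantial obstacle.
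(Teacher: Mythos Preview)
Your argument is correct: the Reed--Solomon generator matrix gives an $[n,n-d]$ MDS code over $\Fq$ once $q\ge n$, the column-independence property lets each user invert the residual system, and Theorem~\ref{cor:min-rank} then yields the bound. The paper does not supply its own proof of this proposition but cites it from \cite{tehrani2012bipartite}; your MDS-based construction is precisely the partition multicast argument the paper alludes to (see the remark following Theorem~\ref{th:Ong}, where the hypothesis $q>n$ is explained as ``requiring the existence of a maximum distance separable code'').
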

%This result leads the authors in \cite{tehrani2012bipartite} to introduce the {\em partition multicast scheme} that outperform the circuit-packing number.

The broadcast rate of an IC-instance $\cI$ \cite{GCG-cd-art-Alon2008} is defined as follows, with respect to a prime $p$. 
\begin{definition}
Let $\cI=(\cX,f)$ be an IC instance. We denote by $\gb_t(\cI)$ the minimal number of symbols required to broadcast the information to all receivers, when the block length is $t$, over all possible extensions of $\FF_p$, i.e.
$$
\gb_t(\cI)=\inf_q\{N\mid \text{$\exists$ a $q$-ary index code of length $N$ for $\cI$}\}.
$$
Moreover we denote by $\gb(\cI)$ the limit
$$
\gb(\cI)=\lim_{t\rightarrow \infty}\frac{\gb_t(\cI)}{t}=\inf_{t}\frac{\gb_t(\cI)}{t}.
$$
%$\gb$ is \emph{minimum broadcast rate}.
\end{definition}

\INSe{In the following, we will also use the notation $\gb(\cG)$ to indicate the broadcast rate of any instance that has $\cG$ as side-information graph.}

The graph parameter $\minrk_q(\cG)$ completely characterizes the length of an optimal linear index code. Bar-Yossef et al. \cite{GCG-cd-art-Bar-Yossef2006,GCG-cd-art-Bar-Yossef2011} showed that in various cases linear codes attain the optimal word length, and they conjectured that the minimum broadcast rate of a graph $\cG$ was $\minrk_2(\cG)$ also for non-linear codes. Lubetzky and Stav in \cite{lubetzky2009nonlinear} disproved this conjecture.

%
%\begin{definition}
%Let $\cG$ be a digraph related to an ICSI instance $\cI$. The broadcast rate $\gb_q(\cG)$ over $\fq$ of an IC (not necessarily linear) is the minimum number of symbols of $\fq$ necessary to encode. Moreover the minimum broadcast rate $\gb(\cG)$ is the minimum broadcast possible over all $\fq$, that is
%$$
%\gb(\cG)=\inf_{q}\gb_q(\cG).
%$$
%\end{definition}
%By definition it results $\gb(\cG)\le \gb_q(\cG)\le \minrk_q(\cG)$.
%
%The result obtained by Lubetzky and Stav is the following.
%
%\begin{theorem}[\cite{lubetzky2009nonlinear}]
%For any $\varepsilon >0$ and any sufficient large $n$ there is an $n$-vertex graph $\cG$ such that:
%\begin{itemize}
%\item[1)] any linear IC for $\cG$ over some field $\fq$ requires $\sqrt{n}$ symbols, i.e. $\minrk_q(\cG)\ge \sqrt{n}$,
%\item[2)] there exists a non-linear IC using $n^\varepsilon$ symbols that is $\gb(\cG)$.
%\end{itemize}
%\end{theorem}

In the works of Alon \emph{et al.} \cite{GCG-cd-art-Alon2008} and Shanmugam \emph{et al.} \cite{shanmugam2013local}, it was shown that results based on partitioning the vertices of a graph $\cG$ in cliques lead to a family of stronger bounds on $\gb(\cG)$, starting with an LP relaxation called \emph{fractional chromatic number} \cite{GCG-cd-art-Alon2008} and the stronger \emph{fractional local chromatic number} \cite{shanmugam2013local}. In \cite{shanmugam2014graph} the authors extended all these schemes to the case of hypergraphs. 

\section{On directed graphs with min-rank one less than the order}

In the work of Dau \emph{et al.} \cite{CGC-mis-art-dau2014optimal} the authors characterize the undirected graphs of order $n$ having min-rank $n-1$. Here we extend this result to include directed graphs over a sufficiently large field. \INSe{Our result relies in part on the following lemma, which is a construction of a digraph $\cG'$ of minrank one less that a digraph $\cG$, obtained from $\cG$ by contracting an arc.}

%\INSe{Recall, over any finite field $\fq$ it holds that %for a digraph $\cG$ that $\gt(\cG)$ denotes the minimum number of vertices that must be removed from $\cG$ in  order to obtain an acyclic subgraph and that $n-\gt(\cG)=\ga(\cG)$. Therefore, 
%$$n-\gt(\cG) =\alpha(\cG) \le\gb(\cG)\le \minrk_q(\cG) \le \mathbf{cc}(\cG).$$}
 %(for the first inequality see for instance \cite{blasiak2010index}). 
%In this section we characterize those graphs $\cG$ with min-rank one less than its order, over finite fields with cardinality greater than $n$. 

%We consider connected graphs. In fact if the graph is not connected, then it is easy to check that the min-rank of the graph is the sum of the min-rank of its connected components.
\begin{lemma}\label{lm:reduction}
Let $\cG=(\cV,\cE)$ be a directed graph of order $n$ such that there exist $i_1,i_2\in\cV$ with 
\begin{itemize}
\item[$(1)$] $(i_1,i_2)\in\cE$ and $(i_2,i_1)\notin\cE$
\item[$(2)$] $\deg_O(i_1)=1$.

\end{itemize}
Let $\cG'=(\cV',\cE')$ with $\cV'=\cV\setminus\{i_1\}$ and \\
$\cE'=\left(\cE\cup\{(j,i_2)\,|\,(j,i_1)\in \cE\}\right)\setminus(\{(i_1,i_2)\}\cup\{(j,i_1)\,|\,(j,i_1)\in \cE\})$. Then
$$
\minrk_q(\cG)=\minrk_q(\cG')+1
$$
for any $q$.
\end{lemma}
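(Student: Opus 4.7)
The strategy is to prove the equality by exhibiting a rank-preserving correspondence between matrices fitting $\cG$ and matrices fitting $\cG'$, showing that removing the vertex $i_1$ costs exactly one unit of rank. The pivotal observation is that since $\deg_O(i_1)=1$ with unique out-neighbour $i_2$, every matrix $M$ fitting $\cG$ has its $i_1$-row of the form $e_{i_1}+\alpha e_{i_2}$ for some $\alpha\in\fq$, while its $i_1$-column is constrained only by $m_{j,i_1}=0$ whenever $(j,i_1)\notin\cE$, together with $m_{i_2,i_1}=0$ forced by hypothesis (1).

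For the inequality $\minrk_q(\cG)\ge\minrk_q(\cG')+1$, I start with $M$ fitting $\cG$ of minimum rank and perform the elementary row operations that subtract $m_{j,i_1}\cdot(\text{row }i_1)$ from row $j$ for every $j\ne i_1$, obtaining $\tilde M$ with $\rk(\tilde M)=\rk(M)$. By construction the $i_1$-column of $\tilde M$ equals $e_{i_1}$, and for $j\ne i_1$ the $i_2$-entry satisfies $\tilde M_{j,i_2}=m_{j,i_2}-\alpha m_{j,i_1}$. Let $M^*$ denote the submatrix of $\tilde M$ indexed by $\cV\setminus\{i_1\}$ in both rows and columns. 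A short case analysis shows $M^*$ fits $\cG'$: the columns indexed by $k\ne i_2$ are inherited unchanged from $M$, while $(j,i_2)\notin\cE'$ is equivalent to $(j,i_2)\notin\cE$ and $(j,i_1)\notin\cE$, which forces $\tilde M_{j,i_2}=0$. A standard block-rank computation using $\tilde M^{i_1}=e_{i_1}$ and $\tilde M_{i_1}=e_{i_1}+\alpha e_{i_2}$ yields $\rk(\tilde M)=1+\rk(M^*)\ge 1+\minrk_q(\cG')$.

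For the reverse inequality I invert the above transformation. Starting from an optimal $M'$ fitting $\cG'$, I assemble $\tilde M$ on vertex set $\cV$ by placing $M'$ in the $(\cV\setminus\{i_1\})\times(\cV\setminus\{i_1\})$ block, setting row $i_1$ equal to $e_{i_1}+e_{i_2}$ (so $\alpha=1$) and column $i_1$ equal to $e_{i_1}$, which gives $\rk(\tilde M)=1+\rk(M')$. I then recover a matrix $M$ by adding back multiples of row $i_1$: for each $j\ne i_1$ choose $m_{j,i_1}=-m'_{j,i_2}$ if $(j,i_1)\in\cE$, and $m_{j,i_1}=0$ otherwise. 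This choice is admissible for a matrix fitting $\cG$, and it ensures $M_{j,i_2}=0$ precisely at the problematic pairs where $(j,i_1)\in\cE$ and $(j,i_2)\notin\cE$; all other entries of $M$ agree with those of $\tilde M$ and satisfy the constraints of $\cG$ automatically. Since row operations preserve rank, $\rk(M)=1+\rk(M')=1+\minrk_q(\cG')$, and the upper bound follows.

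The principal obstacle is bookkeeping: one must track carefully which entries of $\tilde M$ are forced to zero by $\cG'$ versus by $\cG$. The only column where the two sets of constraints differ is the $i_2$-column, and it is precisely there that the $i_1$-column of $M$ plays its role of absorbing the transferred arcs $\{(j,i_2)\mid(j,i_1)\in\cE\}$. No hypothesis on $q$ is required beyond $1\ne 0$, which explains why the statement holds over every finite field.
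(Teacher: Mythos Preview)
Your proof is correct. Both inequalities are established by valid elementary-row-operation arguments, and the verification that the resulting matrices fit $\cG'$ (respectively $\cG$) is accurate: the only column whose fitting constraints change under contraction is the $i_2$-column, and your row operations handle exactly that column while leaving all others untouched.

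The approach is the same in spirit as the paper's---both exploit the fact that row $i_1$ has support $\{i_1,i_2\}$---but your execution is different and cleaner. The paper proves $\minrk_q(\cG')\le\minrk_q(\cG)-1$ by splitting into cases $\alpha=0$ and $\alpha\ne 0$; in the latter case it fixes a maximal independent set of rows, adds column~$1$ to column~$2$, and then inserts ad hoc corrections $-\lambda_1(1+\alpha)$ into the dependent rows to keep the rank bounded. The converse is handled symmetrically, again with a case split on whether $j$ is an in-neighbour of $i_1$ and with explicit tracking of linear-dependence coefficients. Your argument replaces all of this by a single uniform step: sweep column $i_1$ to $e_{i_1}$ using row $i_1$, then delete the $i_1$-row and $i_1$-column, obtaining a block-triangular form that gives $\rk(\tilde M)=1+\rk(M^*)$ directly. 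This avoids the $\alpha=0$ case split, avoids choosing a basis of rows, and makes the fitting verification for $\cG'$ a two-line check. The reverse direction is the transparent inverse of this process. What your approach buys is brevity and the elimination of bookkeeping on linear-dependence coefficients; what the paper's more explicit construction buys is nothing additional here, though it may make the entries of the resulting matrix more visibly tied to the original data.
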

 
\begin{proof}
Let $M=(m_{i,j})$ be a matrix that fits $\cG$ of minimum rank. We may assume that $i_1=1$ and $i_2=2$ so that the first two rows of $M$ are 

$$
M_1=(1,\ga,0,\dots,0)
$$
and
$$
M_2=(0,1,m_{2,3},\dots,m_{2,n}).
$$
If $\ga=0$ then it is easy to check that deleting the first row and the first column of $M$ we obtain $M'$ of rank $\rk
(M)-1$ that fits $\cG'$.

Now suppose that $\ga\ne 0$. We may assume that the rows $M_1,M_2,\dots,M_{\minrk_q(\cG)}$ are linearly independent.

For each vertex $i \in \cV\setminus\{1\}$, label the corresponding vertex in $\cV'$ by $i-1$. %  the vertices of $\cG'$ with $\{i-1\mid \mbox{ for }i \in \cV\setminus\{1\}\}$, that is the vertex $2$ becomes $1$, $3$ becomes $2$ and so on, we 
Then construct the $(n-1) \times (n-1)$ matrix $M'$ whose $i$-th row is obtained from the $i+1$-th row of $M$ in the following way: for $i=1,\dots, \minrk_q(\cG)-1$
let
$$
M'_i=(m_{i+1,1}+m_{i+1,2},m_{i+1,3},\dots,m_{i+1,n}),
$$
and for $i=\minrk_q(\cG),\dots,n-1$ we define

$$
M'_{i}=(m_{i+1,1}+m_{i+1,2}-\gl_1(1+\ga),m_{i+1,3},\dots,m_{i+1,n})
$$
where $\gl_1 \in \fq$ satisfies $M_{i+1}=\sum_{r=1}^{\minrk_q(\cG)} \gl_{r} M_r$ for some $\gl_r$. The matrix $M'$ fits $\cG'$, so
$$
\minrk_q(\cG')\le \rk
(M')\le\minrk_q(\cG)-1.
$$

Conversely, let $M'=(m'_{i,j})$ be a matrix that fits $\cG'$ having rank $\minrk_q(\cG')$ and suppose the rows $M'_1,M'_2,\dots,M'_{\minrk_q(\cG')}$ are linearly independent. Let $I=\{j\,|\,(j,1)\in \cE\}$ be the set of vertices of $\cG$ with outgoing arcs directed to $1$.  We construct the matrix $M$ such that
$$
M_1=(1,-1,0,\dots,0),
$$
$$
M_i=(m'_{i-1,1},0,m'_{i-1,2},\dots,m'_{i-1,n-1}),
$$
for $i\in I \cap \{2,\dots,\minrk_q(\cG')+1\}$ and %$i\in I$, and
$$
M_i=(0,m'_{i-1,1},m'_{i-1,2},\dots,m'_{i-1,n-1}),
$$
for $i\in ([n] \backslash I) \cap \{2,\dots,\minrk_q(\cG')+1\}$. 
For $i>\minrk_q(\cG')+1$ we have that the $i-1$-th row of $M'$ is given by\\
\vskip 0.1cm
$$
M'_{i-1}=\sum_{r=1}^{\minrk_q(\cG')} \gl_r M'_r,
$$
for some $\gl_r\in\fq$. If $i \in I$, we put
$$
M_i=\left(m'_{i-1,1},0,m'_{i-1,2},\dots,m'_{i-1,n-1}\right)
$$
and hence obtain
$$
M_i=\gl M_1+\sum_{r=2}^{\minrk_q(\cG')+1} \gl_{r-1} M_r
$$
where the $\gl_r$ are the coefficients in the linear combination of $M'_{i-1}$, with respect to the first $\minrk_q(\cG')$ rows of $M'$, and $\gl=\sum_{r\notin I}\gl_{r-1}\INSe{m'_{r-1,1}}$.
If $i\notin I$ we set 
$$
M_i=\left(0,m'_{i-1,1},m'_{i-1,2},\dots,m'_{i-1,n-1}\right)
$$
and we have
$$
M_i=\gl M_1+\sum_{r=2}^{\minrk_q(\cG')+1} \gl_{r-1} M_r
$$
where $\gl=-\sum_{r\in I}\gl_{r-1}\INSe{m'_{r-1,1}}$.

Then $M$ fits $\cG$ and

$$
\minrk_q(\cG)\le \rk(M)\le\minrk_q(\cG')+1.
$$
~\end{proof}

Note that the digraph $\cG'$ of Lemma \ref{lm:reduction} is the contraction of the digraph $\cG$ along the arc $(i_1,i_2)$.

\begin{example}
Let $\cG$ and $\cG'$ be the two digraphs shown in Figure \ref{fig:contrazione}. The nodes $1$ and $2$ of $\cG$ satisfy the conditions \INSe{of Lemma \ref{lm:reduction}},
so we can reduce $\cG$ to $\cG'$. Consider the matrix
$$
M=\left[\begin{array}{cccc}
1&-1&0&0\\
0&1&1&1\\
1&0&1&1\\
1&0&1&1\end{array}
\right],
$$
which fits $\cG$. We have $M_3=M_4=M_1+M_2$, constructing $M'$ as in the lemma above we obtain

$$
M'=\left[\begin{array}{ccc}
1&1&1\\
1&1&1\\
1&1&1\end{array}
\right].
$$
$M'$ fits $\cG'$. Conversely, from $M'$ we obtain $M$, and $\rk
(M)=\rk
(M')+1$.
\begin{figure}[!ht]\label{fig:contrazione}
\centering
\subfigure[$\cG$]{
\begin{tikzpicture}[->,>=stealth',shorten >=0.5pt,auto,node distance=1.5cm,
  thick,main node/.style={circle,fill=black!20,draw,font=\sffamily\bfseries}]

  \node[main node] (1)  {3};
  %\node[main node] (5) [below right of=1] {5};
  \node[main node] (4) [below of=1] {4};
  \node[main node] (2) [above right of=1] {1};
  \node[main node] (3) [above left of=1] {2};
  \path[every node/.style={font=\sffamily\small}]
    (1) edge node [left] {}(2)
        edge [bend right] node[left] {}(4)
       
    (2) edge [bend right] node [right]{} (3)

    (3) edge node [right]{} (1)
        edge [bend right] node [right]{} (4)
        
    (4) edge [bend right] node [right]{} (1)
    edge [bend right] node [right]{} (2)
        
        ;

\end{tikzpicture}
%\caption{base}\label{fig:3}
}
\quad\quad
\subfigure[$\cG'$]{

\begin{tikzpicture}[->,>=stealth',shorten >=0.5pt,auto,node distance=1.5cm,
  thick,main node/.style={circle,fill=black!20,draw,font=\sffamily\bfseries}]

  \node[main node] (1)  {2};
  %\node[main node] (5) [below right of=1] {5};
 % \node[main node] (4) [below of=1] {4};
  \node[main node] (2) [ right of=1] {1};
  \node[main node] (3) [left of=1] {3};
  \path[every node/.style={font=\sffamily\small}]
    (1) edge [bend right] node [left] {}(2)
           edge [bend right] node[left] {}(3)
       
    (2) edge [bend right] node [right]{} (3)
        edge [bend right] node [right]{} (1)
        
    (3) edge [bend right] node [right]{} (1)
        edge [bend right] node [right]{} (2)
        
     ;

\end{tikzpicture}
}
\caption{Contraction graph}
\end{figure}
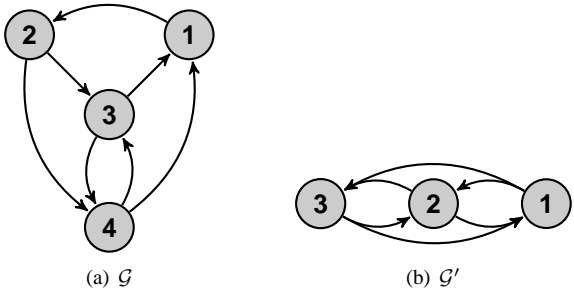

\end{example}

\begin{lemma}\label{lm:n-2}
Let $\cG$ be a directed graph of order $n$ such that $\gt(\cG)=2$. Then $\minrk_q(\cG)=n-2$, for any $q>n$.
\end{lemma}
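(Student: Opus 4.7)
The lower bound $\minrk_q(\cG) \ge n - 2$ is immediate from Theorem~\ref{th:cap}, since $\ga(\cG) = n - \gt(\cG) = n - 2$. For the upper bound, my plan is to argue by induction on $n$, combining vertex deletion with applications of the contraction Lemma~\ref{lm:reduction}. Small base cases (the smallest $n$ compatible with $\gt(\cG) = 2$ is $n = 4$) can be dispatched by exhibiting rank-$2$ fitting matrices directly.

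In the inductive step I would first strip out sinks and sources: if $\deg_O(v) = 0$ then $v$ sits in no circuit, so $\gt(\cG \setminus \{v\}) = 2$, and the $v$-th row of any fitting matrix is forced to be the unit vector $\mathbf{e}_v$, yielding $\minrk_q(\cG) = 1 + \minrk_q(\cG \setminus \{v\})$; an analogous argument on columns handles sources. If some remaining vertex $v$ has $\deg_O(v) = 1$ with unique out-neighbor $w$ and $(w, v) \notin \cE$, then Lemma~\ref{lm:reduction} gives $\minrk_q(\cG) = \minrk_q(\cG') + 1$ for the contracted digraph $\cG'$ on $n - 1$ vertices. The key point is that $\gt(\cG') = 2$: any feedback set of $\cG$ yields one of $\cG'$ of no larger size, and the merged vertex cannot itself be a single-vertex feedback set of $\cG'$, since otherwise $\{v, w\}$ would be a feedback set of $\cG$, and the hypothesis $\deg_O(v) = 1$ (which forces $v$ to lie in no circuit of $\cG \setminus \{w\}$) would promote $\{w\}$ alone to a feedback set of $\cG$, contradicting $\gt(\cG) = 2$. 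Induction then gives $\minrk_q(\cG') \le (n-1) - 2$, hence $\minrk_q(\cG) \le n - 2$.

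In the residual case $\cG$ admits no such reducible vertex, so either $\min_v \deg_O(v) \ge 2$, or every vertex of out-degree $1$ forms a $2$-cycle with its unique out-neighbor. The first subcase is settled by a digraph version of Proposition~\ref{prop:bound}, which is precisely the reason for the hypothesis $q > n$, and gives $\minrk_q(\cG) \le n - \min_v \deg_O(v) \le n - 2$. In the second subcase, a $2$-cycle $\{u, v\}$ together with any vertex-disjoint further circuit produces $\nu(\cG) \ge 2$ and the circuit-packing bound closes the argument; if no such disjoint circuit exists then $\{u, v\}$ is itself a feedback set of $\cG$, and I would then produce a rank-$(n - 2)$ fitting matrix by solving the $2 \times 2$ Schur-complement system attached to the feedback pair. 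The principal obstacle will be rigorously handling this last subcase, where $\nu(\cG) = 1$ and neither the clique cover nor the circuit packing bound alone attains $n - 2$, so an explicit construction is necessary.
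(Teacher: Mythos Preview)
Your overall strategy coincides with the paper's: iteratively contract via Lemma~\ref{lm:reduction} until every remaining vertex has out-degree at least $2$, then apply Proposition~\ref{prop:bound}. The inductive framing versus the paper's iterative one is only cosmetic.

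The ``principal obstacle'' you flag in subcase~(b-ii), however, is a phantom. There you have $\deg_O(u)=1$ with unique out-neighbour $v$, the arc $(v,u)\in\cE$, and you suppose that no circuit is vertex-disjoint from $\{u,v\}$, i.e.\ every circuit meets $\{u,v\}$. But the only out-arc from $u$ is $(u,v)$, so every circuit through $u$ also passes through $v$; combined with your assumption, every circuit of $\cG$ passes through $v$, whence $\{v\}$ alone is a feedback vertex set and $\gt(\cG)\le 1$, contradicting the hypothesis $\gt(\cG)=2$. Thus (b-ii) is vacuous and no Schur-complement construction is required. This is precisely the observation the paper uses: it first disposes of $\nu(\cG)=2$ via the circuit-packing bound, restricts to $\nu(\cG)=1$, and then shows that after exhausting Lemma~\ref{lm:reduction} the $2$-cycle configuration of your case~(b) cannot occur, by exactly the argument above.

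A minor point: when arguing $\gt(\cG')=2$ after a contraction you exclude only the merged vertex as a singleton feedback set of $\cG'$. You should also rule out any other vertex $z$: a circuit of $\cG$ avoiding $z$ either avoids the deleted vertex $v$ entirely (and so persists in $\cG'$), or passes through $v$ and hence through $w$, contracting to a circuit in $\cG'$ through $w\ne z$ that still avoids $z$.
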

\begin{proof}
As observed in Theorem \ref{th:cap}, $n-\tau(\cG) \leq \minrk_q(\cG)$, so we need only to prove that $\minrk_q(\cG)\le n-2$.

We may suppose without loss of generality that there does not exist $i\in \cV$ with out-degree less than $1$, otherwise, from Lemma \ref{lm:reduction} we can delete the node $i$ and consider the induced subgraph $\cG'$, which satisfies $\minrk_q(\cG')=\minrk_q(\cG)-1$.

Since $\gt(\cG)=2$, we have $\nu(\cG) \in \{1,2\}$. \INSe{Since $\minrk_q(\cG) \leq n- \nu(\cG)$, if $\nu(\cG)=2$ then we have our claim immediately. Assume then that $\nu(\cG)=1$. We apply Lemma \ref{lm:reduction}, iteratively. Note that each time we reduce a graph $\cG$ by an appropriate arc contraction, we obtain $\cG'$ with $\gt(\cG')=2$ and $\nu(\cG')=1$. Moreover, for each contraction of an arc of the graph}, we only shorten the circuits that pass through the node that we delete, and we do not create any new circuit from the fact that the out-degree of the node is $1$.

At the point that Lemma \ref{lm:reduction} is no longer applicable, there are two possible cases:
\begin{enumerate}
\item[1)] the out-degree of each node of the reduced graph $\cG'$ is at least $2$,
\item[2)]  there exists $i_1$ with out-degree $1$ and $(i_1,i_2),(i_2,i_1)\in \cE'$.
\end{enumerate}
This last case is not possible, in fact if we consider the circuit $C=(i_1,i_2)$, from $\gt(\cG')=2$ we have that there exists a circuit $C'$  which remains after deleting $i_2$. Then, $C'$ does not pass through $i_1$ otherwise it has to pass through $i_2$. Then $C$ and $C'$ are 
 disjoint, but this is not possible because $\nu(\cG')=1$.

Therefore, reducing $\cG$ we obtain $\cG'$ with $k$ fewer nodes and all nodes have out-degree at least $2$. Then from Proposition \ref{prop:bound} and Lemma \ref{lm:reduction} it follows that
$$
\minrk_q(\cG)=\minrk_q(\cG')+k \le n-2.
$$
~\end{proof}

\begin{corollary}
Let $\cG$ be a directed graph of order $n$ such that $\gt(\cG)=2$. Then for any $q>n$, $\minrk_q(\cG)=\gb(\cG)$.
\end{corollary}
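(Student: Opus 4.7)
The plan is to sandwich $\gb(\cG)$ between $\ga(\cG)$ and $\minrk_q(\cG)$ and observe that under the hypothesis $\gt(\cG)=2$ and $q>n$ these two bounds coincide. First I would invoke Lemma~\ref{lm:n-2}, which gives $\minrk_q(\cG)=n-2$ for every $q>n$. Since $\ga(\cG)=n-\gt(\cG)=n-2$ by definition, we already have $\ga(\cG)=\minrk_q(\cG)$.

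Next I would use two standard inequalities valid for every digraph: the broadcast rate is at most the length of any valid (in particular, linear) index code, so $\gb(\cG)\le \minrk_q(\cG)$; and the broadcast rate is at least the size of a maximum acyclic induced subgraph, so $\ga(\cG)\le \gb(\cG)$. The latter is the well-known ``acyclic-subgraph'' lower bound from Bar-Yossef \emph{et al.}: if $\cW\subseteq \cV$ induces an acyclic subgraph, then restricting any IC of $\cI$ to the users in $\cW$ produces an IC for that acyclic sub-instance, which by Theorem~\ref{cor:min-rank} (together with the fact that acyclic digraphs have min-rank equal to their order) cannot have length smaller than $|\cW|$. Taking $t\to\infty$ preserves this inequality, so $\gb(\cG)\ge \ga(\cG)$.

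Chaining these facts gives
\[
n-2=\ga(\cG)\le \gb(\cG)\le \minrk_q(\cG)=n-2,
\]
which forces equality throughout. The only non-routine step is the lower bound $\ga(\cG)\le \gb(\cG)$, which is not explicitly stated earlier in the paper; however, it is a well-established fact in the index coding literature and a short justification via the above induced-subgraph argument is enough. All remaining ingredients are immediate consequences of Lemma~\ref{lm:n-2} and the definition of $\gt(\cG)$.
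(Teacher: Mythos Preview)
Your argument is correct and is exactly the intended one: the paper states this corollary without proof because it follows immediately from Lemma~\ref{lm:n-2} together with the standard chain $\ga(\cG)\le\gb(\cG)\le\minrk_q(\cG)$, which collapses since $\ga(\cG)=n-\gt(\cG)=n-2=\minrk_q(\cG)$. Your justification of the lower bound $\ga(\cG)\le\gb(\cG)$ via the acyclic induced subgraph argument of Bar-Yossef \emph{et al.}\ is precisely the standard one and fills in the only step the paper leaves implicit.
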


We have now our main result of this section. 
\begin{corollary}\label{th:bound}
Let $\cG$ a graph of order $n$ and let $q>n$. Then $\minrk_q(\cG)= n-1$ if and only if $\gt(\cG)=1$. Moreover in that case we have $\gb(\cG)=n-1$ if and only if $\gt(\cG)=1$.
\end{corollary}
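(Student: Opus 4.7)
My plan is to combine the sandwich $\ga(\cG)\le\gb(\cG)\le\minrk_q(\cG)$ with the circuit-packing bound $\minrk_q(\cG)\le n-\nu(\cG)$, and to reduce the non-trivial implication to a mild strengthening of Lemma~\ref{lm:n-2}. Throughout, $q>n$ is fixed. The sufficiency direction is immediate: if $\gt(\cG)=1$ then $\cG$ contains at least one circuit, so $\nu(\cG)\ge 1$ and the circuit-packing bound yields $\minrk_q(\cG)\le n-1$, while Theorem~\ref{th:cap} supplies the matching lower bound $\ga(\cG)=n-\gt(\cG)=n-1$; hence $\minrk_q(\cG)=n-1$, and the standard inequality $\ga\le\gb$ then forces $\gb(\cG)=n-1$ via the sandwich. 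Conversely, if $\minrk_q(\cG)=n-1$ or $\gb(\cG)=n-1$, then $\cG$ is not acyclic and $\gt(\cG)\ge 1$, so the whole proof reduces to showing that $\gt(\cG)\ge 2$ implies $\minrk_q(\cG)\le n-2$ whenever $q>n$.

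I would prove this claim by induction on $n$, mirroring the three-way case split that drives Lemma~\ref{lm:n-2}. If every vertex of $\cG$ has out-degree at least $2$, Proposition~\ref{prop:bound} delivers the bound directly. If some vertex $v$ has out-degree $0$, then $v$ lies in no circuit, so $\gt(\cG\setminus\{v\})=\gt(\cG)\ge 2$, and since $\minrk_q(\cG)=\minrk_q(\cG\setminus\{v\})+1$ the inductive hypothesis closes the case. Otherwise some $i_1$ has out-degree exactly $1$ with unique out-neighbour $i_2$; if $(i_2,i_1)\in\cE$ then $\{i_1,i_2\}$ is a $2$-circuit $C$, and because $\gt(\cG)\ge 2$ a further circuit $C'$ must survive in $\cG\setminus\{i_2\}$. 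Since $i_1$'s only out-arc has just been deleted, $C'$ cannot visit $i_1$, so it is vertex-disjoint from $C$, yielding $\nu(\cG)\ge 2$ and hence $\minrk_q(\cG)\le n-2$ by circuit-packing.

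The main obstacle is the remaining sub-case $(i_2,i_1)\notin\cE$, in which I would contract along $(i_1,i_2)$ via Lemma~\ref{lm:reduction} to obtain $\cG'$ of order $n-1$ with $\minrk_q(\cG)=\minrk_q(\cG')+1$; to apply induction I need $\gt(\cG')\ge\gt(\cG)\ge 2$. The key observation is that because $i_1$'s unique out-neighbour is $i_2$, every circuit of $\cG$ through $i_1$ is forced to continue through $i_2$; consequently, if $F'$ is any feedback vertex set of $\cG'$ then, viewed as a subset of $V(\cG)$, it already hits every circuit of $\cG$ (circuits avoiding $i_1$ are literally identical in $\cG$ and $\cG'$, while circuits through $i_1$ contract to circuits through $i_2$ in $\cG'$, which are hit by $F'$). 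Hence $\gt(\cG')\ge\gt(\cG)\ge 2$, induction provides $\minrk_q(\cG')\le n-3$, and so $\minrk_q(\cG)\le n-2$, contradicting $\minrk_q(\cG)=n-1$ and forcing $\gt(\cG)=1$. The broadcast-rate equivalence then follows from exactly the same dichotomy together with the sandwich $\ga(\cG)\le\gb(\cG)\le\minrk_q(\cG)$.
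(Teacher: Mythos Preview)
Your argument is correct. The forward direction matches the paper's, and for the converse you prove the stronger implication $\gt(\cG)\ge 2\Rightarrow\minrk_q(\cG)\le n-2$ by a clean induction on $n$, using exactly the same toolkit the paper uses for Lemma~\ref{lm:n-2}: the contraction of Lemma~\ref{lm:reduction}, Proposition~\ref{prop:bound}, and circuit packing. The only difference is organizational. The paper first isolates the case $\gt(\cG)=2$ as Lemma~\ref{lm:n-2} and then, in the corollary, reduces $\gt(\cG)\ge 2$ to $\gt=2$ by passing to a spanning subgraph $\cG'$ with $\gt(\cG')=2$ (using that $\minrk_q(\cG)\le\minrk_q(\cG')$ when $\cE'\subseteq\cE$). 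You instead absorb this reduction into the induction by proving the monotonicity $\gt(\cG')\ge\gt(\cG)$ under the contraction of Lemma~\ref{lm:reduction}; this is a slightly sharper statement than what the paper records (there the invariance of $\gt$ and $\nu$ is only asserted for the specific values $\gt=2$, $\nu=1$), and it lets you bypass the spanning-subgraph step entirely. Your handling of the out-degree-$0$ case and the $\gb$ sandwich is also fine and mirrors what the paper uses implicitly.
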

\begin{proof}
If $\gt(\cG)=1$ then $\nu(\cG)=1$ and we have $\minrk_q(\cG)= n-1$.

Conversely towards a contradiction assume that $\gt(\cG)\ge 2$. Then consider a subgraph $\cG'$ of $\cG$ with $\gt(\cG)=2$. From Lemma \ref{lm:n-2} we have our claim.
\end{proof}

This last theorem implies that the problem of deciding whether or not a digraph has min-rank $n-1$, over a sufficiently large field, can be solved in polynomial time, using a depth-first search algorithm (see for instance \cite{Cormen:2001:IA:580470}) that verifies in a polynomial time whether or not a graph is acyclic.

\begin{corollary}
Let $\cG$ be a digraph of order $n$ and $q>n$. Then deciding whether $\minrk_q(\cG) =n-1$ can be done in polynomial time ($\cO(n^3)$).
\end{corollary}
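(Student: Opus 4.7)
The plan is to reduce the decision problem to testing whether the feedback vertex number satisfies $\gt(\cG)=1$, and then to apply a standard DFS-based acyclicity test. By Corollary \ref{th:bound}, we have $\minrk_q(\cG)=n-1$ if and only if $\gt(\cG)=1$, so it suffices to decide the latter condition in $\cO(n^3)$ time.

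First I would run depth-first search on $\cG$ itself, which takes $\cO(|\cV|+|\cE|)=\cO(n^2)$ time since any digraph on $n$ vertices has at most $n^2$ arcs, to check whether $\cG$ is acyclic. If $\cG$ is acyclic then $\gt(\cG)=0$, hence (as noted just before Theorem~\ref{th:cap}) $\minrk_q(\cG)=n\ne n-1$, and the algorithm returns \emph{no}.

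Otherwise, I would iterate over each vertex $v\in\cV$, form the induced subgraph $\cG-v$ (in $\cO(n^2)$ time from the adjacency representation), and test acyclicity of $\cG-v$ by DFS in $\cO(n^2)$ time. If for some $v$ the subgraph $\cG-v$ is acyclic, then $\{v\}$ is a feedback vertex set and, since the previous step already excluded $\gt(\cG)=0$, we conclude $\gt(\cG)=1$; by Corollary \ref{th:bound} the answer is \emph{yes}. If no such $v$ exists, then $\gt(\cG)\ge 2$, and by Lemma \ref{lm:n-2} applied to an appropriate subgraph we have $\minrk_q(\cG)\le n-2$, so the answer is \emph{no}. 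This outer loop performs $n$ iterations each of cost $\cO(n^2)$, giving the claimed $\cO(n^3)$ overall complexity.

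There is essentially no obstacle beyond carefully invoking Corollary \ref{th:bound}: the algorithmic content is the well-known fact that acyclicity of a digraph can be tested by DFS in linear time in the size of its representation, which here is $\cO(n^2)$. The only minor point worth checking is that one need not consider larger candidate feedback sets, since the characterization reduces the question to the single threshold $\gt(\cG)=1$.
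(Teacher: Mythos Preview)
Your proposal is correct and matches the paper's approach: the paper does not give a formal proof of this corollary, but the sentence preceding it explains that the result follows from Corollary~\ref{th:bound} by reducing to the test $\gt(\cG)=1$ and using depth-first search to check acyclicity, which is exactly what you spell out in detail. Your explicit $\cO(n^3)$ accounting (one DFS on $\cG$ plus $n$ DFS runs on the subgraphs $\cG-v$, each $\cO(n^2)$) fills in the complexity bound that the paper merely asserts.
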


\INSe{\begin{remark}
		In the final stages of the writing of this paper we learned of Ong's result \cite{Ong14}.
		In fact Lemma \ref{lm:n-2}, (although obtained independently) and its immediate corollary follows from \cite[Theorem 1]{Ong14}, which is a stronger result, since it holds without any restrictions on $q$. That is, 
		\begin{theorem}[\cite{Ong14}]\label{th:Ong}
			Let $\cG$ be a directed graph of order $n$ satisfying $\gt(\cG)\leq 2$. Then 
			$$\minrk_q(\cG)=\gb(\cG) = n-\gt(\cG).$$
		\end{theorem}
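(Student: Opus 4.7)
The plan is to prove the two bounds $n-\gt(\cG)\le \gb(\cG)$ and $\minrk_q(\cG)\le n-\gt(\cG)$ separately, and then sandwich to conclude $\gb(\cG)=\minrk_q(\cG)=n-\gt(\cG)$ for every $q$. The lower bound is essentially given for us: Theorem~\ref{th:cap} yields $\ga(\cG)\le \minrk_q(\cG)$, and a standard cut-set / entropy argument (apply the decodability condition to the acyclic induced subgraph witnessing $\ga(\cG)$) shows $\ga(\cG)\le \gb(\cG)$. Since $\ga(\cG)=n-\gt(\cG)$ by definition, both $\gb(\cG)$ and $\minrk_q(\cG)$ are at least $n-\gt(\cG)$.

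For the upper bound, the natural split is on the value of $\gt(\cG)\in\{0,1,2\}$. The case $\gt(\cG)=0$ is trivial (the identity matrix fits). For $\gt(\cG)=1$ at least one circuit exists, so $\nu(\cG)\ge 1$, and the classical sum-along-a-cycle scheme from \cite{6034005} produces a scalar-linear code of length $n-1$; this construction uses only the additive structure of the field and hence works over every $\fq$. Analogously, if $\gt(\cG)=2$ and $\nu(\cG)=2$, applying the same scheme to two vertex-disjoint circuits yields $\minrk_q(\cG)\le n-2$ for every $q$.

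The delicate case is $\gt(\cG)=2$ with $\nu(\cG)=1$, which is precisely where Lemma~\ref{lm:n-2} invoked Proposition~\ref{prop:bound} and picked up the restriction $q>n$. My plan is to argue by induction on $n$, iteratively applying the contraction operation of Lemma~\ref{lm:reduction}, which is field-independent and preserves both $\gt$ and $\nu$ in the relevant configuration. Once no further contraction is available we are reduced, as in the proof of Lemma~\ref{lm:n-2}, to a digraph $\cG'$ in which every vertex has out-degree at least $2$, still with $\gt(\cG')=2$ and $\nu(\cG')=1$. At this point I would give a direct construction of a matrix $M$ of rank $|\cG'|-2$ that fits $\cG'$ using only $0/1$ entries: fix two vertices $u,v$ realising the feedback set, topologically order the acyclic induced subgraph on $\cG'\setminus\{u,v\}$, and, for each such vertex $w$, exploit the out-degree$\ge 2$ condition to choose a row whose nonzero off-diagonal entries lie among out-neighbours of $w$ and include either $u$ or $v$. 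A counting/linear-dependence argument using the fact that every remaining cycle passes through $\{u,v\}$ forces two linear dependences among the rows, giving rank exactly $|\cG'|-2$.

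The hard part is the last construction: producing rank drop $2$ by two structural dependences that are guaranteed to hold over $\mathbb{F}_2$ and therefore over every field. Everything else (lower bound, trivial cases, reduction, bounds on $\gb(\cG)$) is essentially mechanical once this is in hand, since then the chain
$$
n-\gt(\cG)=\ga(\cG)\le \gb(\cG)\le \gb_t(\cI)\le \minrk_q(\cG)\le n-\gt(\cG)
$$
collapses and forces equality throughout, uniformly in $q$.
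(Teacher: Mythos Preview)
The paper does not prove this theorem; it is quoted from \cite{Ong14}, and the surrounding remark only sketches Ong's method: one shows that $\cG$ contains a particular structural subgraph $\cG_{\rm sub}$ (two suitably interlocking cycles through the feedback pair) and then designs an explicit binary scalar-linear index code from $\cG_{\rm sub}$, via a careful case analysis. So there is no detailed argument in the paper to compare against; what can be compared is your plan versus Ong's overall strategy.

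Your outline correctly reduces everything to the single configuration $\gt(\cG')=2$, $\nu(\cG')=1$, every out-degree $\ge 2$ (the contraction Lemma~\ref{lm:reduction} is field-independent, and the other cases are handled by the cycle-saving scheme over any $q$). The gap is precisely in the final construction, and it is a real one, not merely an omission of detail. Your sketch asserts that for each $w\in\cV(\cG')\setminus\{u,v\}$ one can ``choose a row whose nonzero off-diagonal entries lie among out-neighbours of $w$ and include either $u$ or $v$''. But after contraction a vertex of out-degree $\ge 2$ need not have any out-arc into $\{u,v\}$. For a concrete witness on five vertices take $u\leftrightarrow v$, $u\to w_1$, $v\to w_2$, $w_1\to w_2$, $w_1\to w_3$, $w_2\to w_3$, $w_2\to u$, $w_3\to u$, $w_3\to v$. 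Here $\{u,v\}$ is a minimum feedback vertex set, $\nu=1$, every vertex has out-degree exactly $2$, and $w_1$ has no arc to $\{u,v\}$; the row for $w_1$ in any fitting matrix is forced to be $(0,0,1,\ast,\ast)$. So the proposed row pattern cannot be realised in general, and the downstream ``two structural dependences'' claim has nothing to stand on. (Indeed, for this very example the all-ones choice of free entries gives a rank-$5$ matrix over $\FF_2$, so the dependences do not come for free.)

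This is exactly the place where Ong's argument does the work: rather than building rows of a fitting matrix, he proves a graph-theoretic lemma guaranteeing a specific pair of cycles through $u$ and $v$ with controlled intersection, and reads off a length-$(n-2)$ binary code directly from that subgraph. Your route via Lemma~\ref{lm:reduction} plus a direct rank-drop construction would be a genuinely different and attractive proof if it could be completed, but to do so you would need a replacement structural lemma about $\cG'$ (e.g.\ describing how paths from each $w$ eventually reach $\{u,v\}$ in a way that can be encoded into $0/1$ rows with two guaranteed $\FF_2$-dependences). As written, that lemma is missing.
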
	  		
	    The proof of Theorem \ref{th:Ong} relies on showing that $\cG$ contains a particular subgraph $\cG_{{\rm sub}}$ and then devising a coding scheme for $\cG$ based on the existence of $\cG_{{\rm sub}}$. The proof given in \cite{Ong14} is a non-trivial graph-theoretic proof and goes through a careful case-by-case analysis. The proof of Lemma \ref{lm:n-2} given here is rather more straightforward, being based on the construction of a new graph $\cG'$ obtained by iterative contractions of the original graph $\cG$, following from Lemma \ref{lm:reduction}. Such a result could be helpful also to decrease the size of a graph and thus to optimize the computation of the min-rank of the graph.
	    The hypothesis that $q>n$ follows since we invoke the partition multicast solution (Proposition \ref{prop:bound}), therefore requiring the existence of a maximum distance separable code. 
\end{remark}}

 In the following table we report the values of the min-rank for graphs and directed graphs with near-extreme min-rank (i.e. $1,2,n-2,n-1$ and $n$).
 %, answering a open case given in \cite{CGC-mis-art-dau2014optimal}.

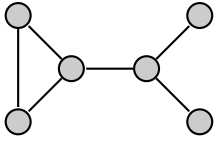
\begin{figure}[!h]\label{fig:graph}
\centering
\begin{tikzpicture}[scale=0.85,-,>=stealth',shorten >=0.5pt,auto,node distance=1cm,
  thick,main node/.style={circle,fill=black!20,draw,font=\sffamily\bfseries}]

  \node[main node] (1)  {};
   \node[main node] (3) [left of=1] {};
  \node[main node] (5) [below left of=3] {};
   \node[main node] (6) [above left of=3] {};
 \node[main node] (4) [below right of=1] {};
  \node[main node] (2) [above right of=1] {};
 
  \path[every node/.style={font=\sffamily\small}]
    (1) edge [] node [] {}(4)
           edge [] node [] {}(3)
           edge [] node [] {}(2)
       
    (3) edge [] node [] {}(5)
           edge [] node [] {}(6)        
    (6) edge [] node [] {}(5)
        
     ;

\end{tikzpicture}
\caption{Forbidden subgraph}
\end{figure}
\begin{center}
\begin{tabular}[h]{| p{1.6cm} | p{3cm} | p{3cm}| }
  \hline
  {\bf Minrank} & {\bf Graph} ${\bf \cG}$  &{\bf Digraph $\cD$} \\
  \hline
  1 & $\cG$ is complete (trivial) & $\cD$ is complete (trivial)\\
    \hline
  2 & $\bar\cG$ is $2$ colorable \cite{CGC-mis-art-peeters1996orthogonal} &for $q=2$, if $\bar\cD$ is $3$-fair colorable \cite{CGC-mis-art-dau2014optimal}\\
    \hline
 $ n-2$ & $\cG$ has maximum matching $2$ and does not contain the graph in Figure 2 %\ref{fig:graph} 
              \cite{CGC-mis-art-dau2014optimal} & unknown \\
  \hline
  $n-1$ & $\cG$ is a star graph \cite{CGC-mis-art-dau2014optimal}& for $q>n$, $\gt(\cD)=1$  \INSr{Corollary} \ref{th:bound} \\
        &                                                        & \INSe{for any $q$, $\gt(\cD)=1$ Theorem \ref{th:Ong}}\\
  \hline
 $ n $ & $\cG$ has no edges (trivial) & $\cD$ is acyclic (trivial) \cite{GCG-cd-art-Bar-Yossef2006}\\
  \hline
\end{tabular}
\end{center}

\section{A bound from t-designs}\label{sec:design}

In this section we study the case for which an incidence structure, in particular a $2$-$(r^2+r+1,r+1,1)$ or projective plane, arises from the side information. \INSr{This yields an immediate upper bound on the min-rank of the hypergraph, based on known results on the ranks of incidence matrices. Furthermore, we show that secrecy and privacy are attainable for such configurations. Towards secrecy, we show that if an instance fits a projective plane, then a receiver may recover only its requested data, and no more. On the matter of privacy, we identify a constraint on the side information of an adversary hearing the broadcast such that it cannot access the receivers' requested data.}
%Moreover we study the particular case when the design is a projective plane, i. e. a $2$-$(n^2+n+1,n+1,1)$ design. W.l.o.g. we assume $t=1$.
We may assume without loss of generality that $t=1$.

\begin{definition}
We said that an instance, $\cI=(\cX , f )$, of the ICSI problem {\em contains an incidence structure} $\cS=(\cP,\cB)$ if 
\begin{itemize}
\item[$1)$] $\cP=[n]$ and $|\cB|\le m$;
\item[$2)$] for each $i\in[m]$ there exists $B\in \cB$ such that $f(i)\in B$ and $B\setminus\{f(i)\}\subseteq\cX_i$.
\end{itemize}
Moreover we said that the instance {\em coincides} with the incidence structure $\cS$ if the following condition is satisfied.
\begin{itemize}
\item[$2')$] for each $i\in[m]$ there exists $B\in \cB$ such that $f(i)\in B$ and $B\setminus\{f(i)\}=\cX_i$.
\end{itemize}
\end{definition}

 We immediately obtain the following proposition.
 
\begin{proposition}\label{th:1}
Let $\cI=(\cX , f )$ be an instance of ICSI problem and $\cH$ let be the corresponding hypergraph. \INSr{If the instance contains a $2$-$(n,k,\gl)$ design $\cD=(\cP,\cB)$ then for all $q$ a power of a prime $p$ such that $p$ divides the order of $\cD$ it holds that}
$$
\minrk_q(\INSe{\cH})\le \frac{m+1}{2}.
$$

\end{proposition}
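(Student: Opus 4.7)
The plan is to build a linear index code whose generator matrix consists of a basis of $C_p(\cD)$, and then bound its rank using Klemm's theorem (Theorem \ref{th:klemm}).

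First I would choose $L \in \fq^{N \times n}$ to be any matrix whose rows form a basis of the $q$-ary code of the design, $C_q(\cD) = \langle A \rangle$, where $A$ is an incidence matrix of $\cD$. Since $A$ is a $0$-$1$ matrix, its $\fq$-row space has the same dimension as its $\fp$-row space, so $N = \rk_q(\cD) = \rk_p(\cD)$. Because $p$ divides the order $n$ of $\cD$, Klemm's theorem applies and yields
$$
N = \rk_p(\cD) \leq \frac{|\cB|+1}{2} \leq \frac{m+1}{2},
$$
where the second inequality uses condition $(1)$ of the definition of ``contains''.

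Next I would verify the decoding conditions of Lemma \ref{lm:1} for this $L$. For each receiver $i \in [m]$, use condition $(2)$ of the ``contains'' definition to pick a block $B \in \cB$ with $f(i) \in B$ and $B \setminus \{f(i)\} \subseteq \cX_i$. Set
$$
\bu^{(i)} = \sum_{j \in B \setminus \{f(i)\}} \be_j.
$$
Then $\supp(\bu^{(i)}) = B \setminus \{f(i)\} \subseteq \cX_i$, so condition $(1)$ of Lemma \ref{lm:1} holds. Moreover, $\bu^{(i)} + \be_{f(i)}$ is precisely the incidence vector of the block $B$, which is a row of $A$, hence lies in $\langle A \rangle = \langle L \rangle$, giving condition $(2)$ of Lemma \ref{lm:1}.

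Applying Lemma \ref{lm:1}, $L$ represents a linear $\cI$-IC of length $N$ over $\fq$, and then Theorem \ref{cor:min-rank} gives $\minrk_q(\cH) \leq N \leq (m+1)/2$, as required. There is essentially no obstacle here, the only mild subtlety being the observation that the rank of the $0$-$1$ incidence matrix is preserved when passing from $\fp$ to the extension $\fq$, so that Klemm's bound is valid in the statement's full generality ``for all $q$ a power of $p$.''
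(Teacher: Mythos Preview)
Your proof is correct and follows essentially the same approach as the paper: both invoke Klemm's theorem to bound $\rk_p(\cD)\le(|\cB|+1)/2\le(m+1)/2$ and then observe that the incidence matrix of $\cD$ yields a valid linear index code for $\cI$. The only cosmetic difference is that the paper phrases the second step as ``the incidence matrix $D$ fits $\cH$'' and reads off $\minrk_q(\cH)\le\rk_q(D)$ directly from the definition of min-rank, whereas you verify the decoding conditions explicitly via Lemma~\ref{lm:1} and then appeal to Theorem~\ref{cor:min-rank}; these are equivalent formulations of the same fact.
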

\begin{proof}
Let $D$ be the incidence matrix of $\cD$. Then for the Theorem \ref{th:klemm} we have that the $p$-rank of $\cD$ is less or equal to $\frac{m+1}{2}$.

Now, it is easy to check that $D$ fits $\INSe{\cH}$, so
$$
\minrk_q(\INSe{\cH})\le \rk_q(\INSe{\cD})\le \rk_p(\cD)
$$

and that concludes the proof.
\end{proof}

\begin{remark}
To compute the min-rank of a hypergraph is an NP-hard problem {\cite{CGC-mis-art-peeters1996orthogonal}}, however, if there exists a $2$-design as in Proposition \ref{th:1} it is possible to have a bound on this value and we can use the linearly independent rows of its incidence matrix to decrease the number of transmissions.
\INSr{We remark further that this result does not require $q$ to be large, and shows the existence of a class of instances with transmission rate much less than predicted by other bounds. For example, it is known that if an instance fits the incidence matrix of a projective plane of order $r$ and $q>r^2+r+1$ then $\minrk_q({\cH}) \leq r^2+r+1 - r=r^2+1$ (see, for example \cite{byrne2015error}), which is significantly greater than the bound $\minrk_q({\cH}) \leq (r^2+r+2)/2$, given by Proposition \ref{th:1}}. 
\end{remark}

\begin{example}\label{ex:fano}
Consider the instance of the ICSI problem $\cI$ given by $n=m=7$, and  $f(i)=i$ for $i=1,\dots,7$. Let the side information be
$$
\cX_1=\{2,3\}, \,\cX_2=\{6,7\}, \, \cX_3=\{5,7\},\,\cX_4=\{2,5\},
$$
$$
\cX_5=\{1,6\},\,\cX_6=\{3,4\},\,\cX_7=\{1,4\}.
$$

Consider the blocks
$$
B_1=\{1,2,3\}, \,B_2=\{2,6,7\}, B_3=\{3,5,7\},\,B_4=\{2,4,5\},
$$
$$
B_5=\{1,5,6\},\,B_6=\{3,4,6\},\,B_7=\{1,4,7\}.
$$

These blocks form the Fano plane as in Figure 3. %\ref{fig:fano}. 
This is a $2$-$(7,3,1)$ design of order $2$ and the design is contained in the side information. The $2$-rank of the design is $4$. Then we can consider $4$ linearly independent rows of the incidence matrix of the Fano plane, and encode the message using those reducing the number of transmissions from $7$ to $4$.

\INSr{It can be checked that distribution of the ranks of the matrices that fit this incidence is given by}
	$$ \INSr{(4,1), (5,238), (6,6575), (7,9570)},$$
	thus the bound is sharply met in this instance.  
\INSr{Moreover, an optimal encoding matrix $L$ for this instance must have row space spanned be the rows of this incidence matrix; there is a unique optimal solution, up to left multiplication by an invertible matrix.}

\begin{figure}[!ht]
\centering
\includegraphics[scale=0.2]{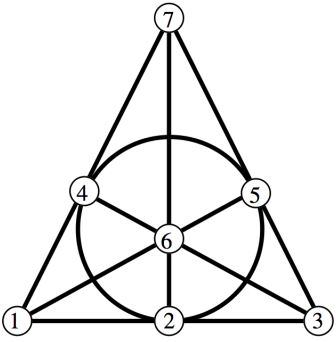}
\caption{Fano plane}
\end{figure}\label{fig:fano}

\end{example}

%\subsection{Security with projective planes}

Now we consider the case when an instance $\cI=(\cX , f )$ of the ICSI problem contains a $2$-$(r^2+r+1,r+1,1)$ design, and the matrix corresponding to the index code is composed of the linearly independent rows of the incidence matrix of the design.
We recall that a $2$-$(r^2+r+1,r+1,1)$ design has order $r$ and the code of the design over $\FF_p$, with $p$ a prime divisor of $r$, has minimum distance equal to $r+1$ (Theorem \ref{th:assm}).

\begin{theorem}
If the instance $\cI$ of the ICSI problem coincides with the $2$-$(r^2+r+1,r+1,1)$ design, then no receiver $i\in[m]$ can recover a message $X_j$ with $j\notin\cX_i\cup\{f(i)\}$.
\end{theorem}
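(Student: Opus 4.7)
The plan is to reformulate recoverability of $X_j$ by receiver $i$ in linear-algebraic terms, so that the problem reduces to the (non-)existence of a certain low-weight codeword in the code $C_p(\Pi)$ of the projective plane. The contradiction will then follow from the minimum weight characterization of $C_p(\Pi)$ combined with the defining intersection property $\gl=1$ of $\Pi$.

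First I would argue that receiver $i$ can recover $X_j$ from $(LX,X_{\cX_i})$ for every $X\in\fq^n$ if and only if there exists $\mathbf{w}\in\langle L\rangle$ with $w_j\neq 0$ and $\supp(\mathbf{w})\subseteq \cX_i\cup\{j\}$. Indeed, $X_j$ is a function of $(LX,X_{\cX_i})$ precisely when $\ker(L)\cap\{Y:Y_{\cX_i}=0\}\subseteq\{Y:Y_j=0\}$, and by duality this is equivalent to $\mathbf{e}_j\in\langle L\rangle+\langle \mathbf{e}_k:k\in\cX_i\rangle$, which is the stated condition (this is essentially Lemma \ref{lm:1} applied to the artificial demand $X_j$ in place of $X_{f(i)}$). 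Since $L$ is formed from a maximal set of linearly independent rows of the incidence matrix of $\Pi$, we have $\langle L\rangle = C_p(\Pi)$.

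Next, since $\cI$ coincides with the design, there exists a block $B$ of $\Pi$ with $f(i)\in B$ and $\cX_i=B\setminus\{f(i)\}$, so $|\cX_i|=r$. Assume towards a contradiction that $j\notin \cX_i\cup\{f(i)\}$ and that $X_j$ is recoverable. Then $|\cX_i\cup\{j\}|=r+1$, so the witnessing codeword $\mathbf{w}\in C_p(\Pi)$ has weight at most $r+1$. By Theorem \ref{th:assm}, the minimum distance of $C_p(\Pi)$ is $r+1$ and the minimum-weight codewords are exactly the nonzero scalar multiples of incidence vectors of blocks. Since $w_j\neq 0$, $\mathbf{w}$ is nonzero and thus has weight exactly $r+1$, so $\mathbf{w}=c\,\chi_{B'}$ for some block $B'$ and nonzero $c\in\fq$, forcing $B'=\cX_i\cup\{j\}=(B\setminus\{f(i)\})\cup\{j\}$.

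This already yields the contradiction: $B$ and $B'$ are distinct (because $f(i)\in B$ but $f(i)\notin B'$, as $j\neq f(i)$), yet $|B\cap B'|=|\cX_i|=r\geq 2$, contradicting the fact that in a $2$-$(r^2+r+1,r+1,1)$ design any two distinct blocks meet in exactly one point. The only delicate step is the reformulation in the first paragraph, where one must be careful that $D_i$ is not assumed linear; however, because the input map $X\mapsto (LX,X_{\cX_i})$ is linear, information-theoretic recoverability collapses to the linear condition above, after which the argument is a short application of the classical results on projective plane codes.
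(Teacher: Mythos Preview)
Your proof is correct and takes a cleaner route than the paper's. The paper allows the receiver the enlarged side information $\cX_i\cup\{f(i)\}$ (which is equivalent to your formulation, since $\mathbf{e}_{f(i)}\in\langle L\rangle+\langle\mathbf{e}_k:k\in\cX_i\rangle$ by the index-code property), so its putative codeword $\mathbf{u}+\mathbf{e}_j$ is only supported in a set of size $r+2$; it then uses just the minimum-distance part of Theorem~\ref{th:assm} together with a pigeonhole argument on the $p-1$ nonzero field values (valid because $p\mid r$ forces $r\geq p$) to manufacture a nonzero codeword of weight at most $r$. You instead keep the tighter support $\cX_i\cup\{j\}$ of size exactly $r+1$, which forces the witnessing codeword to be minimum-weight; you then invoke the second half of Theorem~\ref{th:assm} (minimum-weight codewords are scalar multiples of block incidence vectors) so that the contradiction comes directly from the intersection property $\lambda=1$ of the projective plane. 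Your route exploits more of Theorem~\ref{th:assm} but dispenses with the pigeonhole step, and your explicit observation that information-theoretic recoverability from $(LX,X_{\cX_i})$ collapses to the linear condition is a point the paper leaves implicit in its appeal to Lemma~\ref{lm:1}.
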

\begin{proof}
Let $\cD$ be the $2$-$(r^2+r+1,r+1,1)$ design. Suppose that $R_i$ wants to recover $X_j$ with $j\notin\cX_i\cup\{f(i)\}$. From Lemma \ref{lm:1} it is able to do so if and only if there exists a vector ${\bf u}\in\FF_p^{n}$, $n=r^2+r+1$, such that $\supp({\bf u})\subseteq\cX_i\cup\{f(i)\}$ and ${\bf u}+{\bf e}_j\in \INSe{C_p}(\cD)$.
If this vector is a codeword of the code, at least $r+1$ positions are different from $0$. 
Now consider the vector ${\bf 1}_{\cX_i}+{\bf e}_{f(i)}\in \INSe{C_p}(\cD)$, where ${\bf 1}_{\cX_i}$ is the vector in $\FF_p^n$ with $1$'s in the positions contained in $\cX_i$. We have $|\supp({\bf u}+{\bf e}_j)\cap \supp({\bf 1}_{\cX_i}+{\bf e}_{f(i)})|\ge r$ and also there are at least $2$ positions of ${\bf u}+{\bf e}_j$ in this intersection that have the same value (we can use only the $p-1$ values of ${\FF_p}\setminus\{0\}$ for these $r$ positions). Suppose that this value is $\ga\in\FF_p\setminus\{0\}$, then we have $d({\bf u}+{\bf e}_j,\ga({\bf 1}_{\cX_i}+{\bf e}_{f(i)}))\le r$. So ${\bf u+e_j}$ is not a codeword of $\INSe{C_p}(\cD)$, which means that $R_i$ is not able to recover $X_j$.
\end{proof}

%\begin{remark}
Encoding with a matrix whose rowspace contains the blocks of a projective plane guarantees the secrecy of the transmission.
%\end{remark}

Assume, now, the presence of an adversary $A$ who can listen to all transmissions. The adversary is assumed to possess side information $\{X_h\,|\,h\in\cX_A\subseteq [n]\}$. In \cite{GCG-cd-art-dau2012security}, it is shown that for a transmission matrix $L$ for a linear index code representing $\cI=(\cX,f)$, if $|\cX_A|\le d-2$, where $d$ is the minimum distance of the code $\langle L\rangle$, then $A$ is not able to recover an element $X_j$ with $j\notin\cX_A$. 

Consider now an instance $\cI=(\cX , f )$ of the ICSI problem containing a $2$-$(p^2+p+1,p+1,1)$ design, where $p$ is a prime number. Suppose the 
matrix $L$ as above is used as an encoding matrix. Then we obtain the following result.

\begin{theorem}
If $|\cX_A| \le 2p-2$ and for each block $B$ of the design $|\cX_A\cap B|\le p-1$, then $A$ is not able to recover $X_j$ for any $j\notin\cX_A$.

\end{theorem}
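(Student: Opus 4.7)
The plan is to argue by contradiction, using exactly the two structural facts about $C_p(\Pi)$ recorded in Theorems~\ref{th:assm} and \ref{th:pesi}. Suppose, towards a contradiction, that $A$ is able to recover some $X_j$ with $j\notin\cX_A$ from its side information $X_{\cX_A}$ and the broadcast $LX$. Applying Lemma~\ref{lm:1} to $A$ (viewing $A$ as a receiver with side-information set $\cX_A$ and demand $j$), there is a vector ${\bf u}\in\FF_p^n$ with $\supp({\bf u})\subseteq\cX_A$ such that $c:={\bf u}+{\bf e}_j\in\langle L\rangle$. By our choice of encoding matrix, $\langle L\rangle=C_p(\Pi)$, so $c$ is a codeword. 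It is nonzero because $c_j=1$, and its support satisfies
\[
\supp(c)\subseteq \supp({\bf u})\cup\{j\}\subseteq \cX_A\cup\{j\}, \qquad \mathrm{wt}(c)\le |\cX_A|+1\le 2p-1.
\]

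Next I would pin down the shape of $c$. By Theorem~\ref{th:assm} the minimum distance of $C_p(\Pi)$ is $p+1$, and by Theorem~\ref{th:pesi} no codeword has weight in $[p+2,2p-1]$. Since $1\le\mathrm{wt}(c)\le 2p-1$, the only possibility is $\mathrm{wt}(c)=p+1$, so by the second assertion of Theorem~\ref{th:assm} $c$ is a scalar multiple of the characteristic vector ${\bf 1}_B$ of some block $B$ of $\Pi$. In particular $\supp(c)=B$, and since $c_j\ne 0$ we must have $j\in B$.

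Finally I would contradict the two hypotheses. From $\supp(c)=B\subseteq \cX_A\cup\{j\}$ together with $j\in B$, we obtain $B\setminus\{j\}\subseteq\cX_A$, whence
\[
|\cX_A\cap B|\ge |B|-1=p,
\]
contradicting the assumption that $|\cX_A\cap B|\le p-1$ for every block $B$. Therefore no such ${\bf u}$ can exist, and $A$ cannot recover any $X_j$ with $j\notin\cX_A$.

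There is no real obstacle here: the argument is essentially forced once one combines the decoding criterion of Lemma~\ref{lm:1} with the two code-theoretic inputs. The role of each hypothesis is transparent — the bound $|\cX_A|\le 2p-2$ squeezes $\mathrm{wt}(c)$ into the range $[1,2p-1]$ where Theorems~\ref{th:assm} and \ref{th:pesi} together pinpoint $c$ as a scalar multiple of some ${\bf 1}_B$, and the condition $|\cX_A\cap B|\le p-1$ is precisely what rules out this remaining case.
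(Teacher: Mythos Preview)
Your proof is correct and follows essentially the same approach as the paper: invoke Lemma~\ref{lm:1} to produce a codeword $c={\bf u}+{\bf e}_j\in C_p(\Pi)$ supported in $\cX_A\cup\{j\}$, use the weight bound $|\cX_A|\le 2p-2$ together with Theorems~\ref{th:assm} and~\ref{th:pesi} to force $\mathrm{wt}(c)=p+1$ and hence $c$ a scalar multiple of some ${\bf 1}_B$, and then derive $|\cX_A\cap B|\ge p$ from $B\subseteq\cX_A\cup\{j\}$. Your argument is in fact slightly cleaner than the paper's, which treats the case $p=2$ separately, whereas your uniform argument covers it without modification (for $p=2$ the interval $[p+2,2p-1]$ is empty and the bound $\mathrm{wt}(c)\le 2p-1=p+1$ already forces $\mathrm{wt}(c)=p+1$).
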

\begin{proof}
If $p$ is even, then the result follows from the fact that $|\cX_A| \le 1 = d-2$.
Let $p$ be odd. We know from Theorem \ref{th:pesi} that in the code generated by the incidence matrix of a $2$-$(p^2+p+1,p+1,1)$ design there are no codewords with weights in $[p+2,2p-1]$. To recover the message $X_j$, $A$ needs a codeword of weight $p+1$. Such codewords are those corresponding to some block $B$, that is a vector of the form
$$
\sum_{i\in B}{\bf e}_i
$$
and its scalar multiples.

So $A$ recovers $X_j$ if and only if there exists ${\bf u}+{\bf e}_j\in C$ with $\supp({\bf u})\subset\cX_A$ and $|\supp({\bf u})|=p$. Here $C$ means the code of the projective space.
Then $\supp({\bf u}+{\bf e}_j)=B$ for some block $B$, and so $|(\cX_A\cup\{j\})\cap B|\geq p+1$.
\end{proof}

\section{Index Coding with Coded Side Information}\label{sec:iccsi}

%In this section we consider a model in which a receiver can have coded packets as its side information, as discussed in \cite{shum}. There the 
In \cite{GCG-cd-art-shum2012broadcasting} the authors generalized the index coding problem so that coded packets of a data matrix $X$ may be broadcast or part of a user's cache. 
This finds applications, for example, in broadcast channels with helper relay nodes. 
We present the model with coded side information in the following section.

\subsection{Preliminaries on the ICCSI Problem}

As before there is a data matrix $X \in \FF_q^{n\times t}$ and a set of $m$ receivers or users.
%So in this scenario we have the message ${\bf x}=(x_1,\dots,x_n)$. 
For each $i \in[m]$, the $i$th user seeks some linear combination of $X$, say $R_i X$ for some $R_i \in \FF_q^n$. 
A user's cache comprises a pair of matrices 
$$V^{(i)} \in \FF_q^{d_i \times n} \text{ and }\Lambda^{(i)}\in \FF_q^{d_i\times t}$$
related by the equation
$$\Lambda^{(i)} = V^{(i)}X.$$ 
While $X$ is unknown to user $i$, it is assumed that any vector $v$ in the row spaces of $V^{(i)}$ and the respective $\lambda=v X$ can be generated at the $i$th receiver.
We denote these respective row spaces by $\cX^{(i)}:=\langle V^{(i)} \rangle  $ and ${\mathcal L}^{(i)}:=\{v\cdot X\mid v\in \cX^{(i)}\}$ for each $i$.
The side information of the $i$th user is $(\cX^{(i)},{\mathcal L}^{(i)})$.
Similarly, the sender has the pair of row spaces $(\cX^{(S)}, {\mathcal L}^{(S)})$ for matrices 
$$V^{(S)} \in \FF_q^{d_S \times n} \text{ and } \Lambda^{(S)} = V^{(S)}X \in \FF_q^{d_S}$$ 
and does not necessarily possess $X$ itself. 

The $i$th user requests a coded packet $R_iX \in {\mathcal L}^{(S)}$ with $R_i \in \cX^{(S)} \backslash \cX^{(i)}$. 
We denote by $R$ the $m \times n$ matrix over $\fq$ with each $i$th row equal to $R_i$. The matrix $R$ thus represents the requests of all $m$ users.
%Let $d=\sum_{i \in [m]} d_i$. We let ${\mathcal V}$ denote the $d \times n$ matrix over $\fq$ satisfying:
%\begin{equation}
%\label{eqV} V := \left[ \begin{array}{c} V^{(1)} \\ V^{(2)} \\ \vdots \\ V^{(m)} \end{array} \right].
%\end{equation}
We denote by 
$$ \cX: = \{ A \in \fq^{m \times n} : A_i \in \cX^{(i)}, i \in [m]\},$$
so that $\cX=\oplus_{i \in [m]} \cX^{(i)}$ %and ${\mathcal L}:=\oplus_{i \in [m]}{\mathcal L}^{(i)}$ 
is the direct sum of the $\cX^{(i)} $ %and the ${\mathcal L}^{(i)}$ respectively 
as a vector space over $\fq$. 
Similarly, we write $\INSe{\oplus \cX^{(S)}}: =\oplus_{i \in [m]} \cX^{(S)} = \{Z \in \fq^{m \times n} : Z_i \in \cX^{(S)}\}$.

\begin{comment}
\begin{definition}

An instance of the Index Coding with Coded Side Information (ICCSI) problem is denoted by $\cI=(\cX,\cX^{(S)},R)$
for subspaces $\cX^{(S)}$ and $\cX^{(i)}$ of $\fq^{n}$ of dimensions $d_S,d_i$ for $i \in [m]$
such that $\cX = \oplus_{i \in [m]} \cX^{(i)}$ and a matrix $R$ in $\tilde{\cX} :=\oplus_{i \in [m]} \cX^{(S)}$.
\end{definition} 
\end{comment}
For the remainder, we let $\cX,\cX^{(S)},\INSe{\oplus \cX^{(S)}} ,R$ be as defined above and write $\cI=(\cX,\cX^{(S)},R)$ to denote an instance of the ICCSI problem for these parameters. As before, for the ICCSI instance $\beta_t(\cI)$ denotes the minimum broadcast rate for block-length $t$ where the encoding is over all possible extensions of $\FF_p$. That is, for $\cI=(\cX,\cX^{(S)},R)$
$$
\gb_t(\cI)=\inf_q \{N\mid \text{$\exists$ a $q$-ary index code of length $N$ for $\cI$}\}.
$$
\INSe{The optimal broadcast rate is given by the limit
$$
\gb(\cI)=\lim_{t\rightarrow \infty}\frac{\gb_t(\cI)}{t}=\inf_{t}\frac{\gb_t(\cI)}{t}.
$$}
\begin{definition}
Let $N$ be a positive integer. We say that the map	
$$
E:\fq^{n\times t}\to\fq^{N},
$$
is an $\fq$-code for $\cI=(\cX,\cX^{(S)},R)$ of length $N$ if for each $i$th receiver, $i \in [m]$ there exists a decoding map

$$
D_i:\fq^{N}\times\cX^{(i)}\to\fq^t,
$$
satisfying
$$
\forall X\in\fq^{n\times t} \,:\, D_i(E({X}),A)=R_i X,
$$
for some vector $A \in \cX^{(i)}$, in which case we say that $E$ is an $\cI$-IC. 
$E$ is called an $\fq$-linear $\cI$-IC if $E(X)=LV^{(S)}X$ for some $L\in\fq^{N \times d_S}$, in which case we say that $L$ represents
the code $E$. 
\end{definition}

Given an instance $\cI =(\cX,\cX^{(S)},R)$ and a matrix $L\in \fq^{N\times d_S}$ that represents an $\cI$-IC, we write $\cL$ to denote the space $\langle L V^{(S)} \rangle.$

We have the following (see \cite{byrne2015error,GCG-cd-art-shum2012broadcasting}).

\begin{lemma}\label{lemdecode}
   Let $L\in \fq^{N \times d_S}$. Then $L$ represents a $\fq$-linear $\cI$-IC index code of length $N$ if and only if for each $i \in [m]$,
   $R_i \in \cL + \cX^{(i)}. $
   %$$ R_i \in \left\langle \left[ \begin{array}{r} V^{(i)} \\ LV^{(S)} \end{array} \right]\right\rangle.$$
\end{lemma}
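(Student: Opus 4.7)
The plan is to establish both directions by describing precisely which linear combinations of the data matrix $X$ a user can reconstruct from the broadcast together with its side information.

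The key observation to set up is this: upon receiving $E(X)=LV^{(S)}X \in \fq^{N \times t}$, user $i$ can form, for any $\alpha \in \fq^N$ and any $v \in \cX^{(i)}$, the expression
$$ \alpha \cdot E(X) + vX = (\alpha L V^{(S)} + v)X \in \fq^t.$$
Since $\alpha L V^{(S)}$ ranges over $\cL$ and $v$ ranges over $\cX^{(i)}$, the set of row vectors $\beta \in \fq^n$ for which user $i$ can compute $\beta X$ by a linear combination of broadcast symbols and side information symbols is exactly $\cL + \cX^{(i)}$.

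For the ``if'' direction, I would assume $R_i \in \cL + \cX^{(i)}$ and write $R_i = \alpha L V^{(S)} + v$ for some $\alpha \in \fq^N$ and $v \in \cX^{(i)}$. Taking $A = vX \in \cL^{(i)}$ (which user $i$ can compute from its cache $\Lambda^{(i)} = V^{(i)}X$) and setting $D_i(E(X),A) := \alpha \cdot E(X) + A$ yields $R_i X$ for every $X$, so $L$ represents an $\cI$-IC.

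For the ``only if'' direction, the main point is to exploit that the decoding condition must hold for \emph{every} $X \in \fq^{n \times t}$, even though $D_i$ is not assumed linear. Let
$$ M_i := \begin{bmatrix} L V^{(S)} \\ V^{(i)} \end{bmatrix} \in \fq^{(N+d_i)\times n},$$
whose row space is precisely $\cL + \cX^{(i)}$. If $X_1, X_2 \in \fq^{n \times t}$ satisfy $M_i X_1 = M_i X_2$, then both $E(X_1)=E(X_2)$ and $V^{(i)} X_1 = V^{(i)} X_2$, so user $i$ must output the same value on both inputs, forcing $R_i X_1 = R_i X_2$. Taking $X_2 = 0$ and letting $X_1$ range over $\ker M_i \otimes \fq^t$ (i.e., letting each column of $X_1$ range over $\ker M_i$) shows $R_i$ vanishes on $\ker M_i$, so $R_i$ lies in the row space of $M_i$, namely $\cL + \cX^{(i)}$.

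The only mild obstacle is the ``only if'' direction, where one must be careful that the decoder is not linear a priori; the standard fix is the kernel argument above, where the freedom to pick columns of $X$ independently (using $t \geq 1$) lets us reduce to the linear-algebra statement that $R_i$ annihilates $\ker M_i$.
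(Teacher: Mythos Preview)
Your argument is correct in both directions. Note, however, that the paper does not actually supply a proof of this lemma: it is stated with a citation to \cite{byrne2015error,GCG-cd-art-shum2012broadcasting}, and only the ``if'' direction is spelled out in the subsequent Remark~\ref{rm:comb}, where the decomposition $R_i = \bb^{(i)} L V^{(S)} + \ba^{(i)} V^{(i)}$ is used exactly as you do. Your ``only if'' direction via the kernel of $M_i$ is the standard argument and is sound; the point that the decoder $D_i$ need not be linear is handled correctly by the indistinguishability step (equal inputs to $D_i$ force equal outputs), and the reduction to $R_i$ vanishing on $\ker M_i$ is the right linear-algebraic conclusion. One cosmetic remark: the paper's typing of $D_i$ as a map on $\fq^N \times \cX^{(i)}$ is slightly off (the side-information argument should live in $\cL^{(i)}$ or be $\Lambda^{(i)}$ itself), and you implicitly corrected this when writing $A = vX \in \cL^{(i)}$; that is the intended reading.
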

\begin{remark}\label{rm:comb}

If the equivalent conditions of the above lemma hold we have that for each $i \in [m]$, 
$R_i=\bb^{(i)} L V^{(S)}+ \ba^{(i)} V^{(i)}$ for some vectors $\ba^{(i)},\bb^{(i)}$. 
So User $i$ decodes its request by computing 
$$R_iX = \bb^{(i)}L V^{(S)}X+\ba^{(i)}V^{(i)}X=\bb^{(i)}Y+\ba^{(i)}\Lambda^{(i)},$$ where $Y$ is the received message.
\end{remark}

\begin{remark}
	The ICSI problem as introduced before is indeed a special case of the ICCSI problem. Setting $V^{(S)}$ to be the $n \times n$ identity matrix, $R_i = {\bf e}_{f(i)}$ and $V^{(i)}$ to be the $d_i \times n$ matrix with rows $V^{(i)}_j = {\bf e}_{i_j}$ for each $i_j \in \cX_i$ yields $\cX^{(i)} = \langle {\bf e}_{j} : j \in \cX_i \rangle$, so that $\supp (\bv) \subset \cX_i$ if and only if $\bv \in \cX^{(i)}$.  
\end{remark}

%In \cite{byrne2015error} it is extended the definition of min-rank of an instance of the ICSI problem to the ICCSI problem.
The analogue of the min-rank is as follows:

\begin{definition}[\cite{byrne2015error}]
The min-rank of the instance $\cI=(\cX,\cX^{(S)},R)$ of the ICCSI problem over $\fq$ is
$$
\gk(\cI) = \min\left\{\begin{array}{cc}
\multirow{2}{*}{\rk(A + R)\,:}& A \in \fq^{m \times n},\\
&  A_i \in \cX^{(i)} \cap \cX^{(S)}, \forall i \in [m] \end{array}\right\} .           
$$
\end{definition}
%\noindent Observe that the quantity $\gk(\cI)$ is $d_{\text{rk}}(R,\cX \cap \tilde{\cX} )$, which is the rank-distance of 
%$R \in \fq^{m \times n}$ to the $\fq$-linear code $\cX \cap \tilde{\cX}$, or equivalently the minimum rank-weight of the coset 
%$R + (\cX \cap \tilde{\cX}) \subset \fq^{m\times n}$.
%
%We now show that given the instance $\cI$, the minimum length of an $\fq$-linear $\cI$-IC is given by its min-rank. 

Note that $\gk(\cI)$ measures the rank distance of the $m \times n$ matrix $R$ to the $\fq$-linear matrix code $\cX \cap (\oplus \cX^{(S)})$.

As in the ICSI case, the length of an optimal $\fq$-linear ICCSI index code is characterized by the min-rank of the instance.
 
\begin{lemma}[\cite{byrne2015error}]
The length of an optimal $\fq$-linear index code for $\cI{\hskip -0.1cm} ={\hskip -0.1cm}(\cX, {\hskip -0.01cm}\cX^{(S)}{\hskip -0.01cm},R)$ is $\gk(\cI)$.
\end{lemma}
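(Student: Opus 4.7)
The plan is to prove both inequalities $\gk(\cI)\le N$ for every linear $\cI$-IC of length $N$, and the existence of a linear $\cI$-IC of length $\gk(\cI)$. Both directions are essentially a translation of the decoding characterization of Lemma \ref{lemdecode} into the rank-distance minimization that defines $\gk(\cI)$. Throughout, I will use that $V^{(S)}$ has full row rank (a standing assumption since its rows span the sender's side-information space $\cX^{(S)}$), so that the map $C\mapsto C V^{(S)}$ from $\fq^{m\times d_S}$ to $\fq^{m\times n}$ is injective.

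For the first direction, I would let $L\in\fq^{N\times d_S}$ represent a linear $\cI$-IC of length $N$. By Lemma \ref{lemdecode}, for each $i$ there exist $\bb^{(i)}\in\fq^N$ and $\ba^{(i)}\in\cX^{(i)}$ with $R_i=\bb^{(i)}LV^{(S)}+\ba^{(i)}$. Define $A\in\fq^{m\times n}$ by $A_i:=-\ba^{(i)}$; then $A_i\in\cX^{(i)}$, and also $A_i=\bb^{(i)}LV^{(S)}-R_i\in\cX^{(S)}$ since both $\bb^{(i)}LV^{(S)}$ and $R_i$ lie in $\cX^{(S)}$. Hence $A_i\in\cX^{(i)}\cap\cX^{(S)}$ for every $i$, and the rows of $R+A$ all lie in $\langle LV^{(S)}\rangle$, giving $\rk(R+A)\le\rk(L)\le N$. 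Taking the minimum over admissible $A$ yields $\gk(\cI)\le N$.

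For the converse, I would select an $A\in\fq^{m\times n}$ that attains the minimum in the definition of $\gk(\cI)$, so that $A_i\in\cX^{(i)}\cap\cX^{(S)}$ for every $i$ and $\rk(R+A)=\gk(\cI)$. Since each row of $R+A$ lies in $\cX^{(S)}=\langle V^{(S)}\rangle$, there exists $C\in\fq^{m\times d_S}$ with $CV^{(S)}=R+A$, and by the injectivity observation above $\rk(C)=\rk(R+A)=\gk(\cI)$. Choose $L\in\fq^{\gk(\cI)\times d_S}$ whose row space equals $\langle C\rangle$. Every row of $C$ is then a linear combination of the rows of $L$, so $c^{(i)}=\bb^{(i)}L$ for some $\bb^{(i)}$, and
$$R_i=c^{(i)}V^{(S)}-A_i=\bb^{(i)}LV^{(S)}+(-A_i)\in \langle LV^{(S)}\rangle +\cX^{(i)}=\cL+\cX^{(i)}.$$
By Lemma \ref{lemdecode}, $L$ represents a linear $\cI$-IC of length $\gk(\cI)$, completing the proof.

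The main point to check carefully is the rank identity $\rk(C)=\rk(R+A)$, which is where the assumption that $V^{(S)}$ has full row rank enters; once this is noted, the rest is bookkeeping between the two equivalent viewpoints of an index code.
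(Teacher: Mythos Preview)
Your argument is correct and is the standard proof of this fact. Note, however, that the paper does not actually supply a proof of this lemma: it is simply cited from \cite{byrne2015error}. So there is nothing in the paper to compare your proposal against beyond confirming that your proof is sound, which it is.

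One minor remark on the second direction: you do not really need the rank identity $\rk(C)=\rk(R+A)$ (and hence the full-row-rank assumption on $V^{(S)}$) at all. It suffices to pick a basis $w_1,\dots,w_k$ of the row space of $R+A$ with $k=\gk(\cI)$, lift each $w_j\in\cX^{(S)}$ to some $\ell_j\in\fq^{d_S}$ with $\ell_j V^{(S)}=w_j$, and let $L$ be the $k\times d_S$ matrix with rows $\ell_j$. Then every row of $R+A$ lies in $\langle LV^{(S)}\rangle$, so $R_i\in\cL+\cX^{(i)}$ for every $i$, and $L$ has exactly $\gk(\cI)$ rows. The length of the code is the number of rows of $L$, not its rank, so the intermediate matrix $C$ and its rank are unnecessary. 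That said, the assumption that $V^{(S)}$ has linearly independent rows is harmless and standard, and your argument as written is fine.
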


%As already observed in \cite{GCG-cd-art-dau2011index}, the min-rank $\gk(\cI)$ of the instance $\cI$ generalizes the notion of the min-rank of the so-called
%side-information graph of the classical index coding problem, which is NP-hard to compute.

% A discussion on the various approaches to obtaining bounds on the optimal length of an index code can be read in \cite{shanmugam2014graph}, where the authors assert that graph-theoretic methods for constructing index coding schemes yield bounds on the optimal length of an index code, which are often out-performed by the min-rank. 

%In the following section we generalize all the bounds given in \cite{shanmugam2014graph} to the case of the ICCSI problem.

\subsection{Approaches from Integer and Linear Programming}

In this section we generalize all the bounds given in \cite{shanmugam2014graph} (which themselves are generalizations of \cite{tehrani2012bipartite}) to the case of the ICCSI problem.
We start with the following definition, introduced in \cite{GCG-cd-art-shum2012broadcasting} as a {\em coding group}, wherein a procedure to detect such as subset is given.
It is easy to see that this definition generalizes the definition of a hyperclique for the ICSI case given in \cite{shanmugam2014graph}.
\begin{definition}
Let $\cI=(\cX,\cX^{(S)},R)$ be an instance of the ICCSI problem. A subset of receivers $C\subseteq [m]$ is called \emph{generalized clique} if there exists ${\bf v}\in\cX^{(S)}$ such that $R_i\in \langle \bv\rangle + \cX^{(i)}$ for all $i\in C$.
\end{definition}

We have the following characterisation of a generalized clique is immediate from the definition.
 
\begin{lemma}
Let $\cI=(\cX,\cX^{(S)},R)$ be an instance of the ICCSI problem. $C\subset [m]$ is a generalized clique if and only if either
of the following equivalent conditions hold:
\begin{enumerate}
	\item there exists ${\bf v}\in\cX^{(S)}$ such that $\langle \bv\rangle \subset \langle R_i \rangle + \cX^{(i)}$ for all $i \in C$, 
	\item $\rk(R_C + A_C) = 1$ for some $m \times n$ matrix $A \in \cX \cap (\oplus \cX^{(S)})$.
\end{enumerate} 
\end{lemma}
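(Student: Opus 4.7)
The plan is to chase the definitions, relying throughout on the standing hypothesis $R_i \notin \cX^{(i)}$ to rule out degenerate choices of witness. The same vector $\bv$ will serve as witness across all three formulations (clique definition, (1), and (2)), with the only work being to translate the algebraic data cleanly.

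For the equivalence with (1), in the forward direction I take the clique witness $\bv \in \cX^{(S)}$ and write $R_i = c_i \bv + \bu_i$ with $\bu_i \in \cX^{(i)}$. Since $R_i \notin \cX^{(i)}$, $c_i$ must be nonzero, and rearranging gives $\bv = c_i^{-1}(R_i - \bu_i) \in \langle R_i \rangle + \cX^{(i)}$, so $\langle \bv \rangle \subset \langle R_i \rangle + \cX^{(i)}$. For the converse, given $\bv \in \cX^{(S)}$ witnessing (1), I write $\bv = \gl_i R_i + \bu_i$; again $R_i \notin \cX^{(i)}$ forces $\gl_i \neq 0$, so $R_i = \gl_i^{-1}(\bv - \bu_i) \in \langle \bv \rangle + \cX^{(i)}$, exhibiting $\bv$ as a clique witness.

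For (2), I would translate between representations by identifying $-A_i$ with $\bu_i$. Given the clique data, define $A \in \fq^{m \times n}$ by $A_i := -\bu_i$ for $i \in C$ and $A_i := 0$ otherwise. By construction $A_i \in \cX^{(i)}$, and writing $A_i = R_i - c_i \bv$ shows $A_i \in \cX^{(S)}$ since $R_i, \bv \in \cX^{(S)}$; hence $A \in \cX \cap (\oplus \cX^{(S)})$. Then $(R+A)_i = c_i \bv$ for $i \in C$, so every row of $R_C + A_C$ lies in $\langle \bv \rangle$, yielding rank at most one, and equal to one because at least one $c_i$ is nonzero. Conversely, given $A \in \cX \cap (\oplus \cX^{(S)})$ with $\rk(R_C + A_C) = 1$, I pick $i_0 \in C$ with $(R+A)_{i_0} \neq 0$ and set $\bv := R_{i_0} + A_{i_0} \in \cX^{(S)}$; every $R_i + A_i$ is then a scalar multiple $c_i \bv$ of $\bv$, so $R_i = c_i \bv - A_i \in \langle \bv \rangle + \cX^{(i)}$, as required.

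The only delicate point is the nondegeneracy argument in the converse of (1), where one must exclude $\gl_i = 0$ (which would make $\bv$ lie in $\cX^{(i)}$ and leave $R_i$ unrecoverable from $\langle \bv\rangle+\cX^{(i)}$); this is handled uniformly by the assumption $R_i \notin \cX^{(i)}$. The same assumption supplies the nonzero $c_i$ needed for the rank-one conclusion in (2). Everything else is routine linear algebra once the matching witness $\bv$ is carried through all three equivalent formulations.
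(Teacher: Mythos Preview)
The paper offers no proof of this lemma, declaring it ``immediate from the definition,'' so there is no argument to compare against. Your treatment of the equivalence between the clique definition and condition~(2) is correct. However, your proof that (1) implies the clique property contains a genuine gap --- and in fact condition~(1) as literally stated is \emph{not} equivalent to the clique definition.

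The claim ``$R_i \notin \cX^{(i)}$ forces $\gl_i \neq 0$'' is false. From $\bv = \gl_i R_i + \bu_i$ with $\bu_i \in \cX^{(i)}$, the case $\gl_i = 0$ merely says $\bv \in \cX^{(i)}$, which is perfectly consistent with $R_i \notin \cX^{(i)}$. Your parenthetical justification (that $\gl_i=0$ ``would make $\bv$ lie in $\cX^{(i)}$ and leave $R_i$ unrecoverable from $\langle \bv\rangle+\cX^{(i)}$'') presupposes $R_i \in \langle \bv \rangle + \cX^{(i)}$, which is exactly what you are trying to prove; the argument is circular.

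Concretely, take $n=3$, $m=2$, $C=\{1,2\}$, $\cX^{(S)}=\fq^3$, $\cX^{(1)}=\cX^{(2)}=\langle \be_3\rangle$, $R_1=\be_1$, $R_2=\be_2$. Then $\bv=\be_3$ witnesses (1), since $\langle \be_3\rangle \subset \langle \be_j,\be_3\rangle = \langle R_j\rangle + \cX^{(j)}$ for $j=1,2$. But $C$ is not a generalized clique: any $\bv$ with $\be_1,\be_2 \in \langle \bv\rangle + \langle \be_3\rangle$ would force the $2$-dimensional space $\langle \bv,\be_3\rangle$ to contain three independent vectors. (Note too that $\bv=0$ always witnesses (1) trivially, so (1) is vacuous without at least requiring $\bv\neq 0$.)

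The intended statement presumably adds the hypothesis $\bv \notin \cX^{(i)}$ for each $i\in C$. With that amendment your argument goes through verbatim, since then $\gl_i=0$ would give $\bv=\bu_i\in\cX^{(i)}$, a genuine contradiction.
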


For simplicity in the following we refer to a generalized clique just as a clique.

The demand $R_iX$ of each user $i$ of a clique can be met by sending the message $\bv X$ and hence a set of $\ell$ cliques
that partitions the set $[m]$ ensures that all requests can be delivered in at most $\ell$ transmissions. Minimizing this number for a specific instance
can be found via integer programming (see \cite{blasiak2010index,tehrani2012bipartite,shanmugam2014graph}).
Recall that the optimal solution of the LP-relaxation of an IP problem returns rational values.

\begin{definition}
	We denote by $\cC$ the set of all cliques of $\cI=(\cX,\cX^{(S)},R)$. For each clique $C \in \cC$ define the set
	$$
	\cR(C):=\{\bv \in \fq^n \mid R_i\in \langle \bv\rangle + \cX^{(i)}\,\, \forall\,i\in C\}.
	$$
\end{definition}

 ~\\
 \begin{definition}
  We define the \emph{generalized clique cover number} of $\cI$, denoted by $\varphi(\cI)$, to be 
 the optimal solution of the following integer programme:
 $$
 \min \sum_{C\in\cC}y_C
 $$
 $$
 \text{s.t. }\sum_{C:j\in C}y_C=1\text{ for all $j\in[m]$}
 $$
\begin{equation}\label{eq:clique}
y_C\in\{0,1\}\text{ for all }C\in\cC.
\end{equation}

The LP relaxation of \eqref{eq:clique} (so with the relaxed constraint $0 \leq y_C \leq 1$ for all $C$) is the {\em fractional generalized clique cover number} $\varphi_f(\cI)$.
 \end{definition}

\begin{definition}
For each $C\in \cC$ fix a vector $\bv_C\in \cR(C)$. We define the following integer programme with respect to the vectors $\bv_C$. 
 $$
 \min \,k
 $$
 
$$
 \text{s.t. }\sum_{C:\bv_C\notin\cX^{(j)}}y_C\le k \text{ for all $j\in[m]$}
 $$
  $$
\sum_{C:j\in C}y_C=1\text{ for all $j\in[m]$}
 $$
\begin{equation}\label{eq:loclclique}
y_C\in\{0,1\}\text{ for all }C\in\cC \text{ and }k \in\mathbb{N}.
\end{equation}

We denote by $\phi_l(\cI,(\bv_C\in\cR(C):C\in \cC))$ the optimal solution of \eqref{eq:loclclique}, depending on the fixed $\bv_C$'s. The minimum over all possible $\bv_C$'s is called the \emph{local generalized clique cover number}
$$
\varphi_l(\cI)=\mathop{\mbox{min}}_{(\bv_C\in\cR(C):C\in \cC)}\phi_l(\cI,(\bv_C:C\in \cC)).
$$ 
\end{definition}

This is an extension of the local hyperclique cover: 
for a set of fixed $\bv_C$, given user $j \in [m]$ and some feasible solution to \eqref{eq:clique}, count number of cliques $C$ in that generalized clique cover such that $\bv_C$
 is not contained in the side-information $\cX^{(j)}$ and let $k$ be the maximum number of such cliques for each $j$. The optimal solution of (\ref{eq:loclclique}) is the minimum value of $k$ over all possible solutions of  \eqref{eq:clique} and all choices of $\bv_C$.
The minimum of the LP relaxation of \eqref{eq:loclclique} over all possible $\bv_C$'s is called the fractional local generalized clique cover number $\varphi_{lf}(\cI)$.
Both $\varphi_{lf}(\cI)$ and $\varphi_{l}(\cI) $ will be shown to give upper bound on the transmission rate of the instance $\cI$.

\begin{remark}
Consider the \INSr{instance $\cI$ of the ICCSI problem} with $m=n=6$, $\fq=\FF_4=\{0,1,\ga,\ga^2\}$ and $\cX^{(S)}=\FF_4^6$, where $\ga$ is such that $\ga^2=\ga+1$. 
$$
V^{(1)}=\left[\begin{array}{cccccc}
0& 1& 0&0&0&0\\
0&0&0&0& 1& 1\end{array}\right],$$$$
V^{(2)}=\left[\begin{array}{cccccc}
1& 0& 0&0&0&0\\
0&0&1&1& 0& 0\end{array}\right],
$$
$$
V^{(3)}=\left[\begin{array}{cccccc}
0& 0& 0&1&0&0\\
0&0&0&0& 1& 1\end{array}\right],$$$$
V^{(4)}=\left[\begin{array}{cccccc}
0& 0& 1&0&0&0\\
1&1&0&0& 0& 0\end{array}\right],
$$
$$
V^{(5)}=\left[\begin{array}{cccccc}
0& 0& 0&0&0&1\\
0&0&1&1& 0& 0\end{array}\right],$$$$
V^{(6)}=\left[\begin{array}{cccccc}
1& 1& 0&0&0&0\\
0&0&0&0& 1& 0\end{array}\right],
$$
and $R_1=100000$, $R_2=010000$, $R_3=001000$, $R_4=000100$, $R_5=000010$, $R_6=000001$.

Now if we consider the partition $C_1=\{1,2\}$, $C_2=\{3,4\}$, $C_3=\{5,6\}$, and we use 
 $\bv_{C_1}=110000$,
 $\bv_{C_2}=001100$,
 $\bv_{C_3}=00001\ga$,
to encode $X$, then we obtain $k=3$. But using  $\bv_{C_3}=000011$ we have that $k=2$. Clearly the optimal solution of  \eqref{eq:loclclique} depends on the choice of vectors $\bv_C$.
%Consider the ICCSI instance with $m=n=3$, $q=2$, $\cX^{(S)}=\FF_2^3$.
%$$
%V^{(1)}=\left[\begin{array}{ccc}
%1& 1& 0\\
%0& 0& 1\end{array}\right]\quad V^{(2)}=[\,0\;\, 0\;\, 1\,]\quad V^{(3)}=[\,0\;\, 1\;\, 0\,],
%$$
%and $R_1=100$, $R_2=010$, $R_3=001$.
%
%Take into account the clique partition $\{1,2\}$, $\{3\}$. We have that $\bv_{12}=110$ could be used to encode $X$ and satisfy the demands of $1$ and $2$, use $\bv_{3}=001$ for $3$. Then we can easily see that $t=2$, as $\bv_{12},\bv_3\notin\cX^{(3)}$. If we use $\bv_{12}=010$ we have $t=1$. Then optimal solution of  \eqref{eq:loclclique} may depend on the selcted $\bv_C$'s. In that case we can note also that $\{1,2,3\}$ is a clique.
\end{remark}

Another approach is based on {\em partition multicast}, as described in \cite{shanmugam2014graph}.

\begin{definition}
We define the \emph{partition generalized multicast number}, $\varphi^p(\cI)$ to be the optimal solution of the following integer program
 $$
 \min \sum_{M \subset [m]}a_Md_M
 $$
 
$$
 \text{s.t. }\sum_{M:j\in M}a_M = 1 \text{ for all $j\in[m]$}
 $$

$$
a_M\in\{0,1\}\text{ for all }M \subset [m], M \neq \emptyset.
$$
\begin{eqnarray}\label{eq:partmulti}
 \text{ and }d_M & = & \dim(\langle R_M\rangle)-\mathop{\mbox{min}}_{j\in M}\dim(\langle R_M\rangle\cap\cX^{(j)}). 
\end{eqnarray}

The LP relaxation of \eqref{eq:partmulti} is called the {\em fractional partition generalized multicast number}, $\varphi_f^p(\cI)$.

\end{definition}

 We remark that $d_M= \mathop{\mbox{max}}_{j\in M} \dim(\langle R_M\rangle /\langle R_M\rangle\cap\cX^{(j)})$.
 We briefly justify the above: each user is assigned to exactly one multicast group $M$, so the selected groups $M$ form a partition of $[m]$. 
 Each member $j$ of a multicast group $M \subset [m]$
 already has access to at least $\dim (\langle R_M \rangle \cap\cX^{(j)})$ independent vectors in $\langle R_M \rangle$. 
 As we'll show in Theorem \ref{th:multi}, a coding scheme can be applied to ensure delivery of all remaining requests within a group using at most $d_M$ transmissions.
 The total number of transmissions required by this scheme is the sum of the $d_M$, over all selected multicast groups $M$. 
 
 The final approach considered combines partition multicast and local clique covering \cite[Definition 10]{shanmugam2014graph}. The users $[m]$ are partitioned into multicast groups and independently covered by generalized cliques. Each multicast group offers a reduce\INSr{d} ICCSI problem, to which a restricted local clique cover is applied. 
 
\begin{definition}

Define the following integer programme 
 $$
 \min \sum_{M\subset[m]} a_M t_M
 $$
 
$$
 \text{s.t. }\sum_{\substack{C:\bv_C\notin\cX^{(j)}\\
 C\cap M\ne \emptyset}}y_C\le t_M \text{ for all $j\in M$}
 $$
  $$
\sum_{M:j\in M}a_M=1,\quad \sum_{C:j\in C}y_C=1\text{ for all $j\in[m]$}
 $$
\begin{equation}\label{eq:loclcliquemulti}
a_M,y_C\in\{0,1\}\text{ for all }C\in\cC,\,M\subset [m] \text{ and }t_M\in\mathbb{N}.
\end{equation}

We denote by $\phi^p_l(\cI,(v_C\in\cR(C):C\in \cC))$ the optimal solution of \eqref{eq:loclcliquemulti} with respect to $(\bv_C\in\cR(C):C\in \cC)$ fixed. The minimum over all possible choices of $\bv_C$ is called the \emph{partitioned local generalized clique cover number}
\[
\varphi^p_l(\cI)=\mathop{\mbox{min}}_{(\bv_C\in\cR(C):C\in \cC)}\phi^p_l(\cI,(\bv_C\in\cR(C):C\in \cC)).
\]

The minimum of the LP relaxation of \eqref{eq:loclcliquemulti} over all possible choices of $\bv_C$ is called the fractional partitioned local generalized clique cover number $\varphi_{lf}^p(\cI)$.

\end{definition}

Now, we will show that achievable schemes exist for all parameters and hence obtain upper bounds on
$\beta(\cI)$. The basic technique is to use MDS codes. 
It will be notationally convenient to express $X$ as a column vector of length $n$ over ${\FF_{q^t}}$. 
We will assume in all cases that $q^t$ is large enough to assure the existence of an $\FF_{q^t}$-MDS code of the required length.
%Recall that a generator matrix $G$ of a $k$-dimensional MDS code over $\fq$ 
%has any $k$ columns of $G$ linearly independent. 

\begin{theorem}\label{thmcliquecover}
	Let $\cI=(\cX,\cX^{(S)},R)$. 
%For sufficiently large $q$, t
There exist achievable $\fq$-linear index codes corresponding to $\varphi(\cI)$ and $\varphi_f(\cI)$. In particular, we have 
$$\gb(\cI)\le \varphi_f(\cI) \le\varphi(\cI).$$
\end{theorem}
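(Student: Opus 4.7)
The proof strategy has three parts: establish $\varphi_f(\cI) \leq \varphi(\cI)$ from the LP relaxation, exhibit an explicit $\fq$-linear $\cI$-IC of length $\varphi(\cI)$, and construct a code over a suitable extension that realizes the fractional rate $\varphi_f(\cI)$ by means of an MDS code. The inequality $\varphi_f(\cI) \leq \varphi(\cI)$ is immediate: every feasible point of the IP~\eqref{eq:clique} is also feasible for its LP relaxation with the same objective value, so the LP optimum is bounded above by the IP optimum.

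For the integer bound, let $\{C_1,\ldots,C_\ell\}$ be an optimal clique cover, so $\ell = \varphi(\cI)$, and for each $i$ pick a witness $\bv_{C_i} \in \cR(C_i) \subseteq \cX^{(S)}$. Since $\cX^{(S)} = \langle V^{(S)} \rangle$, there exists $L \in \fq^{\ell \times d_S}$ whose rows lift the $\bv_{C_i}$, i.e.\ such that the $i$-th row of $LV^{(S)}$ equals $\bv_{C_i}$. Setting $\cL := \langle LV^{(S)} \rangle$, every $j \in [m]$ lies in some $C_i$, so $R_j \in \langle \bv_{C_i}\rangle + \cX^{(j)} \subseteq \cL + \cX^{(j)}$. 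By Lemma~\ref{lemdecode}, $L$ represents an $\cI$-IC of length $\ell$, giving $\gb(\cI) \leq \gb_1(\cI) \leq \varphi(\cI)$.

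For the fractional bound, take an optimal LP solution $y^\star$ and, after clearing denominators, write $y_C^\star = a_C/b$ with $a_C \in \mathbb{Z}_{\geq 0}$; the constraints yield $\sum_{C \ni j} a_C = b$ for every $j \in [m]$. Set $N := \sum_C a_C = b\,\varphi_f(\cI)$, take block length $t=b$, and pass to an extension $\FF_{q^t}$ large enough to carry an $[N, b]$-MDS code generated by some $G \in \FF_{q^t}^{b \times N}$. Assign to each clique $C$ a distinct column block $G_C$ of $G$ of size $a_C$. For each $C$ pick $\bv_C \in \cR(C)$ and transmit the row vector $\bv_C X G_C$. For $j \in C$ there exist $\lambda_{j,C}$ and $\ba^{(j,C)} \in \cX^{(j)}$ with $R_j = \lambda_{j,C}\bv_C + \ba^{(j,C)}$, so user $j$ combines the transmission with its side information to recover $R_j X G_C = \lambda_{j,C}\bv_C X G_C + \ba^{(j,C)}XG_C$; concatenating across $C \ni j$ gives $R_j X G_j$, where $G_j$ is a $b \times b$ submatrix of $G$, invertible by the MDS property, so $R_j X$ is recovered. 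The total length is $N = b\,\varphi_f(\cI)$, hence $\gb(\cI) \leq \gb_b(\cI)/b \leq \varphi_f(\cI)$. The main technical point lies in the MDS construction: the MDS property is precisely what guarantees that for each user $j$ the columns assigned to cliques containing $j$ form a full-rank $b \times b$ block, and the extension to $\FF_{q^t}$ is only needed to ensure the existence of such a code, which is harmless since $\gb(\cI)$ is defined as an infimum over extensions of $\FF_p$.
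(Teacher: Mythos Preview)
Your proof is correct and follows essentially the same strategy as the paper: for $\varphi(\cI)$, take an optimal clique cover and transmit $\bv_C X$ for each selected clique; for $\varphi_f(\cI)$, clear denominators, assign MDS-code columns to cliques with multiplicity, and use the MDS property so that each user $j$ sees an invertible $b\times b$ block formed by the columns attached to the cliques containing $j$. The paper presents the fractional construction via packet-splitting (writing $X$ as an $n\times r$ matrix over a subfield $\FF_{q^\ell}$) rather than by enlarging the alphabet at block length $t=b$, but these are the same device; your only slight imprecision is the phrase ``extension $\FF_{q^t}$ large enough'' with $t=b$ already fixed---what you mean, and what suffices, is to take the alphabet $\FF_Q$ (a power of $p$) large enough to support an $[N,b]$-MDS code, which is permitted since $\gb_t(\cI)$ is an infimum over all such $Q$.
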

\begin{proof}
For each $C \in \cC$ fix a vector $\bv_C \in \cR(C)$. Then given a clique cover $\cC^{opt} = \{C \in \cC: y_C = 1\}$, corresponding to an optimal solution of \eqref{eq:clique}, %$\cC^{opt}=\{C_1,\dots,C_h\}$, that is the set of cliques for which
%$y_C = 1$. For all cliques in this partition fix a vector $\bv_{C_i}\in\cR(C_i)$. Let $X\in \fq^{n\times 1}$ the data vector, 
and a data vector $X$, we broadcast $\{\bv_{C}X : C \in \cC^{opt} \}$. The demands $R_jX$ of each receiver $j \in [m]$  can be met in $|\cC^{opt}| = \varphi(\cI)$ transmissions since $R_j\in   \langle \bv_{C}\rangle + \cX^{(j)}$ for all $j\in C$. 

Now consider the LP relaxation of \eqref{eq:clique} and let an optimal solution be given by $\{y_C :C \in\cC\}\subset  {\mathbb Q}$. 
%Since the constraints on the variables involve only integers, the optimal solution involves only rationals. 
Let $r$ be the least common denominator of the $y_C$ and for each $C$ define the integral weight $\hat{y}_C = ry_C \in [r]$. %Every clique $C$ is assigned an integer weight in $[r]$ (that is $\hat{y}_C$). 
%A clique of weight $1 \le w \le r$ is identified $w$ different copies of itself. 
Denote the resulting multi-set of cliques by $\cC^{opt} = \{ (\hat{y}_C,C) : C \in \cC\}$. %Every clique in $\cC^{opt}$ has weight 1 with possible repetitions among cliques. 
Every user $j$ is contained in $r$ (not necessarily distinct) cliques of $\cC^{opt}$, with each distinct clique $C$ appearing with multiplicity $\hat{y}_C$.
Now split each packet $X_i \in \FF_{q^t}$ into $r$ packets of equal size, so
consider now $X$ as the data matrix
$$
X=\left[\begin{array}{ccc}
X_1^1  &\dots   & X_1^r\\
\vdots &        & \vdots\\
X_n^1  &\dots   & X_n^r
\end{array}\right],
$$
with coefficients in a subfield ${\mathbb F}_{q^\ell}$ of $\FF_{q^t}$ \INSe{where $\ell$ is the least divisor of $t$ satisfying $r \ell \leq t$.
If $\INSe{q^{\ell}} >s=\sum_C\hat{y}_C$ then there exists an $\FF_{q^\ell}$-$[s,r]$ MDS code, so suppose this is the case and let $G$ be a generator matrix of such a code.} 
%For each clique $C$ in $\cC^{opt}$ assign a distinct column of $G$ and denote it by ${\bf g}_C$, in particular, we identify the  %(so if $C$ appears with multiplicity $\hat{y}_C$ in $\cC^{opt}$ then $\hat{y}_C$ columns of $G$ are assigned to the same clique $C$). 
Now list the elements of $\cC^{opt}$ as $C_1,...,C_s$ and assign to each column $G^i$ of $G$ the clique $C_i$.

%We claim that the request $R_iX$ of each User $i$ can be retrieved by transmitting $\bv_{C_j} X G^j$ for each $i \in [s]$.
%We send for each clique $C$ in $\cC^{opt}$ the packet given by $\bv_{C}X{\bf g}_C \in \FF_{q^t}$. 
For each clique $C_i$ in $\cC^{opt}$, the packet $\bv_{C_i}XG^i \in \FF_{q^t}$ is transmitted. Each transmission corresponds to an $\fq$-linear combination
of blocks of length $\ell \leq t/r$ over $\fq$ and there are $s \INSe{=r \varphi_f(\cI)}$ transmissions in total. 

Now consider the receiver $j\in[m]$, which has demanded the vector $R_jX$. We may assume that $j$ is contained in the first $r$ cliques $C_1\dots,C_r$ of the list of $s$ cliques. 
Then all users, including $j$, has received $(\bv_{C_1}XG^1,\dots,\bv_{C_r}XG^r) \in \FF_{q^\ell}^r$. 
From Remark \ref{rm:comb} we have $R_j= \ga_i \bv_{C_i}+ \ba_iV^{(j)}$ for some $\ga_i,\ba_i$ for each $i\in[r]$. Thus $j$ can recover the vector 
$$\begin{aligned}
(R_jXG^1,...,R_jXG^r)=&
(\ga_1 \bv_{C_1}XG^1+ \ba_1V^{(j)}XG^1,...,\\
&\ga_1 \bv_{C_1}XG^r+ \ba_1V^{(j)}XG^r)\end{aligned}.
$$
Now
$$
(R_jXG^1,...,R_jXG^r)=R_jX G^{[r]},
$$
where $G^{[r]}=[G^1,...,G^r]$ is an invertible $r \times r$ matrix, by the MDS property of the code generated by $G$. Then $j$ can decode $R_jX$. 
Every user receives the $r$ packets it requires and the total number of transmissions is $s$.

\end{proof}

%\begin{remark}
%	For fixed $q$, in the proof of Theorem \ref{thmcliquecover}, if $q\leq s$ for $s$ derived from the LP relaxation of \eqref{eq:clique} then an $\FF_{q^a}$-$[s,r]$ MDS code can be found for some $a$ satisfying $q^a>s$ and $a|t$ or $t|a$. Then the effective block length is $h=\max\{t,a\}$ and the scheme gives an upper bound of $\beta_h(\cI) \geq \beta(\cI)$.
%	This gives an alternative approach if it is preferred not to make assumptions on the size of $q$. However, for simplicity we choose to assume the existence of the required
%	$\fq$-MDS code. 
%\end{remark}

\begin{theorem}\label{theq4}
Let $\cI=(\cX,\cX^{(S)},R)$. There are achievable linear index codes corresponding to $\varphi_l(\cI)$ and $\varphi_{lf}(\cI)$ implying $\gb(\cI)\le \varphi_{lf}(\cI) \le\varphi_l(\cI)$.
\end{theorem}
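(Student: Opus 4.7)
The plan is to adapt the scheme in the proof of Theorem \ref{thmcliquecover} by inserting an extra MDS layer that exploits the local structure of the clique cover. The key observation is that once a representative $\bv_C\in\cR(C)$ is fixed for each clique $C\in\cC$, a user $j$ can compute the packet $\bv_C X$ from its own cache whenever $\bv_C\in\cX^{(j)}$, so one needs only to protect against the cliques for which this fails. By definition of $\varphi_l(\cI)$ (respectively $\varphi_{lf}(\cI)$), the number (respectively fractional weight) of such cliques per user is at most $k$, which is the exact redundancy an MDS code has to provide.

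For the integer case, fix an optimal cover $\cC^{\mathrm{opt}}=\{C:y_C=1\}$ of size $N$ together with a choice of $(\bv_C)$ attaining $\varphi_l(\cI)=k$, and set $u_i=\bv_{C_i}X\in\FF_{q^t}$. Take a systematic $[N+k,N]_{q^t}$ MDS code (permissible since $\gb_t$ is an infimum over extensions of $\FF_p$) and broadcast its $k$ parity symbols computed from $(u_1,\dots,u_N)$. Every user $j$ can compute $u_i$ whenever $\bv_{C_i}\in\cX^{(j)}$; the local constraint yields at least $N-k$ such indices, so $j$ holds at least $N$ coordinates of an MDS codeword, recovers all $u_i$, and in particular obtains $\bv_{C_{i_j}}X$ for the (unique) clique $C_{i_j}$ containing~$j$. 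Applying Remark \ref{rm:comb} then delivers $R_jX$.

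The fractional case combines this idea with the splitting device already used in Theorem \ref{thmcliquecover}. Let $(y_C^*)$ attain $\varphi_{lf}(\cI)=k^*$, choose $r$ so that $\hat y_C:=ry_C^*$ and $\hat k:=rk^*$ are integers, form the multi-set of $s=\sum_C\hat y_C$ cliques, and view $X$ as an $n\times r$ matrix over a subfield $\FF_{q^\ell}$ with $r\ell\le t$. First apply an inner $[s,r]_{q^\ell}$ MDS code $G_1$ exactly as in the proof of Theorem \ref{thmcliquecover}, defining scalars $\tilde u_i=\bv_{C_i}XG_1^i\in\FF_{q^\ell}$; then apply an outer $[s+\hat k,s]_{q^\ell}$ systematic MDS code to $(\tilde u_1,\dots,\tilde u_s)$ and broadcast only its $\hat k$ parity symbols. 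The local fractional LP ensures user $j$ can compute at least $s-\hat k$ of the $\tilde u_i$ from its cache, so together with the $\hat k$ received symbols it has at least $s$ coordinates of the outer MDS codeword and therefore recovers every $\tilde u_i$. The $r$ indices $i$ for which $j\in C_i$ correspond to distinct columns of $G_1$, so the decoding step of Theorem \ref{thmcliquecover} produces $R_jX$. The total cost is $\hat k$ symbols over $\FF_{q^\ell}$, which is a rate of $\hat k/r=k^*=\varphi_{lf}(\cI)$.

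The chief obstacle I expect is bookkeeping around the field size and the multi-set: one must check that $q^\ell$ can be taken large enough to host both MDS codes simultaneously (the condition $q^\ell>s+\hat k$ is mild and consistent with the infimum defining $\gb_t$), and that allowing the same clique to carry several distinct columns of $G_1$ does not spoil the invertibility step in the inner decoding. The latter holds because, for decoding, what matters is the distinctness of the column indices of $G_1$, not of the underlying cliques, and distinct columns of an MDS code are linearly independent.
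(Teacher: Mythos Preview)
Your proposal is correct and follows essentially the same approach as the paper: both schemes transmit $k$ (respectively $\hat k$) MDS-coded symbols of the clique packets $\bv_C X$ so that each user, after cancelling the packets with $\bv_C\in\cX^{(j)}$, is left with at most $k$ unknowns recoverable from the MDS property, and the fractional case adds the same inner $[s,r]$ MDS layer to split packets. The only difference is cosmetic---you phrase the outer code as the parity part of a systematic $[N+k,N]$ MDS code, whereas the paper writes $Y=G(\bv_C X)_C$ for $G$ the generator matrix of an $[s,k]$ MDS code---and your remarks on field size and on distinct column indices versus distinct cliques are exactly the bookkeeping the paper handles implicitly.
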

\begin{proof}
Let $\cC^{opt}=\{C_1,\dots,C_s\}$ the set of cliques for which
$y_C = 1$ in the optimal solution $(k,\{y_C: C \in \cC\})$ of \eqref{eq:loclclique} for some fixed choice of vectors $\bv_C \in \fq^n$. Let $s=\sum_Cy_C = |\cC^{opt}|$ and
let $G$ the generator matrix of an $\fq$-$[s,k]$ MDS code. %$\FF_{q^a}$-$[s,k]$ MDS code for some $a$ satisfying $a|t$ or $t|a$. 
As before we associate a column of $G$ to each clique in $\cC^{opt}$, and the sender transmits an encoding of the data vector $X \in \FF_{q^t}^{n \times 1}$ as:
$$
Y=\sum_{C \in\cC^{opt}}\bv_{C}XG^{C} =G (\bv_C X)_{C \in\cC^{opt}\bv_C },
$$
which corresponds to $s$ transmissions over $\FF_{q^t}$.
For any $j\in[m]$, the constraints in the integer programme of \eqref{eq:loclclique} require that there are at most $k$ cliques of $\cC^{opt}$ with $\bv_C \notin \cX^j$.
This means that for any choice of $j$, there are at most $k$ vectors in $\{\bv_C: C \in \cC\}$ not contained in $\cX^{(j)}$.
%$k$ vectors $v_C$ are not contained in $\cX^{(j)}$, 
%We may suppose that these are $\bv_{C_1},\dots,\bv_{C_k}$ and that $j\in C_1$.
We have
$$
Y=\sum_{C \in\cC^{opt}: \bv_C \in \cX^{(j)} }\bv_{C}XG^{C}+\sum_{C \in\cC^{opt}:\bv_C \notin \cX^j}\bv_{C}XG^{C}.
$$ 
Therefore, Receiver $j$, given its side information $\cX^{(j)}$, can recover 
$$
\sum_{C \in\cC^{opt}:\bv_C \notin \cX^{(j)}}\bv_{C}XG^{C} = \tilde{G}(\bv_C X)_{C \in\cC^{opt}:\bv_C \notin \cX^{(j)}}
%\sum_{i=1}^k \bv_{C_i}XG^{C_i}=(\bv_{C_1}X,\dots,v_{C_t}X)\tilde{G},
$$
where $$\tilde{G}=[G^C]_{C \in\cC^{opt}:\bv_C \notin \cX^j}$$ is a $k \times k$ submatrix of $G$. 
Since $\tilde{G}$ is invertible by the MDS property, the user $j$ can retrieve the vector 
%$(\bv_{C_1}X,\dots,\bv_{C_t}X)$. Then using $\bv_{C_1}X$ it is possible to decode $R_jX$.
$(\bv_C X)_{C \in\cC^{opt}:\bv_C \notin \cX^{(j)}}$. For a clique $C$ containing $j$, using $\bv_{C}X$ it is possible to retrieve $R_jX$.

Now consider the LP relaxation of \eqref{eq:loclclique} and let $(k,\{y_C: C \in \cC\})$ be an optimal solution for some rationals $0 \leq y_C \leq 1$. This time, let $r$ be the least common denominator of the $y_C$ and $k$ and for each $C$ define $\hat{y}_C = ry_C , \hat{k}=rk\in {\mathbb Z}$. As before, every distinct clique $C$ is assigned an integer weight in $[r]$ and we denote the corresponding multi-set of cliques by $\cC^{opt}$. Every user is contained in $r$ cliques. Let $s=\sum_{C \in \cC}\hat{y}_C$, let $G$ and $H$ be respective generator matrices of $[s,\hat{k}]$ and $[s,r]$ MDS codes over $\FF_{q^t}$. % for some choice of $a$. 
Again we represent $X$ as an $n\times r$ matrix with each packet $X_i$ in the form of a vector of length $r$ over a subfield of $\FF_{q^t}$.
%$$
%X=\left[\begin{array}{ccc}
%X_1^1&\dots &X_1^r\\
%\vdots&        &\vdots\\
%X_n^1&\dots &X_n^r
%\end{array}\right],
%$$
Associating the $i$th columns of $G$ and $H$ to the $i$th clique $C_i$ with respect to a fixed listing of the multi-set $\cC^{opt}$, the following is transmitted.
$$
Y=\sum_{i=1}^s(\bv_{C_i}XH^{i})G^{i}.
$$
For any $j\in [m]$, the $j$th receiver uses its side information as before to obtain
$$
\sum_{i=1}^{\hat{k}}(\bv_{C_i}XH^{i})G^{i},
$$
where without loss of generality, $C_1,\dots,C_{\hat{k}}$ are the cliques for which $\bv_C\notin \cX^{(j)}$. Moreover, $j$ is in $r$ of these cliques, which we may suppose to be $C_1,\dots,C_r$.
So as before from the MDS property of $G$, $j$ can recover the vector $(\bv_{C_1}XH^{1},\dots,\bv_{C_{\hat{k}}}XH^{{\hat{k}}})$, and in particular $(\bv_{C_1}XH^{1},\dots,\bv_{C_{{r}}}XH^{r})$.

Since for each $i\in[r]$, $R_j=\ga_i\bv_{C_i}+\ba_iV^{(j)}$ for some $\ga_i$ and $\ba_i$, the user $j$ can obtain
$$
(R_{j}XH^{1},\dots,R_jXH^{r}),
$$
and therefore obtain $R_jX$ by the MDS property of $H$.
Every user receives its required $r$ packets and the total number of transmissions is $\hat{k}$.
\end{proof}

Given an instance $\cI=(\cX,\cX^{(S)},R)$, let $\tilde{m}$ denote the number of distinct equivalence classes of $[m]$ under the relation $i \sim j$ if $\cX^{(i)}=\cX^{(j)}$. We \INSe{will use the following result of \cite{byrne2015error}}, which generalizes Proposition \ref{prop:bound}. 

\begin{proposition}\label{corzip}
Let $\cI=(\cX,\cX^{(S)},R)$. If $q>\tilde{m}$ then $\gk(\cI) \leq \max \{n-d_i :i\in[m]\}$. For any $q$, $\gk(\cI) \leq \rk(R)$.
\end{proposition}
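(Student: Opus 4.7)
The second bound $\gk(\cI) \leq \rk(R)$ is immediate by choosing $A = 0$, which is admissible since $0 \in \cX^{(i)} \cap \cX^{(S)}$ for every $i$, and then $\rk(A+R) = \rk(R)$.

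For the first bound, set $k := \max_i (n - d_i)$ and $\cY^{(i)} := \cX^{(i)} \cap \cX^{(S)}$; I may assume $k < d_S$, since otherwise $\cL = \cX^{(S)}$ already witnesses the bound. Unpacking the definition of $\gk(\cI)$, it suffices to exhibit a subspace $\cL \subseteq \cX^{(S)}$ of dimension $k$ such that $R_i \in \cL + \cY^{(i)}$ for every $i$, since one may then select $A_i \in \cY^{(i)}$ with $R_i + A_i \in \cL$, giving $\rk(A + R) \leq k$. My plan is to strengthen the requirement to $\cL + \cY^{(i)} = \cX^{(S)}$ for each $i$. This is dimensionally feasible: from the identity $\dim(\cX^{(i)} + \cX^{(S)}) = d_i + d_S - \dim \cY^{(i)}$ together with $\cX^{(i)} + \cX^{(S)} \subseteq \fq^n$, one obtains $d_i - \dim \cY^{(i)} \leq n - d_S$, and hence $k \geq n - d_i \geq d_S - \dim \cY^{(i)}$.

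The crux is the existence of such an $\cL$. Dualising inside $\cX^{(S)}$, the problem becomes: find a subspace $U \subseteq (\cX^{(S)})^*$ of dimension $d_S - k$ whose intersection with each of the at most $\tilde m$ distinct annihilators $(\cY^{(i)})^\perp$ (each of dimension at most $k$) is trivial. I would construct $U$ greedily: given a partial $U_r$ of dimension $r < d_S - k$ with $U_r \cap (\cY^{(i)})^\perp = 0$ for every $i$, one checks that the next basis vector $u_{r+1}$ preserves the trivial intersection exactly when it lies outside $\bigcup_i \bigl( U_r + (\cY^{(i)})^\perp \bigr)$. Each summand contains at most $q^{r+k}$ vectors, so the forbidden union has size at most $\tilde m \cdot q^{r+k} \leq \tilde m \cdot q^{d_S - 1}$, strictly less than $q^{d_S}$ precisely when $q > \tilde m$. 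This counting step is the main obstacle and the only place where the hypothesis on $q$ is used; once it goes through, the greedy process terminates with the required $U$, and $\cL := U^{\perp}$ completes the construction.
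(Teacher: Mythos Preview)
Your argument for the second bound $\gk(\cI)\le\rk(R)$ via $A=0$ is exactly the paper's (trivial) proof.

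For the first bound, the paper does not give a proof at all: it simply cites an external reference \cite{byrne2015error}. The paper does remark, immediately after the proposition, that the content of that referenced proof is the construction of a matrix $L$ with $\langle L\rangle + \cX^{(i)} = \fq^n$ for each $i$, i.e.\ a full multicast solution in the ambient space $\fq^n$. Your construction is a close variant: you work entirely inside $\cX^{(S)}$ and build $\cL\subseteq\cX^{(S)}$ with $\cL + (\cX^{(i)}\cap\cX^{(S)}) = \cX^{(S)}$, which is more directly adapted to the definition of $\gk(\cI)$ (where the correction $A_i$ must lie in $\cX^{(i)}\cap\cX^{(S)}$). The dimension inequality $d_S-\dim\cY^{(i)}\le n-d_i$ you derive from $\dim(\cX^{(i)}+\cX^{(S)})\le n$ is exactly what is needed to make this restricted version go through, and your greedy/counting step in the dual is the standard mechanism for producing a complement to finitely many subspaces over a large enough field; it is correct as written and uses $q>\tilde m$ in the expected place. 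So your proof is complete and self-contained where the paper's is not, and the underlying idea (avoid a union of $\tilde m$ proper subspaces by counting) is the same one the referenced argument would use.
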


\begin{proof}
	That $\kappa(\cI) \leq \rk(R)$ is trivial: $\kappa(\cI)$ is by definition the miniumum rank of an element of the coset $R + \cX \cap (\oplus \cX^{(S)})$. 
	Indeed, an $\fq$-linear code of length $N=\rk(R)$ exists simply by sending a basis of the rowspace of $R$, in which case no user requires its side-information in order to retrieve its request $R_iX$.
	\INSe{That $\gk(\cI) \leq \max \{n-d_i :i\in[m]\}$ is shown in \cite{byrne2015error}}.
\end{proof}

\INSe{The essential content of the proof of Proposition \ref{corzip} is that there exists an $N \times n$ matrix $L$ realizing $\cI$ for 
	$N\leq \max \{n-d_i :i\in[m]\}$, which corresponds to a multicast solution, so every user can retrieve any linear combination of the $X_i$. In this case the matrix $L$ is such that $\langle L \rangle + \cX^{(i)} = \fq^n$ for each $i$.}

\begin{theorem}\label{th:multi}
Let $\cI=(\cX,\cX^{(S)},R)$. 
There are achievable linear index codes of lengths $\varphi^p(\cI)$ and $\varphi^p_{f}(\INSe{\cI})$, which implies that $\gb(\cI)\le \varphi^p_{f}(\INSe{\cI}) \le\varphi^p(\cI)$.
\end{theorem}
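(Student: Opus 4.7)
The plan is to adapt the template of Theorems \ref{thmcliquecover} and \ref{theq4}: first construct an achievable linear code realising the integer programme \eqref{eq:partmulti}, then lift that construction to its LP relaxation via an MDS-code scaling trick.

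For the integer bound, take an optimal partition $\{M : a_M = 1\}$ of $[m]$ arising from \eqref{eq:partmulti}. For each such group $M$, the requests $R_j$ for $j \in M$ all lie in $\langle R_M \rangle$, so one obtains a reduced ICCSI sub-instance on the receivers in $M$ whose coded sender space is $\langle R_M\rangle$ and in which each $j \in M$ has effective side-information $\cX^{(j)} \cap \langle R_M \rangle$. Applying Proposition \ref{corzip} over a sufficiently large extension of $\fq$ yields a linear index code for this sub-instance of length at most $\rk(R_M) - \min_{j\in M}\dim(\langle R_M\rangle \cap \cX^{(j)}) = d_M$. Concatenating these codes across all groups in the partition produces an $\cI$-IC of total length $\sum_M a_M d_M = \varphi^p(\cI)$, which gives $\gb(\cI) \leq \varphi^p(\cI)$.

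For the fractional bound, take an optimal rational solution $\{a_M\}$ of the LP relaxation, let $r$ be the least common denominator of the $a_M$, and set $\hat{a}_M = r a_M \in \mathbb{Z}_{\geq 0}$. View $X$ as an $n \times r$ matrix with entries in a subfield of $\FF_{q^t}$ for $t$ sufficiently large. Let $s = \sum_M \hat{a}_M$ and let $H$ be the $r \times s$ generator matrix of an $\FF_{q^t}$-$[s,r]$ MDS code. For each group $M$ with $\hat{a}_M > 0$, let $L^M$ be the $d_M$-row encoding matrix produced by the integer step; allocate $\hat{a}_M$ of the $s$ columns of $H$ to $M$ and, for each allocated column $h^i$, transmit the vector $L^M X h^i \in \FF_{q^t}^{d_M}$. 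The total number of scalar transmissions is $\sum_M \hat{a}_M d_M = r \varphi^p_f(\cI)$ per $r$-fold data block.

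For decoding, a user $j \in M$ combines its side-information $V^{(j)} X h^i$ with the defining relation $R_j \in \langle L^M \rangle + \cX^{(j)}$ to recover the scalar $R_j X h^i$ from the transmission $L^M X h^i$. Since $\sum_{M \ni j} \hat{a}_M = r$, user $j$ collects $r$ such scalars indexed by $r$ distinct columns of $H$, and the MDS property guarantees that the associated $r \times r$ submatrix is invertible, so $j$ recovers all $r$ sub-packets of $R_j X$. This yields rate $\varphi^p_f(\cI)$ per packet and hence $\gb(\cI) \leq \varphi^p_f(\cI)$. The main technical point is that a naive direct assignment of sub-packet indices to groups can fail to cover every user with $r$ distinct indices for certain fractional solutions; the MDS trick circumvents this scheduling obstacle by ensuring that any $r$ equations a user recovers in its $r$ unknowns are automatically linearly independent.
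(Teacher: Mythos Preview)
Your proposal is correct and follows essentially the same approach as the paper: apply Proposition \ref{corzip} to each multicast group in an optimal integer solution to get the bound $\varphi^p(\cI)$, then for the LP relaxation scale by the common denominator $r$, split $X$ into $r$ sub-packets, and use an $[s,r]$ MDS code (with $s=\sum_M \hat a_M$) to mix the sub-packets across group transmissions so that each user recovers $r$ independent equations in its $r$ unknown sub-packets. Your closing remark that a naive assignment of sub-packet indices to groups can fail is exactly the content of the paper's remark following the theorem, where a counterexample to the scheme of \cite{shanmugam2014graph} is given.
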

\begin{proof}

Let $\cM$ be a collection of multicast groups $M \subset [m]$ yielding an optimal solution to \eqref{eq:partmulti}. \\
Let $M \in \cM$ and consider the ICCSI instance $\cI_M = (\oplus_{j \in M} \cX^{(j)}, \langle R_M\rangle, R_M)$. %satisfy $\langle R_M \rangle < \cX^{(S)}$. 
From Proposition \ref{corzip}, for sufficiently large $q$, there exists 
$L_M \in \fq^{d_M\times n}$ such that each user in $M$ can decode $R_jX$, which uses $d_M$ transmissions. 
Applying this approach to each $M \in \cM$, we find that all users' requests can be retrieved using at most $\varphi^p(\cI)=\sum_{M \in \cM} d_M$ transmissions.
%The overall scheme is given by time sharing the different multicast groups of the partition related to the optimal solution of \eqref{eq:partmulti}.

Let us consider now the LP relaxation of \eqref{eq:partmulti} and let $\{a_M : M \subset [m] \} \subset {\mathbb Q}$ be an optimal solution. Let $r$ denote the least common denominator of the $a_M$ and define $\hat{a}_M = ra_M \in {\mathbb Z}$. Every multicast group $M$ is assigned an integer weight in $[r]$ and the multi-set of multicast groups is denoted by $\cM^{opt}$. Every user is contained in $r$ multicast groups of $\cM^{opt}$.
As before, we represent the data vector $X \in \FF_{q^t}^n$ as an $n \times r$ matrix over a subfield of $\FF_{q^t}$.
%Let 
%$$
%X=\left[\begin{array}{ccc}
%x_1^1&\dots &x_1^r\\
%\vdots&        &\vdots\\
%x_n^1&\dots &x_n^r
%\end{array}\right],
%$$
%be our data matrix, and for simplicity w
%Let $B_M$ be an $n \times \rk R_M$ matrix whose rows are a basis of $\langle R_M \rangle$.
\INSe{Let %$L_M = LB_M$, where 
$L_M$ be an ${d_M\times n}$ matrix %and $L$ is an $d_M \times \rk R_M$ matrix 
satisfying $\langle R_M \rangle \subset \langle L_M \rangle + \cX^{(j)}$
for $j \in M$, i.e. such that each user assigned to $M$ can retrieve its requested data $R_jX$.}
Let $s=\sum_M\hat{a}_M$ and, as before, let $G$ be a generator matrix of an $[s,r]$ MDS code over \INSe{$\FF_{q^\ell}$ with $\ell r \leq t$} and associate a column $G^i$ of $G$ to each multicast group $M_i$ in $\cM$.
The sender transmits the $s$ \INSe{$\FF_{q^\ell}$}-vectors of lengths $d_{M_i}$:
$$
L_{M_1}X G^{1},\dots, L_{M_s}X G^{s}.
$$
Let $j \in M_i$ for some $i\in [r]$. User $j$ considers only $r$ vectors, say these are:
$$
L_{M_1}X G^{1},\dots, L_{M_r}X G^{r},
$$
%and from Remark \ref{rm:comb}
and by assumption can solve for \INSe{some vectors} $\ba_i,\bc_i$ 
$$
R_j=\bc_i L_{M_i}+\ba_i V^{(j)}.
$$
Thus $j$ can recover 
\INSe{
$$R_jXG^i=\bc_i L_{M_i} XG^{i}+\ba_i V^{(j)}XG^{i}$$
as User $j$ knows  $L_{M_i} XG^{i}$ , $V^{(j)}X$ and $G^{i}$.
So, we can compute}
$$
R_jX[G^{1},\dots, G^{r}]%= \bc_i L_{M_i} X[G^{1},\dots, G^{r}]+\ba_i V^{(j)}X[G^{1},\dots, G^{r}] 
$$
and from the MDS property it is possible to obtain $R_jX$. 
%Using $s$ transmissions each receiver decodes the $r$ packets requested, so that $\varphi^p_f$ is achievable by rate normalization by $r$.
\end{proof}

\begin{remark}
\INSr{Theorem \ref{th:multi} generalizes the statement of \cite[Theorem 2]{shanmugam2014graph}. However, the scheme given in the proof of \cite[Theorem 2]{shanmugam2014graph} to establish the upper bound, is incorrect. We assert that the statement of the theorem is still valid since it is special case of  Theorem \ref{th:multi} and the parameters $\varphi^p$ and $\varphi^p_l$ generalize those given in \cite{shanmugam2014graph}. We provide an example below to show that the scheme proposed in the proof of \cite[Theorem 2]{shanmugam2014graph} does not work.}
%Theorem \ref{th:multi} generalizes and corrects Theorem 2 in \cite{shanmugam2014graph}. We will provide an example to show that the scheme proposed in the proof of Theorem 2 in \cite {shanmugam2014graph} does not work. However the claim of the theorem is not incorrect as the parameters $\varphi^p$ and $\varphi^p_l$ generalize those given in \cite{shanmugam2014graph}. 
%Moreover the case of $\varphi^p_l$ corrects the scheme of Shanmugan et al.

%Let consider the index coding instance given by the graph $\cG$, below.
\INSr{Consider the instance of the ICSI problem with $m=n=4$, $f(i)=i$ for all $i$ and side information 
$\cX_1=\{2\}$, $\cX_2=\{3,4\}$, $\cX_3=\{1,4\}$ and $\cX_4=\{1,3\}$. The graph $\cG$ associated with this instance is given in Figure \ref{fig:graph51}.}
\begin{figure}[h]\caption{$\cG$}\label{fig:graph51}

\centering
\begin{tikzpicture}[->,>=stealth',shorten >=0.5pt,auto,node distance=1.5cm,
  thick,main node/.style={circle,fill=black!20,draw,font=\sffamily\bfseries}]

  \node[main node] (1) {1};
  \node[main node] (2) [below left of=1] {4};
  \node[main node] (3) [below right of=1] {3};
    \node[main node] (4) [below right of =2] {2};
  \path[every node/.style={font=\sffamily\small}]
   
(1) edge node [left] {}(4)
    (2) edge [right] node [right] {}(1)
        edge [right] node[right] {}(3)
    (3) edge [right] node [right] {}(1)
        edge [right] node[right] {}(2)
    (4) edge [right] node [right] {}(2)
     edge [right] node [right] {}(3)
;
    
\end{tikzpicture} 

 \end{figure}
It can be checked that $\varphi^p(\cG)=3$ and from the LP relaxation we obtain $\varphi^p_f(\cG)=5/2$. Consider for example the set $\cM^{opt}=\{M_1=\{1,2,3\},M_2=\{1,2,4\},M_3=\{3,4\}\}$ arising from an optimal solution of the LP problem. Then $r=2$ and our data matrix is
$$
X=\left[\begin{array}{ccc}
X_1^1&X_1^2\\
X_2^1&X_2^2\\
X_3^1 &X_3^2\\
\INSr{X_4^1} &\INSr{X_4^2}
\end{array}\right].
$$
In \cite{shanmugam2014graph} the authors give the following scheme for the fractional parameter.
We have that every user is contained in $r$ multicast groups (not necessarily differents). Every packet $X_i$ consists of $r$ sub-packets, then we transmit each sub-packet using the scheme corresponding to one of the $r$ multicast groups.

\INSr{For all $i$, denote by $L_i$ the matrix associated to the scheme used to encode the message for the users contained in the set $M_i$. In particular, we can consider the following matrices
$$L_1=\left[\begin{array}{cccc}
1&1&0&0\\
0&1&1&0
\end{array}\right],\,
L_2=\left[\begin{array}{cccc}
1&1&0&0\\
0&1&0&1
\end{array}\right],$$
$$
L_3=\left[\begin{array}{cccc}
0&0&1&1
\end{array}\right].$$ 
Note that only the receivers contained in $M_i$ are able to decode when $L_i$ is used to encode.}

\INSr{Following the scheme given in \cite{shanmugam2014graph}, we do not need to combine the sub-packets $X^1$ and $X^2$, using an MDS code, as in the proof of Theorem \ref{th:multi}.
Therefore, the message transmitted using this scheme will be of type
$$
Y=(L_1X^{i_1},L_2X^{i_2},L_3X^{i_3})
$$
where $i_j\in\{1,2\}$ for each $j$. Thus it should be possible to find a choice of the $i_j$'s such that all the receivers are able to retrieve the requested packet.}

\INSr{Suppose we choose $i_1=1,i_2=2$ and $i_3=1$. Note that in this case the receivers 1, 2 and 4 can retrieve their requested packets, but receiver $3$ obtains only the first sub-packet. It can be checked that for all possible choice of $i_j$, there is at least one receiver that obtains only one of its two requested sub-packets. On the other hand, using an $\FF_2$-$[3,2,2]$ MDS code to combine the sub-packets, we can satisfy all the requests by sending:
$$
Y=(L_1X^{1},L_2X^{2},L_3(X^{1}+X^{2})).
$$}

%Suppose we use $M_1$ to send the first packet to $1$, that is we encode the first column of $X$ using the scheme related to $M_1$, and $M_2$ for the second packet. Note that User $2$ recovers both packets. Moreover $3$ has, thanks to $M_1$, the first packet and $4$ has the second packet, due to $M_2$. Now, using the approach given in  \cite{shanmugam2014graph}, if we send the first or the second packet with $M_3$, then $3$ or $4$ does not receive one requested packet. It is possible to check that using this scheme we cannot satisfy all the demands of the receivers. But using the approach in Theorem \ref{th:multi} we can send the sum of two packets, satisfying the requests of $3$ and $4$, given the information already obtained from the other transmissions.
\end{remark}

\begin{theorem}
There are achievable linear index codes corresponding to $\varphi^p_l(\cI)$ and $\varphi^p_{lf}(\INSe{\cI})$ implying $\gb(\cI)\le \varphi^p_{lf}(\INSe{\cI}) \le\varphi^p_l(\cI)$.
\end{theorem}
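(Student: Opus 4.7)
The plan is to combine the local clique covering scheme from the proof of Theorem \ref{theq4} with the partition multicast scheme from the proof of Theorem \ref{th:multi}: the users are partitioned into multicast groups coming from an optimal solution to \eqref{eq:loclcliquemulti}, and within each group a local clique covering scheme is applied. The proof should mirror the structure of those two earlier proofs, with the outer partition playing the role of multicast groups and an inner MDS code compressing the clique transmissions restricted to each group.

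For the integer bound $\varphi^p_l(\cI)$, fix vectors $\bv_C \in \cR(C)$ attaining the minimum in the definition of $\varphi^p_l(\cI)$, and let $(\{a_M\}, \{y_C\}, \{t_M\})$ be an optimal integer solution of \eqref{eq:loclcliquemulti} with respect to these vectors. Set $\cM^{opt}:=\{M : a_M=1\}$, $\cC^{opt}:=\{C \in \cC : y_C=1\}$, and $\cC^{opt}_M:=\{C\in\cC^{opt}: C\cap M\neq\emptyset\}$ with $s_M:=|\cC^{opt}_M|$. For each $M$, choose a generator matrix $G_M$ of an $\FF_{q^t}$-$[s_M, t_M]$ MDS code (assuming $q^t$ is sufficiently large). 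Enumerating $\cC^{opt}_M=\{C_{M,1},\dots,C_{M,s_M}\}$, the sender broadcasts, for each $M\in\cM^{opt}$, the $t_M$ symbols
$$
Y_M \;=\; \sum_{i=1}^{s_M} (\bv_{C_{M,i}} X)\, G_M^{i} \;\in\; \FF_{q^t}^{t_M},
$$
for a total of $\sum_{M\in\cM^{opt}} t_M = \varphi^p_l(\cI)$ transmissions. A user $j$ lies in a unique $M\in\cM^{opt}$ and a unique $C^*\in\cC^{opt}_M$ with $j\in C^*$. Using $\cX^{(j)}$, user $j$ subtracts from $Y_M$ the contributions $(\bv_C X)\,G_M^{i(C)}$ for every $C$ with $\bv_C\in\cX^{(j)}$; the residual is a system of $t_M$ equations in at most $t_M$ unknowns $\{\bv_C X : \bv_C\notin\cX^{(j)}\}$, by the constraint in \eqref{eq:loclcliquemulti}. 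By the MDS property of $G_M$, the resulting square coefficient matrix is invertible, so user $j$ recovers $\bv_{C^*}X$, and then $R_jX$ follows from $R_j\in\langle\bv_{C^*}\rangle+\cX^{(j)}$.

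For $\varphi^p_{lf}(\cI)$, take an optimal LP solution and rationalize by the least common denominator $r$ of the values $\{a_M, y_C, t_M\}$; this yields integer multiplicities on multicast groups and on cliques. View the data as an $n\times r$ matrix over a subfield of $\FF_{q^t}$ of appropriate size, and follow the pattern of Theorems \ref{thmcliquecover} and \ref{theq4}: introduce an outer $[s,r]$ MDS code to couple the $r$ sub-packets across the $\hat t_M = r t_M$ inner transmissions per multicast group $M$, and instantiate an inner $[s_M, \hat t_M]$ MDS code as in the integer case. Decoding then proceeds in two stages, namely the MDS inversion from the inner code restricted to the local-cover cliques (exactly as in the integer case), followed by the MDS inversion from the outer code that recovers the $r$ sub-packets of $R_jX$, as in the fractional part of the proof of Theorem \ref{theq4}.

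The main obstacle is verifying that the two stages of MDS decoding compose correctly and that the user's side information is always sufficient to zero out the known terms before inverting the MDS submatrix. The constraint in \eqref{eq:loclcliquemulti} is tailored precisely so that, within each multicast group and for each user $j\in M$, the number of unknown clique symbols does not exceed $t_M$, hence the square submatrix of $G_M$ used in the inversion has the correct dimension for the MDS property to apply. A minor corner case arises when $\bv_{C^*}\in\cX^{(j)}$ for a user $j$'s own clique $C^*$; in that case $\bv_{C^*}X$ is already known directly from the side information and no inner MDS inversion is needed to recover it.
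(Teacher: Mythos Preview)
Your integer argument for $\varphi^p_l(\cI)$ is correct and matches the paper's proof: fix a multicast group $M$, use an MDS code of dimension $t_M$ over the cliques meeting $M$, and let each user $j\in M$ strip the known terms and invert the remaining $t_M\times t_M$ submatrix. No issues there.

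The fractional argument, however, has a genuine gap in the accounting. With your single common denominator $r$ for $\{a_M,y_C,t_M\}$ and $X$ viewed as an $n\times r$ matrix, each multicast group copy requires $\hat t_M=r\,t_M$ inner transmissions, and each multicast group $M$ appears with multiplicity $\hat a_M=r\,a_M$. The total number of sub-packet transmissions is therefore $\sum_M \hat a_M\,\hat t_M = r^2\sum_M a_M t_M = r^2\varphi^p_{lf}(\cI)$, but you have only split into $r$ sub-packets, so the achieved rate is $r\,\varphi^p_{lf}(\cI)$, not $\varphi^p_{lf}(\cI)$. The scheme you describe decodes, but it is too long by a factor of $r$.

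The paper repairs exactly this point by using \emph{two} denominators: $r_1$ for $\{y_C,t_M\}$ and $r_2$ for $\{a_M\}$. The data is split into $r_1 r_2$ sub-packets, and the outer MDS code $H$ is taken of parameters $[s_1 s_2,\,r_1 r_2]$ with columns indexed by \emph{pairs} $(M_k,C_i)$, not by groups alone. From each of the $r_2$ multicast groups containing $j$, the inner MDS inversion produces the $r_1$ values $\bv_{C_i}XH^{(k,i)}$ for the $r_1$ cliques containing $j$, yielding $r_1 r_2$ independent combinations $R_j X H^{(k,i)}$ in all; the outer MDS then recovers all $r_1 r_2$ sub-packets of $R_jX$. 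With this splitting the rate comes out to $\varphi^p_{lf}(\cI)$. If you prefer a single $r$, you can take $r_1=r_2=r$, but then you must split into $r^2$ sub-packets and index the outer code by (group, clique) pairs; your stated $[s,r]$ outer code with $r$ sub-packets cannot achieve the claimed rate.
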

\begin{proof}
Fix a set of coding vectors $\{\bv_C \in \cR(C)\}$ for each $C \in \cC$.
Let $\cC^{opt}=\{C_1,\dots,C_s\}$ be the set of cliques for which
$y_C = 1$ in the optimal solution $(\{t_M : M \subset [m]\},\{y_C : C \in \cC\})$ of \eqref{eq:loclcliquemulti}.
Fix a multicast group $M$ and let $G$ be a generator matrix of an $[s,t_M]$ MDS code. Associate each $i$th column of $G$ to the clique $C_i$ in $\cC^{opt}$. 
For this multicast group, the sender transmits  
$$
Y=\sum_{C_i\cap M\ne \emptyset}\bv_{C_i}XG^{i}.
$$
Given the side-information of User $j\in M$ this sum reduces to one involving only $t_M$ cliques, which we may assume to be $C_1,...,C_{t_M}$, yielding
$$
\sum_{i=1}^{t_M}\bv_{C_i}XG^{i}=(\bv_{C_1}X,\dots,\bv_{C_{t_M}}X)[G^{1},\dots,G^{{t_M}}],
$$
and inverting the matrix $[G^{1},\dots,G^{t_M}]$ we can recover $(\bv_{C_1}X,\dots,\bv_{C_{t_M}}X)$. 
As $j$ is contained in one of these cliques it can decode $R_jX$.

Let us consider, now, the LP relaxation of \eqref{eq:loclcliquemulti}. Let $(\{t_M, a_M : M \subset [m]\},\{y_C: C \in \C\})$ be an optimal solution. 
Let $r_1$ denote the least common denominator of the $y_C$ and the $t_M$ and let $r_2$ denote the least common denominator of the $a_M$. 
Define  $\hat{y}_C = r_1y_C$,  $\hat{t}_M = r_1t_M$ and  $\hat{a}_M = r_2a_M$. Every clique $C$ is assigned an integral weight in $[r_1]$ and every multicast group 
$M$ is assigned an integral weight in $[r_2]$. Denote as before the multi-set of cliques by $\cC^{opt}$ and the multi-set of multicast groups by $\cM^{opt}$. 
Every user is contained in $r_1$ cliques and in $r_2$ multicast groups. Moreover, every multicast group in which a user $j$ lies intersects all the $r_1$ cliques related to $j$.
We represent $X$ as an $n \times r_1 r_2$ matrix over a subfield of $\FF_{q^t}$.
%Let 
%$$
%X=\left[\begin{array}{ccc}
%x_1^1&\dots &x_1^{r_1r_2}\\
%\vdots&        &\vdots\\
%x_n^1&\dots &x_n^{r_1r_2}
%\end{array}\right],
%$$
%our data matrix. 
Let $s_1=\sum_C\hat{y}_C, s_2=\sum_M\hat{a}_M$ and let $H$ be a generator matrix of an $[s_1s_2,r_1r_2]$ MDS code. We index each column of $H$ by the pair $(k,i)$
associated to a multicast group $M_k$ and clique $C_i$.
%to each clique of $\cC$ considering also the multicast group that we are taking into account, i.e. ${\bf h}^{M_k}_{C_i}$.

Now fix a multicast group $M_k$ and consider a matrix $G$ related to an $[s_1,t_{M_k}]$ MDS code.
The following vector is transmitted:
$$
Y=\sum_{C_i\cap M_k\ne \emptyset}(\bv_{C_i}XH^{(k,i)})G^i.
$$

Let $j\in M_k$. As before, we may assume that, using its side-information, $j$ recovers
$$
\sum_{i=1}^{t_{M_k}}(\bv_{C_i}XH^{(k,i)})G^{i}%=((\bv_{C_1}XH^{(k,1)}),\dots,(\bv_{C_{t_{M_k}}}XH^{(k,t_M)})\tilde{G},
$$  
 %where $\tilde{G}=[G^1,\dots,G^{t_{M_k}}]$. 
 From the MDS property of the code generated by $G$, $j$ obtains 
$$((\bv_{C_1}XH^{(k,1)}),\dots,(\bv_{C_{t_{M_k}}}XH^{(k,t_M)}).$$ 
 		Restricting to the cliques that contain $j$ we obtain
$$
(\bv_{C_1}XH^{(k,1)},\dots,\bv_{C_{r_1}}XH^{(k,r_1)}).
$$

As $j$ is in $r_2$ multicast groups, without loss of generality $j$ recovers
%$(\bv_{C_1}XH^{(u,v)},\bv_{C_1}XH^{(u,v)})_{u \in [r_1], v \in [r_2]} $
$$
\begin{aligned}
&(\bv_{C_1}XH^{(1,1)},...,\bv_{C_{r_1}}XH^{(1,r_1)},...,\\
&\bv_{C_1}XH^{(r_2,1)},...,(\bv_{C_{r_1}}XH^{(r_2},r_1)).\end{aligned}
$$
Now using the side information $j$ can compute $R_jX\tilde{H}$
%\begin{align*}
%&((R_jX{\bf h}^{M_1}_{C_1}),...,(R_jX{\bf h}^{M_1}_{C_{r_1}}),...,(R_jX{\bf h}^{M_{r_2}}_{C_1}),...,(R_jX{\bf h}^{M_{r_2}}_{C_{r_1}}))\\
%&=R_jX\tilde{H},
%\end{align*}
where $$\tilde{H}=[H^{(1,1)},...,H^{(1,r_1)},...,H^{(r_2,1)},...,H^{(r_2,r_1)}].$$ 
From the MDS property of $H$, the receiver $j$ obtains $R_jX$ and hence, $\varphi^p_{lf}$ is achievable.
\end{proof}

\begin{figure}[h!]
% \usepackage[usenames,dvipsnames]{pstricks}
% \usepackage{epsfig}
% \usepackage{pst-grad} % For gradients
% \usepackage{pst-plot} % For axes
% \usepackage[space]{grffile} % For spaces in paths
% \usepackage{etoolbox} % For spaces in paths
% \makeatletter % For spaces in paths
% \patchcmd\Gread@eps{\@inputcheck#1 }{\@inputcheck"#1"\relax}{}{}
% \makeatother
\hskip -2.8cm
 \psscalebox{0.65 0.65} % Change this value to rescale the drawing.
{
\begin{pspicture}(0,-5.432)(18.048,5.432)
\rput[bl](15.6,-0.968){$\varphi(\mathcal{I})$}
\rput[bl](13.2,3.032){$\psi_f(\mathcal{I})$}
\rput[bl](8.0,1.832){$\psi^p_f(\mathcal{I})$}
\rput[bl](11.6,4.232){$\psi_l(\mathcal{I})$}
\rput[bl](8.8,4.232){$\psi^p_l(\mathcal{I})$}
\rput[bl](5.2,3.032){$\psi^p_{lf}(\mathcal{I})$}
\rput[bl](13.2,1.832){$\psi^p(\mathcal{I})$}
\rput[bl](9.2,3.032){$\psi_{lf}(\mathcal{I})$}
\rput[bl](15.6,3.032){$\psi(\mathcal{I})$}
\rput[bl](13.2,-0.968){$\varphi_f(\mathcal{I})$}
\rput[bl](8.0,-4.968){$\varphi^p_f(\mathcal{I})$}
\rput[bl](11.6,0.232){$\varphi_l(\mathcal{I})$}
\rput[bl](8.8,0.232){$\varphi^p_l(\mathcal{I})$}
\rput[bl](5.2,-0.968){$\varphi^p_{lf}(\mathcal{I})$}
\rput[bl](13.2,-4.968){$\varphi^p(\mathcal{I})$}
\rput[bl](9.2,-0.968){$\varphi_{lf}(\mathcal{I})$}
\pscustom[linecolor=black, linewidth=0.04]
{
\newpath
\moveto(5.2,1.432)
}
\pscustom[linecolor=black, linewidth=0.04]
{
\newpath
\moveto(18.0,1.032)
\lineto(4.4,1.032)
}

\psline[linecolor=black, linewidth=0.04, arrowsize=0.05291667cm 2.0,arrowlength=1.4,arrowinset=0.0]{<-}(6.304,-0.856)(8.500667,-0.856)(8.94,-0.856)(8.94,-0.856)
\psline[linecolor=black, linewidth=0.04, arrowsize=0.05291667cm 2.0,arrowlength=1.4,arrowinset=0.0]{<-}(10.24,-0.888)(13.06,-0.888)
\psline[linecolor=black, linewidth=0.04, arrowsize=0.05291667cm 2.0,arrowlength=1.4,arrowinset=0.0]{<-}(14.136,-0.872)(15.456,-0.872)
\psline[linecolor=black, linewidth=0.04, arrowsize=0.05291667cm 2.0,arrowlength=1.4,arrowinset=0.0]{<-}(9.728,0.35999998)(11.568,0.35999998)(11.568,0.35999998)
\psline[linecolor=black, linewidth=0.04, arrowsize=0.05291667cm 2.0,arrowlength=1.4,arrowinset=0.0]{<-}(12.4,0.232)(15.552,-0.744)
\psline[linecolor=black, linewidth=0.04, arrowsize=0.05291667cm 2.0,arrowlength=1.4,arrowinset=0.0]{<-}(9.0,-4.828)(13.02,-4.868)
\psline[linecolor=black, linewidth=0.04, arrowsize=0.05291667cm 2.0,arrowlength=1.4,arrowinset=0.0]{<-}(6.224,-0.568)(8.8,0.232)
\psline[linecolor=black, linewidth=0.04, arrowsize=0.05291667cm 2.0,arrowlength=1.4,arrowinset=0.0]{<-}(10.22,-0.848)(11.6,0.232)
\rput[bl](10.0,-3.368){$_w\varphi_{lf}(\mathcal{I})$}
\rput[bl](13.2,-3.368){$_w\varphi_{f}(\mathcal{I})$}
\rput[bl](16.0,-3.368){$_w\varphi(\mathcal{I})$}
\rput[bl](10.8,-2.168){$_w\varphi^p_l(\mathcal{I})$}
\rput[bl](13.2,-2.168){$_w\varphi_l(\mathcal{I})$}
\rput[bl](6.8,-3.368){$_w\varphi^p_{lf}(\mathcal{I})$}
\psline[linecolor=black, linewidth=0.04, arrowsize=0.05291667cm 2.0,arrowlength=1.4,arrowinset=0.0]{<-}(14.04,-4.788)(16.4,-3.368)(16.4,-3.368)
\psline[linecolor=black, linewidth=0.04, arrowsize=0.05291667cm 2.0,arrowlength=1.4,arrowinset=0.0]{<-}(16.0,-0.968)(16.4,-2.968)
\psline[linecolor=black, linewidth=0.04, arrowsize=0.05291667cm 2.0,arrowlength=1.4,arrowinset=0.0]{<-}(14.32,-3.272)(15.92,-3.272)
\psline[linecolor=black, linewidth=0.04, arrowsize=0.05291667cm 2.0,arrowlength=1.4,arrowinset=0.0]{<-}(11.264,-3.256)(13.184,-3.256)
\psline[linecolor=black, linewidth=0.04, arrowsize=0.05291667cm 2.0,arrowlength=1.4,arrowinset=0.0]{<-}(7.984,-3.256)(9.904,-3.256)
\psline[linecolor=black, linewidth=0.04, arrowsize=0.05291667cm 2.0,arrowlength=1.4,arrowinset=0.0]{<-}(8.0,-2.968)(10.8,-2.168)
\psline[linecolor=black, linewidth=0.04, arrowsize=0.05291667cm 2.0,arrowlength=1.4,arrowinset=0.0]{<-}(11.872,-2.04)(13.072,-2.04)
\psline[linecolor=black, linewidth=0.04, arrowsize=0.05291667cm 2.0,arrowlength=1.4,arrowinset=0.0]{<-}(14.4,-2.168)(16.048,-3.112)
\psline[linecolor=black, linewidth=0.04, arrowsize=0.05291667cm 2.0,arrowlength=1.4,arrowinset=0.0]{<-}(9.72,0.132)(11.32,-1.708)
\psline[linecolor=black, linewidth=0.04, arrowsize=0.05291667cm 2.0,arrowlength=1.4,arrowinset=0.0]{<-}(12.0,0.232)(13.6,-1.768)
\psline[linecolor=black, linewidth=0.04, arrowsize=0.05291667cm 2.0,arrowlength=1.4,arrowinset=0.0]{<-}(6.0,-0.968)(7.56,-2.868)
\psline[linecolor=black, linewidth=0.04, arrowsize=0.05291667cm 2.0,arrowlength=1.4,arrowinset=0.0]{<-}(6.304,3.192)(9.104,3.192)
\psline[linecolor=black, linewidth=0.04, arrowsize=0.05291667cm 2.0,arrowlength=1.4,arrowinset=0.0]{<-}(6.4,3.432)(8.8,4.232)
\psline[linecolor=black, linewidth=0.04, arrowsize=0.05291667cm 2.0,arrowlength=1.4,arrowinset=0.0]{<-}(9.6,4.352)(11.6,4.352)
\psline[linecolor=black, linewidth=0.04, arrowsize=0.05291667cm 2.0,arrowlength=1.4,arrowinset=0.0]{<-}(10.192,3.16)(12.992,3.16)
\psline[linecolor=black, linewidth=0.04, arrowsize=0.05291667cm 2.0,arrowlength=1.4,arrowinset=0.0]{<-}(14.032,3.144)(15.504,3.144)
\psline[linecolor=black, linewidth=0.04, arrowsize=0.05291667cm 2.0,arrowlength=1.4,arrowinset=0.0]{<-}(14.1,1.992)(15.6,3.032)
\psline[linecolor=black, linewidth=0.04, arrowsize=0.05291667cm 2.0,arrowlength=1.4,arrowinset=0.0]{<-}(9.056,1.96)(13.152,1.96)
\psline[linecolor=black, linewidth=0.04, arrowsize=0.05291667cm 2.0,arrowlength=1.4,arrowinset=0.0]{<-}(6.0,3.032)(7.888,2.088)
\psline[linecolor=black, linewidth=0.04, arrowsize=0.05291667cm 2.0,arrowlength=1.4,arrowinset=0.0]{<-}(10.0,3.432)(11.6,4.232)
\psline[linecolor=black, linewidth=0.04, arrowsize=0.05291667cm 2.0,arrowlength=1.4,arrowinset=0.0]{<-}(12.4,4.232)(15.504,3.352)
\psline[linecolor=black, linewidth=0.04, arrowsize=0.05291667cm 2.0,arrowlength=1.4,arrowinset=0.0]{<-}(8.848,2.076)(13.2,3.032)
\psline[linecolor=black, linewidth=0.04, arrowsize=0.05291667cm 2.0,arrowlength=1.4,arrowinset=0.0]{<-}(8.84,-4.768)(13.2,-3.368)
\psline[linecolor=black, linewidth=0.04, arrowsize=0.05291667cm 2.0,arrowlength=1.4,arrowinset=0.0]{<-}(9.6,-0.968)(10.4,-2.968)
\psline[linecolor=black, linewidth=0.04, arrowsize=0.05291667cm 2.0,arrowlength=1.4,arrowinset=0.0]{<-}(9.64,4.092)(13.136,2.056)
\rput[bl](4.656,4.696){ICSI}
\rput[bl](4.752,0.312){ICCSI}
\psframe[linecolor=black, linewidth=0.04, dimen=outer](18.048,5.368)(4.4,-5.432)
\psline[linecolor=black, linewidth=0.04, arrowsize=0.05291667cm 2.0,arrowlength=1.4,arrowinset=0.0]{<-}(15.536,-5.176)(16.096,-5.144)
\rput[bl](15.264,-5.176){$u$}
\rput[bl](16.208,-5.192){$v$}
\rput[bl](16.816,-5.224){$u\le v$}
\rput[bl](16.448,-5.208){$\equiv$}
\end{pspicture}
}
\caption{The bottom part of the figure describes
ICCSI bounds introduced in this work while
the top describes the ICSI case. Smaller quantities are placed to the left and the weakest
bound is placed to the rightmost of the figure. Arrows
indicate the relationship they satisfy.}
\end{figure}
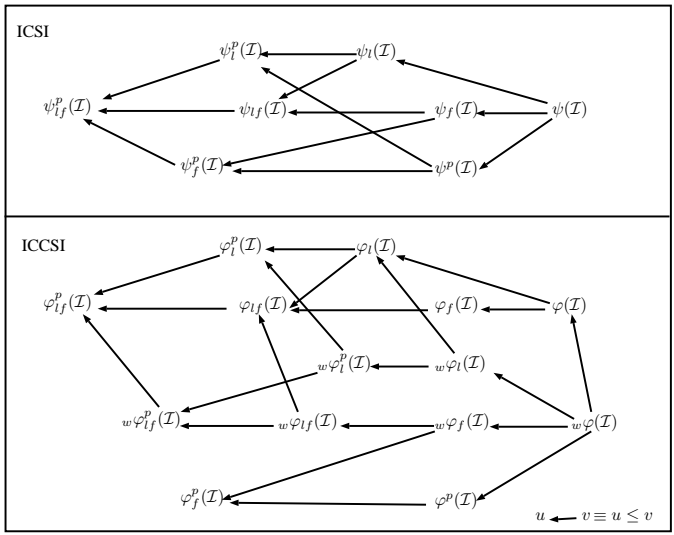
	
\begin{remark}\label{rm:comp1}
The parameters $\varphi^p$ and $\varphi^p_l$ are not comparable. From the parameters given in \cite{shanmugam2014graph} we have that there exist instances of the ICSI problem for which $\varphi^p(\cI)\ge\varphi^p_l(\cI)$. Now consider the ICCSI instance with $m=n=3$, $q=2$, $\cX^{(S)}=\FF_2^3$.
$$
V^{(1)}=[0\; 1\; 1]\quad V^{(2)}=[1\; 1\; 1]\quad V^{(3)}=[1\; 1\; 1],
$$
and $R_1=100$, $R_2=010$, $R_3=001$.

In order to satisfy the requests of a receiver using only one vector then the coding vectors should be 
\begin{itemize}
\item $\bv_1=100$ or $\bv_1'=111$ for User $1$;
\item $\bv_2=010$ or $\bv_2'=101$ for User $2$;
\item $\bv_3=001$ or $\bv_3'=110$ for User $3$.
\end{itemize}
Then the set of all cliques is $\cC=\{\{1\},\{2\},\{3\}\}$. Moreover we can see that $\bv_i,\bv_i'\notin\cX^{(1)}$ for all $i$.
Now if we consider the multicast group $M=\{1,2,3\}$ we can note that $d_M=2$ and that $t_M=3$ because none of the six vectors above is in the space $\cX^{(1)}$. Then we have $2=\varphi^p(\cI)\le\varphi^p_l(\cI)=3$.
\end{remark}

\begin{remark}\label{rm:comp}
The parameters $\varphi^p$ and $\varphi$ are not comparable. From the parameters given in \cite{shanmugam2014graph}, there exist instances of the ICSI problem for which $\varphi(\cI)\ge\varphi^p(\cI)$. Now consider the ICCSI instance with $m=n=2$, $q=2$, $\cX^{(S)}=\FF_2^2$.
$$
V^{(1)}=[1\; 1]\quad V^{(2)}=[0\; 0],
$$
and $R_1=10$, $R_2=01$.
It is easy to check that using the multicast group partition we need two transmissions, but it can be seen that $\{1,2\}$ is a clique and that $\bv_{\{1,2\}}=01 \in \cR(\{1,2\})$, yielding $1=\varphi(\cI)\le\varphi^p(\cI)=2$.
\end{remark}	
	
\begin{remark}
We have $ \varphi^p_{l}(\cI) \le\varphi_{l}(\cI)\le \varphi(\cI)$. It is easy to check that $\varphi_{l}(\cI)\le \varphi(\cI)$ as $t$ is at most equal to the number of cliques that form a partition of $[m]$.
Then we have also $ \varphi^p_{l}(\cI) \le\varphi_{l}(\cI)$. In fact, among the possible optimal solution to obtain we have those where $M=[m]$ and in that case we obtain exactly $ \varphi_{l}(\cI)$.
\end{remark}

\begin{remark}
It is possible to introduce a weak definition of clique. $C\subseteq [m]$ is called weak clique if for all $i,j\in C$ we have $R_j\in\cX^{(i)}$ or $\langle R_j\rangle=\langle R_i\rangle$. Using this definition, it is possible to introduce the notion of a \emph{weak clique cover}, a \emph{local weak clique cover} and a \emph{partitioned local weak clique cover} with respective corresponding parameters 
$_w\varphi(\cI)$, $ _w\varphi_l(\cI)$ and $ _w\varphi^p_l(\cI)$ along with their fractional counterparts.
\end{remark}

\begin{remark}
If $C$ is a weak clique then it is also a generalized clique. We can encode the message using the sum of distinct requests as vector $\bv_C$.
Moreover from the definition of weak clique, if we consider a clique as a multicast group $M$ then it results $d_M=1$. Therefore $\varphi^p(\cI)\le  {_w\varphi(\cI)}$ and the same holds for the fractional parameters. However also in this case the partitioned local weak clique cover and the partitioned multicast cover are not comparable (see example in Remark \ref{rm:comp1}). 
\end{remark}

\bibliographystyle{siamplain}
%\bibliography{RefsCGC}

\end{document}